\newtheorem{theorem}{Theorem}[section]
\newtheorem{definition}[theorem]{Definition}
\newtheorem{lemma}[theorem]{Lemma}
\newtheorem{proposition}[theorem]{Proposition}
\newtheorem{problem}[theorem]{Problem}
\def\cp{\,\square\,}
\DeclareMathOperator{\vv}{vv}
\DeclareMathOperator{\vp}{vp}
\DeclareMathOperator{\str}{str}
\DeclareMathOperator{\MD}{MD}
\DeclareMathOperator{\ecc}{ecc}
\newcommand{\qed}{\hfill $\square$ \bigskip}
\title{The vertex visibility number of graphs}
\author{Dhanya Roy $^{a,}$\footnote{\tt dhanyaroyku@gmail.com, dhanyaroyku@cusat.ac.in} 
   \and Gabriele Di Stefano $^{b,}$\footnote{\tt gabriele.distefano@univaq.it}
	\and Sandi Klav\v{z}ar $^{c,d,e,}$\footnote{\tt sandi.klavzar@fmf.uni-lj.si} 
	\and Aparna Lakshmanan S $^{a,}$\footnote{\tt aparnaren@gmail.com, aparnals@cusat.ac.in} \\\\
	$^{a}$ \small Department of Mathematics, Cochin University of Science and Technology, \\ 
    \small Cochin - 22, Kerala, India \\
    $^b$ \small Department of Information Engineering, Computer Science and Mathematics, \\
    \small University of L'Aquila, Italy \\
	$^{c}$ \small Faculty of Mathematics and Physics, University of Ljubljana, Slovenia\\
	$^{d}$ \small Institute of Mathematics, Physics and Mechanics, Ljubljana, Slovenia \\
	$^{e}$ \small Faculty of Natural Sciences and Mathematics, University of Maribor, Slovenia
}
\date{\today}
\begin{document}
\maketitle
    
\begin{abstract}
If $x\in V(G)$, then $S\subseteq V(G)\setminus\{x\}$ is an $x$-visibility set if for any $y\in S$ there exists a shortest $x,y$-path avoiding $S$. The $x$-visibility number $v_x(G)$ is the maximum cardinality of an $x$-visibility set, and the maximum value of $v_x(G)$ among all vertices $x$ of $G$ is the vertex visibility number ${\rm vv}(G)$ of $G$. It is proved that ${\rm vv}(G)$ is equal to the largest possible number of leaves of a shortest-path tree of $G$. Deciding  whether $v_x(G) \ge k$ holds for given $G$, a vertex $x\in V(G)$, and a positive integer $k$ is NP-complete even for graphs of diameter $2$. Several general sharp lower and upper bounds on the vertex visibility number are proved. The vertex visibility number of Cartesian products is also bounded from below and above, and the exact value of the vertex visibility number is determined for square grids, square prisms, and square toruses. 
\end{abstract}
	
\noindent
{\bf Keywords}: vertex visibility; mutual-visibility; shortest-path tree; computational complexity; Cartesian product 
	
\medskip\noindent
{\bf AMS Subj.\ Class.\ (2020)}: 05C12, 68Q17, 05C76

\section{Introduction}

General position and mutual-visibility are very active areas of metric and algorithmic graph theory. The concepts are complementary to each other, and  a progress in one of the areas typically has an impact on the other area. The general position problem has been recently surveyed in~\cite{survey}, see also~\cite{Araujo-2025, KorzeVesel-2025, Roy-2025, Thomas-2024a}. For the mutual-visibility problem we refer to the seminal paper~\cite{distefano-2022} and the following selection of studies~\cite{bresar-2024, bujtas-2025, cicerone-2023a, cicerone-2024b, Klavzar-2025a, Klavzar-2025b, Korze-2024, Kuziak-2023, Roy-2025, Tian-2024a}. 

In~\cite{thank-2024}, the general position version was investigated from a vertex's point of view. Among the many motivations for this research, let us highlight the following. In~\cite[Chapter III]{hardy-2008} it is shown how to place a set of points with integer positive coordinates $(i, j)$, $j \le i$, such that each point is in mutual-visibility with $(0,0)$ and such that the number of points with the same abscissa is maximized. The problem turns out to be very interesting, as the solution has links with the Farey series and the Euler’s totient function $\phi$: the number of points with abscissa $n$ is exactly $\phi(n)$. 

Due to the above motivation (and more), the paper~\cite{thank-2024} gives the following definitions. If $x$ is a vertex of a graph $G$, then $S\subseteq V(G)$ is said to be an {\em $x$-position set} if for any $y \in S$, no vertex of $S \setminus  \{y\}$ lies on any shortest $x,y$-path. The {\em vertex position number} $\vp(G)$ of $G$ is the maximum cardinality of an $x$-position set among all
vertices $x$ of $G$. The paper~\cite{thank-2024} yields numerous results dealing with the largest and smallest orders of maximum $x$-position sets, in particular giving bounds in terms of the girth, vertex degrees, diameter and radius. 

In this paper, we complement the investigation from~\cite{thank-2024} by considering the mutual-visibility problem from a vertex's point of view. If $x$ is a vertex of a graph $G$, then $S\subseteq V(G)\setminus\{x\}$ is an \emph{$x$-visibility set} if for any $y\in S$ there exists a shortest $x,y$-path $P$, such that $V(P)\cap S = \{y\}$. The \emph{$x$-visibility number} $v_x(G)$ is the maximum cardinality of an \emph{$x$-visibility set}, we also say that $v_x(G)$ is the {\em visibility number of $x$}. An \emph{$x$-visibility set} of cardinality \emph{$v_x(G)$} is called a \emph{$v_x$-set}. The maximum value of $v_x(G)$ among all vertices $x$ of $G$ is called the \emph{vertex visibility number} $\vv(G)$ of $G$, and a corresponding $v_x$-set is a {\em $\vv$-set}. For instance, if $n\ge 3$, then $v_x(C_n) = 2$ for any $x \in V(C_n)$, hence $\vv(C_n) = 2$. Similarly, if $n\ge 3$, then $v_x(P_n) = \deg_{P_n}(x)$, hence $\vv(P_n) = 2$. It is worth mentioning  the following fact. 

\begin{lemma}\label{non-leaf}
If $x$ is a leaf of a connected graph $G$ with $n(G) \geq 3$, then $v_x(G) < \vv(G)$.
\end{lemma}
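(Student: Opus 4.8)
The plan is to show that if $x$ is a leaf of $G$, then there exists another vertex $w$ whose visibility number strictly exceeds $v_x(G)$, which immediately gives $v_x(G) < \vv(G)$ since $\vv(G) = \max_w v_w(G) \ge v_w(G)$. The natural candidate for $w$ is the unique neighbor of the leaf $x$; call it $u$. The intuition is that the leaf $x$ ``sees'' everything only through $u$, so shortest paths from $x$ are very constrained, whereas $u$ sits one step closer to the rest of the graph and should be able to see at least as much, plus it can additionally see $x$ itself.

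First I would set up the key metric fact: since $x$ is a leaf with unique neighbor $u$, every shortest $x,y$-path for $y \neq x$ must pass through $u$, and $d(x,y) = d(u,y) + 1$ for all $y \in V(G)\setminus\{x\}$. This means that a shortest $x,y$-path is exactly $x$ followed by a shortest $u,y$-path. The central step is then to take a $v_x$-set $S$ (so $x \notin S$ and $u$ may or may not be in $S$) and construct from it an $u$-visibility set $S'$ of cardinality at least $|S|+1$. The plan for the construction is to start from $S$, remove $u$ from it if $u \in S$, and then add the vertex $x$; I would argue that the resulting set $S'$ is a valid $u$-visibility set and that $|S'| \ge |S|+1 = v_x(G)+1$.

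The key steps to verify are: (i) for each $y \in S \setminus \{u\}$, a shortest $x,y$-path $P$ avoiding $S$ (except at $y$) yields, after deleting the initial vertex $x$, a shortest $u,y$-path $P - x$; I must check that $P-x$ avoids $S'$ except at $y$, which requires that $x \notin S'$ is handled correctly (note $x$ IS in $S'$ but $x$ is not an internal vertex of $P-x$, so this is fine) and that the internal vertices of $P-x$, which are internal vertices of $P$, avoid $S'$; and (ii) the added vertex $x$ is visible from $u$, which is trivial since the single edge $ux$ is a shortest $u,x$-path containing no other vertex of $S'$. The cardinality bound follows because we delete at most one vertex ($u$) and add exactly one new vertex ($x \neq u$), and $x$ was not in $S$.

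The main obstacle I anticipate is the bookkeeping in step (i): when I pass from the $x$-visibility path $P$ to the $u$-visibility path $P-x$, I must ensure that removing $u$ from $S$ (when forming $S'$) does not accidentally destroy the avoidance property for some other $y$, and conversely that $P-x$ genuinely avoids all of $S'$. Because $S' = (S \setminus \{u\}) \cup \{x\}$, any internal vertex of $P-x$ that lies in $S'$ would have to be either a vertex of $S \setminus \{u\}$ (impossible, since $P$ avoided all of $S$ internally) or equal to $x$ (impossible, since $x$ is the deleted endpoint, not an internal vertex of $P-x$). So the verification should go through cleanly, and the hypothesis $n(G) \ge 3$ is needed only to guarantee that $u$ has a neighbor other than $x$ so that the construction is non-degenerate and $v_u(G) \ge 2 > v_x(G)$ can actually be strict; I would double-check that the strictness is not lost in any edge case, such as when $u$ is a cut vertex or when $S$ already contained $u$.
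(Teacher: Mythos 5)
Your overall strategy is the same as the paper's: pass from the leaf $x$ to its support vertex $u$ and enlarge a $v_x$-set by adding $x$ itself. Your verification steps (i) and (ii) are correct. The genuine gap is the cardinality claim $|S'| \ge |S|+1$. Since $S' = (S\setminus\{u\})\cup\{x\}$, deleting one vertex and adding one vertex gives only $|S'| \ge |S|$; the stronger inequality holds precisely when $u \notin S$, and you never prove that $u \notin S$. The case $u \in S$ is not vacuous: every shortest path from $x$ to any other vertex starts with the edge $xu$, so if $u \in S$ then no other vertex can lie in $S$ (all of its shortest paths from $x$ pass through $u \in S$), forcing $S = \{u\}$ and $v_x(G)=1$. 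This happens exactly when $G$ is a path with endpoint $x$, in which case $\{u\}$ is a legitimate maximum $v_x$-set; your construction then yields $S' = \{x\}$, giving only $v_u(G) \ge 1 = v_x(G)$, and strictness is lost. You do flag this edge case in your final sentence ("when $S$ already contained $u$"), but you defer it rather than resolve it, and as written the central inequality is false there, so this is a real hole rather than bookkeeping.

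The repair is short, and it is exactly why the paper's proof splits into two cases. If $u \in S$, then as noted $S=\{u\}$ and $v_x(G)=1$; since $G$ is connected with $n(G)\ge 3$, the vertex $u$ has a neighbor $z \ne x$, and $\{x,z\} \subseteq N_G(u)$ is a $u$-visibility set, whence $v_u(G) \ge 2 > v_x(G)$. Equivalently, one can first dispose of $G = P_n$ (where $v_x(G)=1$ and $\vv(G)=2$), and then observe that if $G$ is not a path, then $v_x(G)\ge 2$ (two vertices equidistant from $x$ always form an $x$-visibility set), so a maximum $v_x$-set cannot contain $u$ and your count $|S'| = |S|+1$ goes through. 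With this case added, your argument is complete and coincides in substance with the paper's proof.
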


\begin{proof}
If $G$ is a path graph on at least three vertices, then $v_x(G) =1$ and $ \vv(G) = 2$. Assume next that $G$ is not a path, let $y$ be the support vertex of $x$, and $S$ be a $v_x$-set of $G$. Then $S' = S\cup \{x\}$ is a $y$-visibility set, hence $\vv(G) \ge |S'| > |S| = v_x(G)$. 
\qed
\end{proof}

The paper is organized as follows. In the following subsection, further definitions needed are listed. In Section~\ref{sec:complexity} we first prove the fundamental reason for the computational difficulty of the vertex visibility number:  $\vv(G)$ is equal to the largest possible number of leaves of a shortest-path tree of $G$. 
This builds a bridge between the new introduced concepts and the Dijkstra's algorithm, the result of which is precisely a tree of shortest-paths.
In the second main result of the section we prove that the {\sc $x$-visibility} problem is NP-complete even for graphs of diameter $2$, where the {\sc $x$-visibility} problem asks whether $v_x(G) \ge k$ holds for given $G$, a vertex $x\in V(G)$, and a positive integer $k$. In Section~\ref{sec:general} we deduce some general sharp lower and upper bounds on the vertex visibility number in terms of the order, the maximum degree, and the eccentricity. The vertex visibility number of Cartesian products is investigated in Section~\ref{sec:Cart}. After establishing a general sharp lower bound and an upper bound, the focus is on square grids, square prisms, and square toruses. For each of these Cartesian products the exact value of the vertex visibility number is obtained. We conclude the paper with some problems for future investigation. 

\subsection{Concepts and notation}

Here we provide further necessary definitions. 

Let $G=(V(G), E(G))$ be a graph. The order of $G$ is denoted by $n(G)$, its maximum degree by $\Delta(G)$, and the (open) neighborhood of $x\in V(G)$ by $N_G(x)$. A vertex $x$ of $G$ is {\em simplicial} if the subgraph induced by $N_G(x)$ is complete. A vertex of $G$ is {\em universal} if it is adjacent to all other vertices of $G$. A {\em double star} is a tree in which exactly two vertices are not leaves. 

The distance function $d_G(\cdot, \cdot)$ is the standard shortest-path distance. The {\em eccentricity} $\ecc_G(x)$ of a vertex $x$ is the maximum distance between $x$ and the other vertices of $G$. A vertex $z$ is an {\em eccentric vertex} of $x$ if $d_G(x,z) = \ecc_G(x)$. The open interval $I(x,y)$ between vertices $x$ and $y$ is the set of all vertices that lie on shortest $x,y$-paths other than $x$ and $y$. When every two vertices $x$ and $y$ of $G$ are connected by a unique shortest $x,y$-path, $G$ is called {\em geodetic}. A vertex $y$ is {\em maximally distant} from $x$ if $d_G(x,y) \geq d_G(x,z)$, for every $z \in N_G(y)$. (This notion was introduced in~\cite{Oellermann-2007} as a tool to study the strong metric dimension.) The collection of all maximally distant vertices from $x$ is denoted by $\MD_G(x)$. 

If $S\subseteq V(G)$, then we say that $x,y\in V(G)$ are {\em $S$-visible}, if there exists a shortest $x,y$-path $P$ such that $V(P)\cap S \subseteq \{x,y\}$. The set $S$ is a \emph{mutual-visibility set} if any two vertices from $S$ are $S$-visible. The cardinality of a largest mutual-visibility set of $G$ is the {\em mutual-visibility number} $\mu(G)$ of $G$. A mutual-visibility set of cardinality $\mu(G)$ is a {\em $\mu$-set}. We also say that a vertex $y$ is {\em $S$-visible from} $x$ if there exists a shortest $y,x$-path $P$ such that $V(P)\cap S \subseteq \{x, y\}$. 

The \emph{Cartesian product} $G\cp H$ of graphs $G$ and $H$ has the vertex set $V(G)\times V(H)$, vertices $(g,h)$ and $(g',h')$ of $G\cp H$ are adjacent if either $gg'\in E(G)$ and $h=h'$, or $g=g'$ and $hh'\in E(H)$. A {\em $G$-layer} is a subgraph of $G\cp H$ induced by $V(G)\times \{h\}$ for some $h\in V(H)$, denoted by $G^h$. Analogously, for $g\in V(G)$ we have the {\em $H$-layer} $^{g}H$.

Finally, for a positive integer $k$, the set $\{1,\dots ,k\}$ is denoted by $[k]$. 

\section{Shortest-path trees and computational complexity}
\label{sec:complexity}

To study the computational complexity of finding the $x$-visibility number $v_x(G)$ for a fixed vertex $x$ of a graph $G$, we introduce the following decision problem.

\begin{definition}
{\sc $x$-visibility} problem: \\
{\sc Instance}: A graph $G$, a vertex $x\in V(G)$, a positive integer $k\leq n(G)$. \\
{\sc Question}: $v_x(G)\geq k$?
\end{definition}

In this section we prove that the {\sc $x$-visibility} problem is NP-complete even for graphs of diameter $2$, so that finding $\vv(G)$ is an NP-hard problem. This result is in sharp contrast with the computational complexity of the vertex position problem, which is solvable in $O(n^4\log (n))$ time, as shown in~\cite{thank-2024}. 

The intrinsic difficulty of the {\sc $x$-visibility} problem lies in the fact that the value $\vv(G)$ is equal to the largest possible number of leaves of a shortest-path tree of $G$. These trees are defined as follows. Let $G$ be a graph and $x\in V(G)$. Then a tree rooted in $x$ constructed by the BFS search is called a {\em shortest-path tree} of $G$. That is, it is a rooted spanning tree $T$, such that $d_T(x,y) = d_G(x,y)$ for every $y\in V(G)$. 

\begin{theorem}	
\label{thm:equivalnet-problem}
If $G$ is a connected graph, then $\vv(G)$ is equal to the largest possible number of leaves of a shortest-path tree of $G$.
\end{theorem}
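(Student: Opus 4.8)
The plan is to prove the statement vertex by vertex: I will show that for every vertex $x$, the number $v_x(G)$ equals the largest number of non-root leaves over all shortest-path trees rooted at $x$, and then maximize over $x$. Write $\ell_x$ for this largest number of leaves. The per-vertex identity then gives $\vv(G)=\max_x v_x(G)=\max_x\ell_x$, and it remains to match $\max_x\ell_x$ with the global quantity ``largest number of leaves of a shortest-path tree of $G$''. Since for $n(G)\ge 2$ the root of any shortest-path tree has a child and is never a leaf, ``non-root leaves'' and ``leaves'' agree, and the only subtlety is that a leaf-maximal tree might happen to be rooted at one of its own leaves after a careless comparison; I would remove this nuisance by re-rooting such a tree at the unique neighbour of the root, checking via the triangle inequality that the result is still a shortest-path tree with the same leaf set, but now disjoint from the new root.

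For the inequality $v_x(G)\ge\ell_x$ I would argue directly: if $T$ is any shortest-path tree rooted at $x$ and $L$ is its set of leaves, then $L$ is an $x$-visibility set. Indeed, for $y\in L$ the $x,y$-path in $T$ is a shortest $x,y$-path in $G$, and all of its internal vertices have degree at least $2$ in $T$, hence lie outside $L$, so this path meets $L$ only in $y$. Thus $v_x(G)\ge|L|$, and maximizing over $T$ gives $v_x(G)\ge\ell_x$. This direction is routine.

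The reverse inequality $v_x(G)\le\ell_x$ is the crux. Fix a $v_x$-set $S$ together with, for each $y\in S$, a shortest path $P_y$ from $x$ with $V(P_y)\cap S=\{y\}$; I want a shortest-path tree with at least $|S|$ leaves. It is convenient to build a tree by choosing, for every $z\ne x$, a parent among its \emph{down-neighbours} (neighbours one step closer to $x$); the non-leaves are then exactly the chosen parents, so minimizing their number maximizes the leaves. The predecessor of $y$ on $P_y$ is a down-neighbour of $y$ lying outside $S$, so each $y\in S$ admits an $S$-avoiding parent and can in principle be made a leaf; the sole obstruction is a vertex $z\notin S$ all of whose down-neighbours lie in $S$, which is forced to adopt a parent in $S$ and spoils one $S$-leaf. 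I would absorb such forced choices by a rerouting exchange that trades the leaf status of the spoiled $S$-parent for that of $z$ (which is not in $S$), keeping the leaf count at least $|S|$, and iterate.

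The main obstacle is precisely to carry out this exchange globally and to prove it terminates with at least $|S|$ leaves. The difficulty is genuinely cross-level: the constraint defining admissible leaf sets factors over consecutive BFS levels (the chosen parents of level $j{+}1$ must dominate that level from within level $j$, so the minimum number of internal vertices splits as a sum of independent minimum set covers, one per level, which is also the source of the NP-hardness), yet a visibility set may place more vertices on a single level than that level can contribute leaves, the deficit being repaid elsewhere. Hence no per-level count suffices, and I would instead track a global potential such as the total depth $\sum d_G(x,v)$ of the current leaf set, showing each forced exchange strictly increases it while preserving the number of leaves, or equivalently set up an augmenting-path/Hall-type argument on the bipartite graphs between successive levels to inject $S$ into the leaf set of a suitable tree. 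Establishing that this global exchange never stalls before reaching $|S|$ leaves is the step I expect to require the most care.
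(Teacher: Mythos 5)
Your first two steps are sound and match the paper: the reduction to a per-vertex identity (including the re-rooting remark for the case where the root is a leaf of $G$, which parallels Lemma~\ref{non-leaf}), and the easy inequality $v_x(G)\ge\ell_x$ via the observation that the leaf set of a shortest-path tree rooted at $x$ is an $x$-visibility set. The problem is the converse inequality $v_x(G)\le\ell_x$, which is the entire content of the theorem, and there your text stops exactly where the proof should begin: you describe the obstruction (a vertex $z\notin S$ all of whose down-neighbours lie in $S$), propose a ``rerouting exchange'' to repair it, and then state that carrying out this exchange globally and proving it never stalls is the step requiring the most care. Neither of the two techniques you gesture at (a potential on total leaf depth, or a Hall-type augmenting argument between consecutive levels) is actually executed, so the hard direction is not proved. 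This is a genuine gap, not a presentational one.

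For comparison, the paper closes this direction by operating on the visibility set itself rather than on a parent assignment. Fix clean paths $P_i$ witnessing visibility ($V(P_i)\cap S=\{x_i\}$), chosen so that $\bigcup_{i}E(P_i)$ induces as few cycles as possible; if a cycle remained, rerouting one path through the farthest common vertex of two paths would reduce that number, so the union induces a tree $T$ in which every $x_i$ is a leaf (no $x_i$ lies on $P_j$ for $j\ne i$, since $V(P_j)\cap S=\{x_j\}$). If $T$ is not spanning, take $y\notin V(T)$: every shortest $x,y$-path meets $S$ (else $S\cup\{y\}$ would beat $S$), and a shortest $x,y$-path minimizing $|V(P)\cap S|$ meets $S$ in exactly one vertex $x_j$ (otherwise reroute its prefix along the clean path of the $S$-vertex on it farthest from $x$). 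Then $(S\setminus\{x_j\})\cup\{y\}$ is again an $x$-visibility set of the same cardinality --- here it matters that $y\notin V(T)$, so the remaining paths $P_i$ are untouched --- and one repeats the whole construction. This swap is precisely the exchange you were looking for, and it terminates for the reason you guessed but did not prove: $x_j$ lies strictly between $x$ and $y$ on a shortest path, so each swap strictly increases $\sum_{v\in S}d_G(x,v)$, which is bounded above. Your level-by-level set-cover framing fights the cross-level interactions head on; replacing ``repair the tree, keep $S$ fixed'' by ``repair $S$, rebuild the tree'' is what makes the iteration go through.
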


\begin{proof}
Let $S = \{x_1, \dots, x_k\}$ be a $\vv$-set of $G$, and let $x$ be a vertex with $v_x(G) = \vv(G)$. Then, if $n(G)\ge 3$, $x$ is not a leaf because $v_{x'}(G) > v_x(G)$, where $x'$ is the (support) vertex adjacent to $x$.  

Denoting the largest possible number of leaves of a shortest-path tree of $G$ by $t_G$, it is clear that $\vv(G) \ge t_G$. 

To prove that $t_G \ge \vv(G)$, we first claim that there exist shortest $x,x_i$-paths $P_i$, $i\in [k]$, such that $V(P_i) \cap S = \{x_i\}$ and the edges from $\cup_{i\in [k]}E(P_i)$ induce a tree $T$ in $G$. Since $S$ is a $v_x$-set, there exist shortest $x,x_i$-paths $P_i$, $i\in [k]$, such that $V(P_i) \cap S = \{x_i\}$. Assume that the paths $P_i$ are selected such that the union $\cup_{i\in [k]}E(P_i)$ induces the smallest number of cycles possible. Suppose that this number is not zero. Then there exist $s,t\in [k]$ such that the union of edges $E(P_s) \cup E(P_{t})$ induces a cycle. Let $y$ be a vertex from $V(P_s)\cap V(P_{t})$ such that $d_G(x,y)$ is as large as possible among all vertices from $V(P_s)\cap V(P_t)$. Let $P_s'$ and $P_t'$ be the subpaths of $P_s$ and $P_t$ between $x$ and $y$, respectively. Since $P_s$ and $P_t$ are shortest paths,  $P_s'$ and $P_t'$ are shortest paths as well. Then in our collection $\{P_i:\ i\in [k]\}$ of shortest paths, replace $P_t$ with the concatenation of $P_s'$ and the $y,x_t$-subpath of $P_t$. In this way the union of the edges from the new collections of shortest $x,x_i$-paths induces at least one cycle less. This contradicts the selection of the paths $P_i$. We can conclude that the union $\cup_{i\in [k]}E(P_i)$ induces no cycle, thus proving the claim.  

Let $T$ be the tree induced by the edges from $\cup_{i\in [k]}E(P_i)$. If $T$ is a spanning tree, then we are done. Hence assume that $T$ does not span $G$. Let $y$ be an arbitrary vertex from $V(G)\setminus V(T)$. Then every shortest $x,y$-path contains at least one vertex from $S$, for otherwise $S \cup \{y\}$ would be an $x$-position set larger than $S$, which is not possible. Moreover, there exists a shortest $x,y$-path $P$ such that $V(P) \cap S = \{x_j\}$ for some $j\in [k]$, for otherwise $S$ would not be an $x$-position set. Then replace $x_j$ by $y$ in $S$ and in the collection of our shortest paths, replace the current shortest $x,x_j$-path by $P$. Proceeding in this way, we end with an $x$-position set of the same cardinality as $S$, and with a collection of paths whose edges induce a spanning tree $T$ of $G$. We can conclude that $t_G \ge \vv(G)$.
\qed
\end{proof}

In view of Theorem~\ref{thm:equivalnet-problem}, we first recall that the problem of finding a spanning tree with maximum number of leaves has been heavily researched and is computationally difficult. For instance, the problem is NP-hard as well as APX-hard for cubic graphs, see~\cite{bonsma-2012}. On the other hand, a $2$-approximation algorithm is known for this problem~\cite{solis-2017}. We now demonstrate that the difference between the maximum number of leaves in a spanning tree of $G$ and $\vv(G)$ can be arbitrarily large. 
    
\begin{figure}[ht!]
\begin{center}
\begin{tikzpicture}[scale=0.5,style=thick,x=1.5cm,y=1.5cm]
\def\vr{5pt}

\begin{scope}[xshift=0cm, yshift=0cm] 
\coordinate(x1) at (0,0);
\coordinate(x2) at (1,0);
\coordinate(x3) at (2,0);
\coordinate(x4) at (0,1);
\coordinate(x5) at (1,1);
\coordinate(x6) at (2,1);
\coordinate(x7) at (1,2);

\coordinate(x8) at (3,-1);

\coordinate(x9) at (4,0);
\coordinate(x10) at (5,0);
\coordinate(x11) at (6,0);
\coordinate(x12) at (4,1);
\coordinate(x13) at (5,1);
\coordinate(x14) at (6,1);
\coordinate(x15) at (5,2);

\coordinate(x16) at (3,3);

\draw (x5) -- (x1);
\draw (x5) -- (x2);
\draw (x5) -- (x3);
\draw (x7) -- (x4);
\draw (x7) -- (x5);
\draw (x7) -- (x6);

\draw (x13) -- (x9);
\draw (x13) -- (x10);
\draw (x13) -- (x11);
\draw (x15) -- (x14);
\draw (x15) -- (x13);
\draw (x15) -- (x12);

\draw (x16) -- (x7);
\draw (x16) -- (x8);
\draw (x16) -- (x15);

\draw (x8) -- (x2);
\draw (x8) -- (x10);

\draw(x1)[fill=black] circle(\vr);
\draw(x2)[fill=black] circle(\vr)node[left]{{\footnotesize $b$}};
\draw(x3)[fill=black] circle(\vr);
\draw(x4)[fill=black] circle(\vr);
\draw(x5)[fill=white] circle(\vr)node[left]{{\footnotesize $a$}};
\draw(x6)[fill=black] circle(\vr);
\draw(x7)[fill=white] circle(\vr)node[above]{{\footnotesize $y$}};
\draw(x8)[fill=white] circle(\vr)node[below]{{\footnotesize $c$}};
\draw(x9)[fill=black] circle(\vr);
\draw(x10)[fill=black] circle(\vr);
\draw(x11)[fill=black] circle(\vr);
\draw(x12)[fill=black] circle(\vr);
\draw(x13)[fill=white] circle(\vr);
\draw(x14)[fill=black] circle(\vr);
\draw(x15)[fill= white] circle(\vr) node[above]{{\footnotesize $z$}};
\draw(x16)[fill=white] circle(\vr)node[above]{{\footnotesize $x$}};
\end{scope}

\end{tikzpicture}
\caption{A graph $G$ with $\ell(G) = 11$ but $\vv(G) = \mu_x(G) = 10$ (also attained at $y$ and $z$).}
\label{fig:vertex-visibility-counter-eg}
\end{center}
\end{figure}
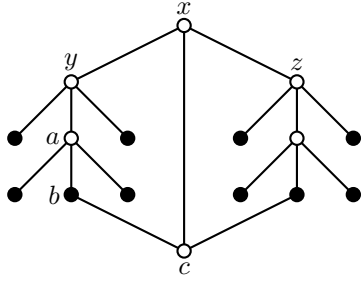

Let $\ell(G)$ denote the maximum number of leaves in a spanning tree of a graph $G$. Let $G_n$ denote the graph obtained by taking $n$ copies of the graph $G$ in Figure~\ref{fig:vertex-visibility-counter-eg} and identifying the vertex $x$. Then $\ell(G_n) = 11n$.  By Lemma \ref{lem:mdv}, it is straight forward to check that $v_x(G_n) = v_y(G_n) = v_z(G_n) = 10n$ while $v_a(G_n) = 9$, $v_b(G_n) = 8$ and $v_c(G_n) = 8$. Thus by the symmetry of the graph and by Lemma \ref{non-leaf}, we can conclude that $\vv(G_n) = 10n$. Hence the difference $\ell(G_n) - \vv(G_n)$ is $n$, which we summarize in the following result.

\begin{proposition}
\label{prop:big-diff}
Given a graph $G$, the difference between $\ell(G)$ and $vv(G)$ can be arbitrarily large.
\end{proposition}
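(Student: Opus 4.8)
The plan is to amplify a constant-size gap. First I would fix the $16$-vertex gadget $G$ of Figure~\ref{fig:vertex-visibility-counter-eg}, for which I claim $\vv(G)=10$ (attained at the apex $x$, and also at $y$ and $z$) while $\ell(G)=11$, and then form $G_n$ by taking $n$ disjoint copies of $G$ and identifying their apices into a single vertex $x$. Since $x$ is a cut vertex whose removal splits $G_n$ into the $n$ punctured copies, both quantities $\ell(G_n)$ and $\vv(G_n)$ should decompose copy-by-copy, so the whole argument reduces to controlling each copy separately and summing. Throughout I would use Theorem~\ref{thm:equivalnet-problem} to replace ``$\vv$'' by ``largest number of leaves of a shortest-path tree'', which makes the cut-vertex decomposition transparent.

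For the lower bound $\ell(G_n)\ge 11n$ I would take a spanning tree of $G$ realizing $\ell(G)=11$ in which the apex $x$ is an internal vertex (such a tree exists for the gadget, where $x$ has degree $3$), and glue one such tree in every copy. Because the copies meet only in $x$ and $x$ is internal in each, the resulting spanning tree of $G_n$ has exactly $11$ leaves per copy, none equal to $x$, for a total of $11n$. In fact $\ell(G_n)=11n$ holds, since along the cut vertex any spanning tree of $G_n$ restricts to a spanning tree of each copy, but the inequality $\ge 11n$ is all the statement needs.

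For the upper bound $\vv(G_n)\le 10n$ --- the heart of the matter --- I would argue over an arbitrary root $r$ using the shortest-path-tree formulation. Suppose $r$ lies in copy $1$ (the case $r=x$ being symmetric and cleaner). Any shortest $r,v$-path to a vertex $v$ of another copy $j$ passes through $x$, so a shortest-path tree rooted at $r$ restricts on copy $j$ to a shortest-path tree of the gadget rooted at $x$; its leaves all differ from $x$ and number at most $v_x(G)\le\vv(G)=10$. On copy $1$ the tree is a shortest-path tree rooted at $r$ with $x$ forced to be internal, contributing at most $v_r(G)\le\vv(G)=10$ leaves. Summing over the $n$ copies gives at most $10n$ leaves, hence $\vv(G_n)\le 10n$ by Theorem~\ref{thm:equivalnet-problem}. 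Rooting at the apex and choosing in each copy a shortest-path tree with $10$ leaves (all different from $x$) attains $10n$, so in fact $\vv(G_n)=10n$; Lemma~\ref{non-leaf} guarantees the optimum is not a leaf, consistent with this.

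Combining the two bounds yields $\ell(G_n)-\vv(G_n)\ge 11n-10n=n$, which tends to infinity, proving the claim. The main obstacle is not the amplification, which is routine once the cut-vertex decomposition is in place, but verifying the two gadget invariants $\vv(G)=10$ and $\ell(G)=11$ with $x$ internal: the first requires checking that no vertex of $G$ roots a shortest-path tree with more than $10$ leaves (a finite case analysis that the symmetry of $G$ and Lemma~\ref{lem:mdv} cut down substantially), and the second requires exhibiting the claimed $11$-leaf spanning tree and confirming its optimality.
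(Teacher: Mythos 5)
Your proposal is correct, and its core is exactly the paper's construction: the same gadget $G$ of Figure~\ref{fig:vertex-visibility-counter-eg} with $\ell(G)=11$ and $\vv(G)=10$, and the same amplified graph $G_n$ obtained by identifying the apices of $n$ copies. Where you genuinely diverge is in certifying the key bound $\vv(G_n)\le 10n$. The paper works directly inside $G_n$: invoking Lemma~\ref{lem:mdv}, symmetry, and Lemma~\ref{non-leaf}, it quotes the values $v_r(G_n)$ for each orbit of roots and reads off the maximum $10n$. You instead route through Theorem~\ref{thm:equivalnet-problem} and prove a decomposition statement: since $x$ is a cut vertex and every other vertex has all of its neighbours inside its own copy, any shortest-path tree of $G_n$ restricts on each copy to a shortest-path tree of the gadget (rooted at $x$ in every copy not containing the root), hence each copy contributes at most $\vv(G)=10$ non-root leaves. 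This is a different and, in my view, cleaner justification: it reduces the entire verification to the two single-gadget invariants, it shows more generally that $\vv$ is additive over copies glued at a cut vertex, and it avoids recomputing visibility numbers in $G_n$ itself --- a step at which the paper is in fact imprecise, since the quoted values $v_a(G_n)=9$ and $v_b(G_n)=v_c(G_n)=8$ are really the single-copy values $v_a(G)$, $v_b(G)$, $v_c(G)$ (for $n\ge 2$ each non-apex copy can still contribute a full $10$-element $x$-visibility set, so these quantities grow with $n$); this slip is harmless for the paper's conclusion, and your argument sidesteps it entirely. Two small corrections to your write-up: in the copy containing the root $r$, the vertex $x$ is \emph{not} forced to be internal in the restricted tree --- what is true, and what you actually use, is that $x$ is internal in the whole tree because it has children in every other copy; and if $r$ has degree one in $G_n$ then $r$ itself is a degree-one vertex of the spanning tree, so under the unrooted leaf count your bound reads $10n+1$ rather than $10n$ --- either way $\ell(G_n)-\vv(G_n)\ge n-1\to\infty$, which is all the proposition needs. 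Like the paper, you defer the finite check of the gadget invariants $\vv(G)=10$ and $\ell(G)=11$; that check is routine, and your overall argument is sound.
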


We are now ready to prove that the {\sc $x$-visibility} problem is hard to solve. 

\begin{theorem}\label{thm:NP}
The {\sc $x$-visibility} problem is NP-complete even for graphs of diameter $2$.
\end{theorem}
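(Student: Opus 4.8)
The plan is to prove membership in NP and NP-hardness separately, obtaining the latter from a reduction from {\sc Set Cover} whose output graph always has diameter $2$. Membership is routine: a candidate set $S$ serves as a certificate, and I would verify in polynomial time that $x\notin S$, that $|S|\ge k$, and that every $y\in S$ is visible from $x$. The last test reduces, for each $y\in S$, to computing $d_G(x,y)$ by breadth-first search and checking that $d_{G-(S\setminus\{y\})}(x,y)=d_G(x,y)$; this equality holds exactly when there is a shortest $x,y$-path meeting $S$ only in $y$.

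For hardness, the key observation is that in a graph of diameter $2$ visibility from $x$ becomes purely local. Partition $V(G)\setminus\{x\}$ into $A=N_G(x)$ and the set $B$ of vertices at distance $2$ from $x$. Each $a\in A$ is reached from $x$ by the edge $xa$, a shortest path that avoids every other vertex, so $a$ can always be placed in an $x$-visibility set and, being an endpoint, never obstructs anything except possibly vertices of $B$. A vertex $z\in B$ has only length-$2$ shortest paths $x,a,z$ with $a\in A\cap N_G(z)$, so $z$ can be added to a set $S$ precisely when some neighbour of $z$ in $A$ is kept out of $S$. Since every interior vertex of a shortest path issuing from $x$ lies in $A$, no vertex of $B$ ever blocks another vertex. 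Writing $U=A\setminus S$ for the excluded neighbours of $x$, the largest admissible $S$ with this choice of $U$ has size $|A|-|U|+|\{z\in B:\ N_G(z)\cap U\neq\emptyset\}|$; optimizing over $U$, where a short exchange argument shows that extending a non-dominating $U$ never lowers the count, yields
\[
v_x(G)\;=\;(n(G)-1)-\min\{\,|P|:\ P\subseteq A \text{ and every } z\in B \text{ has a neighbour in } P\,\}.
\]

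With this formula I would reduce from {\sc Set Cover}. Given a universe $\mathcal U$, a family $\mathcal F=\{S_1,\dots,S_m\}$ (we may assume every element lies in some member), and a bound $p$, construct $G$ on $\{x\}\cup A\cup B$, where $A=\{a_1,\dots,a_m\}$ encodes $\mathcal F$ and $B$ encodes $\mathcal U$. Join $x$ to each $a_i$, join $a_i$ to an element $e$ exactly when $e\in S_i$, and turn $B$ into a clique. Then $N_G(x)=A$, every element is at distance $2$ from $x$, and a subset $P\subseteq A$ dominates $B$ if and only if the corresponding subfamily of $\mathcal F$ covers $\mathcal U$. By the displayed identity, $v_x(G)\ge (n(G)-1)-p$ holds if and only if $\mathcal U$ has a cover of size at most $p$, so setting $k=(n(G)-1)-p$ gives the desired reduction.

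The delicate point, and the place I expect to spend the most care, is ensuring that $G$ has diameter exactly $2$ without altering the visibility structure encoded by the formula. Keeping $x$ non-adjacent to $B$ forces every length-$2$ path from $x$ to pass through $A$, so that adding the clique edges on $B$ creates no new shortest $x,z$-path and leaves the blocking relation, hence the formula, untouched. It then remains to confirm the distance-$2$ bound for all remaining pairs: any two vertices of $A$ meet at $x$; an element and a non-incident $a_i$ meet at a clique-neighbour in $S_i$ (using that the sets are non-empty); and any two elements are adjacent in the clique. Checking this invariance and these distances is the technical heart of the proof.
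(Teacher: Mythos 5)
Your proof is correct, but it is organized differently from the paper's. Both hardness arguments use the same gadget shape: an apex $x$, a ``covering side'' $A=N_G(x)$, and a ``covered side'' $B$ of vertices at distance $2$ from $x$ that is turned into a clique, so that a vertex of $B$ is visible from $x$ exactly when some neighbour in $A$ is left out of the set. The paper instantiates this with $A=V(G)$ and $B=\{v_e: e\in E(G)\}$ and reduces from {\sc Independent Set} (equivalently, {\sc Vertex Cover}: the paper's construction is precisely your gadget applied to the set system $\{\{i,j\}: ij\in E(G)\}$), arguing both directions directly on visibility sets and invoking Lemma~\ref{lem:mdv} to normalize a maximum set so that it contains all of $B$. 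You instead first isolate a closed-form identity, $v_x(G)=(n(G)-1)-\min\{|P|: P\subseteq N_G(x) \text{ dominates } B\}$ whenever $\ecc_G(x)\le 2$, proved by the exchange argument on $U=A\setminus S$, and then reduce from {\sc Set Cover}; this makes the equivalence between diameter-$2$ vertex visibility and covering problems explicit in both directions, avoids the need for the normalization lemma, and would let inapproximability of {\sc Set Cover} transfer, at the cost of a slightly longer setup than the paper's direct two-way argument. Two small points you should make explicit: discard empty sets $S_i$ from the instance (otherwise $d(a_i,e)=3$ and the diameter-$2$ claim fails --- your parenthetical ``sets are non-empty'' uses an assumption you only stated for elements, not for sets), and note that with the standard normalization $p\le m$ the target $k=(n(G)-1)-p$ is a positive integer, as the problem definition requires.
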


\begin{proof}
Given a set $X\subseteq V(G)\setminus \{x\}$ of $G$, it is possible to test in polynomial time whether it is a $x$-visibility set or not. Consequently, the problem is in NP. 

We will now prove that the {\sc Independent Set} problem (equivalent to the {\sc Clique} problem in the complement graph and shown as NP-complete in~\cite{Karp72}), polynomially reduces to the {\sc $x$-visibility} problem. The {\sc Independent Set} problem asks if the independence number $\alpha(G)$ of a graph $G$, that is the cardinality of a largest edgeless set of vertices of $G$, is at least a given integer $t$. Here we assume that $G$ does not contain any isolated vertices. Clearly, this assumption does not affect the hardness of the problem.
We consider an arbitrary instance $(G,t)$ of the {\sc Independent Set} problem where $G$ has no isolated vertices and will construct an instance$(G',k)$ of the {\sc $x$-visibility} problem as follows.

Let $V(G) = [n]$ and add a universal vertex $x$ to $V(G)$. Now for each edge $e=ij$ of $G$, we add a new vertex $v_e=v_{ij}$ and the edges $iv_e$ and $jv_e$. Also, let $S = \{v_e: e \in E(G)\}$ induce a clique. An example of the graph $G'$, for $G = P_5$, is given in Figure~\ref{fig:NP-completeness}.


\begin{figure}[ht!]
\begin{center}
\begin{tikzpicture}[scale=0.5,style=thick,x=6cm,y=2cm]
\def\vr{5pt}

\begin{scope}[xshift=0cm, yshift=0cm] 
\coordinate(x1) at (1.5,0);
\coordinate(x2) at (1.5,1);
\coordinate(x3) at (2.5,1);
\coordinate(x4) at (2.5,0);

\coordinate(x5) at (0,2);
\coordinate(x6) at (1,2);
\coordinate(x7) at (2,2);
\coordinate(x8) at (3,2);
\coordinate(x9) at (4,2);

\coordinate(x10) at (2,3);

\draw (x1) -- (x2) -- (x3) -- (x4) -- (x1);
\draw (x2) -- (x4);
\draw (x1) -- (x3);
\draw (x5) -- (x6) -- (x7) -- (x8) -- (x9);
\draw (x5) -- (x1) -- (x6);
\draw (x6) -- (x2) -- (x7);
\draw (x7) -- (x3) -- (x8);
\draw (x8) -- (x4) -- (x9);
\draw (x10) -- (x5);
\draw (x10) -- (x6);
\draw (x10) -- (x7);
\draw (x10) -- (x8);
\draw (x10) -- (x9);

\draw(x1)[fill=black] circle(\vr)node[below=1mm]{\footnotesize $v_{01}$};
\draw(x2)[fill=black] circle(\vr)node[left]{\footnotesize$v_{12}$};
\draw(x3)[fill=black] circle(\vr)node[right]{\footnotesize$v_{23}$};
\draw(x4)[fill=black] circle(\vr)node[below=1mm]{\footnotesize$v_{34}$};
\draw(x5)[fill=black] circle(\vr)node[below=1mm]{\footnotesize$0$};
\draw(x6)[fill=white] circle(\vr)node[below=1mm]{\footnotesize$1$};
\draw(x7)[fill=black] circle(\vr)node[below=1mm]{\footnotesize$2$};
\draw(x8)[fill=white] circle(\vr)node[below=1mm]{\footnotesize$3$};
\draw(x9)[fill=black] circle(\vr)node[below=1mm]{\footnotesize$4$};
\draw(x10)[fill=white] circle(\vr)node[above=1mm]{\footnotesize$x$};
\end{scope}

\end{tikzpicture}
\caption{The graph $G'$ for $G=P_5$.}
\label{fig:NP-completeness}
\end{center}
\end{figure}
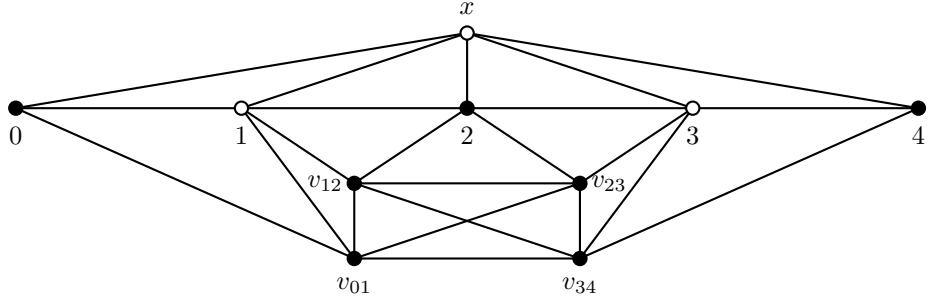

Let $(G,t)$ be an instance of the {\sc Independent Set} problem, we show that $\alpha(G)\geq t$ if and only if the {\sc $x$-visibility} problem has $x$-visibility number greater than $k$, where $k= E(G)+t$, that is $v_x(G')\geq k$.

Let $I$ be an independent set of $G$, and consider the set $X=I\cup S$ in $G'$. For each $v_e \in S$, where $e=ij$, either $i$ or $j$ is not present in $I$ and hence all vertices in $S$ are $x$-visible. Also, all vertices in $I$ are adjacent to $x$ and hence are $x$-visible. Therefore, $v_x(G') \geq k$.

On the contrary, let $X$ be a $x$-visibility set of $G'$ of cardinality at least $k$. By Lemma \ref{lem:mdv}, we can assume that all vertices of $S$ are in $X$. Then the vertices in $X\setminus S$ must form an independent set of $G$, as, otherwise, there would be an edge $e=ij$ such that both vertices $i$ and $j$ are in $X$, but then $v_e$ would not be in visibility with $x$. Since the number of vertices in $S$ is $|E(G)|$, the number of vertices in the independent set of $G$ is $|X\setminus S|\geq k - |E(G)| = t$. Hence, $\alpha(G)\geq t$.

Finally, the diameter of $G'$ is two as $x$ is adjacent to all vertices of $G$ and at distance 2 from the vertices in $S$. All pairs of vertices $u,v$ in $G$ are at distance at most two as $u,x,v$ is a path in $G'$. Each vertex in $S$ is adjacent to all other vertices in $S$ and, since $G$ has no isolated vertices, it is at distance at most two from the vertices in $G$. 
\qed
\end{proof}

The above Theorem~\ref{thm:NP} clearly implies that also finding $\vv(G)$ is an NP-hard problem.

\section{General lower and upper bounds}
\label{sec:general}

In view of the hardness of the problem studied, as established in Section~\ref{sec:complexity}, in this section we prove general lower and upper bounds for the vertex visibility numbers. Along the way, several exact values are also determined. 

If $S$ be a $\mu$-set of $G$ and $x \in S$, then $S\setminus \{ x\} $ is an $x$-visibility set of $G$. Also, if $x\in V(G)$, then $N_G(x)$ is an $x$-visibility set. Thus we have the following general bounds:  
\begin{equation}
\label{eq:general}
\max\{\mu(G) - 1, \Delta(G)\} \leq \vv(G) \leq n(G)-1.  
\end{equation}

The difference $\vv(G) - (\mu(G) -1)$ can be arbitrarily large. For instance, it can be deduced from~\cite[Theorem 3.2.(i)]{cicerone-2023} that if $n \geq 6$, then $\mu(K_2 \cp C_n) = 6$, while on the other hand $\vv(K_2 \cp  C_n) \geq n$. The first assertion of the following result follows from~\eqref{eq:general}, while the second is also straightforward to verify, hence we omit the proof. 

\begin{proposition}
\label{prop:n-1 and n-2}
If $G$ is a connected graph with $n(G)\ge 2$, then the following properties hold. 
\begin{enumerate}
\item $\vv(G) = n(G) - 1$ if and only if $G$ has a universal vertex.
\item $\vv(G) = n(G) - 2$ if and only if $G$ contains no universal vertex and contains a spanning double star.
\end{enumerate}
\end{proposition}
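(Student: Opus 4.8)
The plan is to treat the two equivalences in turn, leaning on the bounds in \eqref{eq:general} and on the characterization of $\vv(G)$ via shortest-path trees in Theorem~\ref{thm:equivalnet-problem}.

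For the first statement, the implication from a universal vertex is immediate: if $x$ is universal then $\Delta(G)=n(G)-1$, so \eqref{eq:general} gives $\vv(G)\ge\Delta(G)=n(G)-1$, matched by the upper bound in the same display. For the converse I would argue directly rather than through the tree characterization. Suppose $v_x(G)=n(G)-1$, so that $S=V(G)\setminus\{x\}$ is an $x$-visibility set. For each $y\neq x$ there is then a shortest $x,y$-path $P$ with $V(P)\cap S=\{y\}$; since every vertex of $P$ other than $x$ belongs to $S$, the path can have no internal vertex, forcing $xy\in E(G)$. As $y$ is arbitrary, $x$ is universal.

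For the second statement I would first observe that $\vv(G)=n(G)-2$ already rules out a universal vertex, by the first statement, and hence gives $\vv(G)\le n(G)-2$. To produce a spanning double star I invoke Theorem~\ref{thm:equivalnet-problem} to obtain a shortest-path tree $T$, rooted at some $x$, having $n(G)-2$ leaves; thus, apart from the root $x$, exactly one vertex $w$ of $T$ is internal. Were $x$ itself a leaf, $T$ would be a star centered at $w$, making $w$ universal and contradicting the previous observation; so $x$ is internal, the internal vertices of $T$ are precisely $x$ and $w$, they are adjacent, and every leaf is attached to one of them, i.e.\ $T$ is a double star. Being spanning, $T$ is the desired spanning double star. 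Conversely, given that $G$ has no universal vertex and a spanning double star $D$ with centers $a,b$ and leaf sets $L_a,L_b$, the first statement yields $\vv(G)\le n(G)-2$, and I would exhibit equality by taking $x=a$ and $S=L_a\cup L_b$, a set of $n(G)-2$ vertices not containing $a$ or $b$: each leaf in $L_a$ is adjacent to $a$, while each $\ell\in L_b$ is reached along $a,b,\ell$, whose only internal vertex $b$ avoids $S$ (or directly if $a\ell\in E(G)$). Thus $v_a(G)\ge n(G)-2$, and equality holds.

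The visibility verifications and the elementary tree counting are routine; the one step deserving care is the forward direction of the second statement, where I must ensure that the extremal shortest-path tree is a genuine double star rather than a star rooted at one of its leaves — and it is precisely the absence of a universal vertex that excludes the latter degeneracy.
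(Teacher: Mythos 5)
Your proof is correct. There is, however, nothing in the paper to compare it against line by line: the authors omit the argument entirely, remarking only that part (1) ``follows from~\eqref{eq:general}'' and that part (2) ``is also straightforward to verify.'' Your part (1) is the natural argument (the size forces $S=V(G)\setminus\{x\}$, so every shortest $x,y$-path must be the edge $xy$). For the forward direction of part (2) you take a genuinely structural route through Theorem~\ref{thm:equivalnet-problem}: extract a shortest-path tree with $n(G)-2$ leaves and show it is a double star. This works, and your care with the degenerate case is exactly the point that needs attention: since in a shortest-path tree rooted at $x$ every $G$-neighbour of $x$ is a child of $x$, the extremal tree could a priori be a spanning star centred at some $w\neq x$ with $x$ of tree-degree one, and it is precisely the absence of a universal vertex, i.e.\ part (1), that excludes this; note that for this count to be coherent (and for Theorem~\ref{thm:equivalnet-problem} itself to be true --- test it on $K_2$) leaves of the rooted tree must be understood as vertices with no children, which is the reading your case analysis implicitly and correctly uses. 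The alternative the authors presumably had in mind is more elementary and avoids Theorem~\ref{thm:equivalnet-problem}: take a $v_x$-set $S$ with $|S|=n(G)-2$ and write $V(G)\setminus S=\{x,w\}$; then each $y\in S$ is either adjacent to $x$ or reached by the path $x,w,y$, and a short case analysis using connectivity and the absence of a universal vertex assembles the spanning double star directly. What your route buys is brevity once Theorem~\ref{thm:equivalnet-problem} is available; what the direct route buys is independence from that theorem. Your backward direction of (2) is the construction either approach would use.
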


We now say that a vertex $y$ is a \emph{stress vertex} for $x$ if there exists a maximally distant vertex $z$ of $x$ such that $y$ lies on every $x,z$-shortest path in $G$. Note that, every cut vertex $y$, $y\ne x$, is a stress vertex for $x$. Let $\str_G(x)$ denote the number of stress vertices for $x$. 

\begin{lemma}\label{lem:mdv}
If $x$ is a vertex of a connected graph $G$, then there exists a $v_x$-set $S$ with the following properties:
    \begin{enumerate}
        \item every $y \in V(G)\setminus \{x\}$ is $S$-visible from $x$; 
        \item $\MD_G(x) \subseteq S$; 
        \item $S$ contains no stress vertex for $x$.
    \end{enumerate}
In addition, $|\MD_G(x)| \leq v_x(G) \leq n(G) - \str_G(x) - 1$. 
\end{lemma}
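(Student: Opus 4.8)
The plan is to reduce the whole statement to a single structural fact: that the $v_x$-set $S$ can be chosen to be exactly the leaf set of a shortest-path tree of $G$ rooted at $x$. Indeed, if $T$ is such a tree and $S$ is its set of leaves (with $x$ internal, as $n(G)\ge 2$), then every requirement follows quickly. For property~(1), the $x,y$-path in $T$ is a shortest path whose interior vertices are internal vertices of $T$, hence outside $S$, so every $y\ne x$ is $S$-visible from $x$. For property~(2), a vertex $z\in\MD_G(x)$ has no neighbour farther from $x$, so it has no child in \emph{any} shortest-path tree and is a leaf; thus $\MD_G(x)\subseteq S$. For property~(3), if $y$ is a stress vertex witnessed by $z\in\MD_G(x)$, then $y$ lies on the shortest $x,z$-path in $T$ as a proper interior vertex, so $y$ has a child and is internal, giving $y\notin S$. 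The lower bound $|\MD_G(x)|\le v_x(G)$ is then immediate from~(2), while the upper bound follows because $S$ is disjoint from $\{x\}$ and from the $\str_G(x)$ stress vertices (all internal), whence $|S|\le n(G)-\str_G(x)-1$.

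It therefore remains to produce a $v_x$-set that is the leaf set of a shortest-path tree. First I would record the easy half: the leaves of any shortest-path tree form an $x$-visibility set, so the number of leaves never exceeds $v_x(G)$. To get the reverse, I would start from an arbitrary $v_x$-set $S$ and \emph{saturate} it: as long as there is a vertex $v\in V(G)\setminus(S\cup\{x\})$ all of whose neighbours at distance $d_G(x,v)-1$ lie in $S$, pick one such neighbour $p\in S$ and replace $S$ by $(S\setminus\{p\})\cup\{v\}$. Each exchange keeps $|S|=v_x(G)$ and (as argued below) keeps $S$ an $x$-visibility set, while strictly increasing $\sum_{s\in S}d_G(x,s)$; since this sum is bounded by $n(G)\,\ecc_G(x)$, the process terminates. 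At termination every $v\ne x$ has a neighbour one step closer to $x$ that is \emph{not} in $S$: for $v\notin S$ this is the stopping condition, and for $v\in S$ it is the penultimate vertex of a shortest $x,v$-path witnessing $v$ in the visibility set. Choosing such a neighbour as the parent of each $v$ yields a shortest-path tree in which no element of $S$ is anyone's parent, so $S$ is contained in the leaf set $L$. As $L$ is itself an $x$-visibility set, $|L|\le v_x(G)=|S|\le|L|$, forcing $S=L$.

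The hard part will be verifying that a single exchange preserves visibility. Writing $d=d_G(x,v)$ (note $d\ge 2$, since for $d=1$ the vertex $x\notin S$ would already be a qualifying neighbour), visibility of the newly added $v$ is easy: concatenate a shortest $x,p$-path meeting $S$ only in $p$ with the edge $pv$; this is a shortest $x,v$-path whose interior avoids $(S\setminus\{p\})\cup\{v\}$ except at $v$. The delicate point is that every surviving element $q\in S\setminus\{p\}$ stays visible, and here the stopping condition on $v$ is exactly what is needed: if a shortest $x,q$-path $R$ meeting $S$ only in $q$ had $v$ in its interior, then the predecessor $u$ of $v$ on $R$ would be a neighbour of $v$ at distance $d-1\ge 1$, hence $u\in S$; but $u$ is an interior vertex of $R$ with $d_G(x,u)=d-1<d<d_G(x,q)$, so $u\ne q$, contradicting $V(R)\cap S=\{q\}$. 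Thus $v$ lies on no such path, and the old witnessing path for $q$ still certifies its visibility after the exchange. I expect this exchange-preservation argument, together with the routine but necessary observation that a stress vertex must be taken to be a \emph{proper} interior vertex of a shortest path to its witness (otherwise every vertex of $\MD_G(x)$ would itself count as a stress vertex and properties~(2) and~(3) would clash), to be the only genuinely subtle ingredients; everything else is bookkeeping.
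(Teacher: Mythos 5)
Your proof is correct, and it takes a genuinely different route from the paper's. The paper works directly with visibility sets: it takes a $v_x$-set $R$ containing as many vertices of $\MD_G(x)$ as possible, shows $\MD_G(x)\subseteq R$ by a single swap (harmless because a maximally distant vertex never lies in the interior of a shortest path starting at $x$), and then simply asserts that properties (1) and (3) follow ``by the definition of the stress vertex.'' You instead prove a stronger, per-vertex analogue of Theorem~\ref{thm:equivalnet-problem}, namely that some $v_x$-set is exactly the leaf set of a shortest-path tree rooted at $x$, via a saturation process whose exchanges preserve visibility and strictly increase the potential $\sum_{s\in S}d_G(x,s)$; all three properties and both bounds then fall out of the tree structure at once. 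This buys something concrete: property (1) really does need an argument of your kind, because it can fail for an arbitrary $v_x$-set containing $\MD_G(x)$. Indeed, let $G$ have vertices $x,a_1,a_2,a_3,b,e,c$ and edges $xa_1$, $xa_2$, $xa_3$, $a_1b$, $a_2b$, $a_3e$, $bc$, $ec$. Then $v_x(G)=3$, $\MD_G(x)=\{c\}$, and $R=\{a_1,a_2,c\}$ is a $v_x$-set containing $\MD_G(x)$ in which $b$ is not $R$-visible from $x$. So the paper's one-sentence derivation of (1) is incomplete as stated, whereas your exchange step (equivalently, a secondary maximization of $\sum_{s\in S}d_G(x,s)$ among $v_x$-sets) closes exactly this hole; your swap-preservation argument is the same kind of verification the paper performs when absorbing $\MD_G(x)$, only aimed outward rather than at the maximally distant vertices. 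Your insistence that a stress vertex be read as a proper interior vertex of the shortest $x,z$-paths to its witness $z$ is also right, and is tacitly used in the paper's treatment of (3) as well, since otherwise every vertex of $\MD_G(x)\neq\emptyset$ would itself be a stress vertex and (2) and (3) would be contradictory. The only cost of your approach is length: the tree bookkeeping (parent choices one step closer to $x$ yield a shortest-path tree; leaves of such a tree form an $x$-visibility set) is routine, and you carry it out correctly.
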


\begin{proof}
Let $R$ be a $v_x$-set of $G$ such that it contains as many vertices from $\MD_G(x)$ as possible. Suppose there exists a vertex $y \in \MD_G(x) \setminus (R\cup\{x\})$. Then every shortest $x,y$-path intersects $R$. Also, there exists a shortest $x,y$-path that contains exactly one vertex from $R$, say $z$, for otherwise $R$ would not be an $x$-visibility set. But then $(R\setminus \{z\}) \cup \{y\}$ is an an $x$-visibility set of the same cardinality as $R$ containing more vertices from $\MD_G(x)$, a contradiction. 

By the above we have $\MD_G(x) \subseteq R$. If follows from here by the definition of the stress vertex that $R$ contains no stress vertex for $x$ as well as that every $y \in V(G)\setminus \{x\}$ is $R$-visible from $x$. The set $R$ is thus a a $v_x$-set with the three required properties. Finally, the second property yields the claimed lower bound, while the third property gives the  upper bound. 
\qed 
\end{proof}

Note that Lemma~\ref{lem:mdv} implies that if $G$ is a geodetic graph, then $v_x(G) = |\MD_G(x)|$ for every $x\in V(G)$. For the block graphs, which form a special case of geodetic graph, more specific result can be stated. In fact, this is an example where the lower and the upper bound in Lemma~\ref{lem:mdv} coincide. Denoting by $s(G)$ the number of simplicial vertices of a graph $G$, the result for block graphs reads as follows. 

\begin{proposition}
\label{prop:block}
If $G$ is a block graph different from a complete graph then, $$\vv(G) = s(G).$$
\end{proposition}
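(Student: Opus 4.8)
The plan is to reduce the computation of $\vv(G)$ to counting maximally distant vertices. Since every block of a block graph is a clique, block graphs are geodetic, so by the remark following Lemma~\ref{lem:mdv} we have $v_x(G) = |\MD_G(x)|$ for every $x \in V(G)$. Consequently $\vv(G) = \max_{x} |\MD_G(x)|$, and it suffices to show that this maximum equals $s(G)$.

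Next I would record the structural fact that in a block graph a vertex is simplicial precisely when it is \emph{not} a cut vertex: a non-cut vertex lies in a unique block, which is a clique, so its neighborhood is complete; conversely a cut vertex lies in at least two blocks and has non-adjacent neighbors from distinct blocks. Writing $\mathcal{S}$ for the set of simplicial vertices (so $|\mathcal{S}| = s(G)$), the heart of the argument is to prove that for every $x$,
\[
\MD_G(x) = \mathcal{S}\setminus\{x\}.
\]
For the inclusion $\MD_G(x)\subseteq \mathcal{S}\setminus\{x\}$, note first that $x\notin \MD_G(x)$ since every neighbor of $x$ is strictly farther from $x$ than $x$ itself. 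If $y\in \MD_G(x)$ were a cut vertex, then some component of $G-y$ not containing $x$ would contain a neighbor $w$ of $y$ with $d_G(x,w) = d_G(x,y)+1 > d_G(x,y)$, contradicting maximal distance; hence $y$ is simplicial.

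For the reverse inclusion, let $y\in\mathcal{S}\setminus\{x\}$ lie in its unique block $B$. Using the block-cut structure, the cut vertex $c\in B$ on the $x,B$-path satisfies $d_G(x,y)=d_G(x,c)+1$ with $d_G(x,c)<d_G(x,y)$, while every other vertex $z\in B$ (these are the remaining neighbors of $y$) satisfies $d_G(x,z)=d_G(x,c)+1=d_G(x,y)$; thus all neighbors of $y$ are at distance at most $d_G(x,y)$, so $y\in\MD_G(x)$. The case $x\in B$ is handled identically with $d_G(x,y)=1$. The displayed identity then gives $|\MD_G(x)| = s(G)$ when $x$ is a cut vertex and $s(G)-1$ when $x$ is simplicial. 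Since $G$ is not complete it has more than one block and therefore at least one cut vertex, so the maximum over $x$ is attained at a cut vertex and equals $s(G)$, completing the proof.

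I expect the main obstacle to be the careful bookkeeping of distances through the block-cut tree in the reverse inclusion—verifying that entering the block $B$ of a simplicial vertex costs exactly one extra step from the relevant gateway cut vertex, uniformly over all neighbors of $y$—rather than any conceptual difficulty. Everything else is a clean assembly of the geodetic reduction, the simplicial/non-cut dichotomy, and the existence of a cut vertex guaranteed by $G$ not being complete.
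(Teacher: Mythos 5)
Your proof is correct, and it takes a somewhat different route from the paper's. The paper's proof is a short two-sided bound via Lemma~\ref{lem:mdv}: since $G$ is not complete it has a cut vertex $x$, and $\MD_G(x)$ is exactly the set of simplicial vertices, so the lower bound of the lemma gives $\vv(G)\ge s(G)$; for the other direction, every cut vertex is a stress vertex for every other vertex, and in a block graph the non-simplicial vertices are exactly the cut vertices, so the upper bound $v_x(G)\le n(G)-\str_G(x)-1$ yields $v_x(G)\le s(G)$ for all $x$. You instead route everything through the paper's remark that $v_x(G)=|\MD_G(x)|$ holds in geodetic graphs, and then determine $\MD_G(x)=\mathcal{S}\setminus\{x\}$ exactly for \emph{every} vertex $x$; stress vertices never appear explicitly (they are hidden inside the derivation of the geodetic remark). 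Your identity is argued correctly: a cut vertex cannot be maximally distant because a component of $G-y$ avoiding $x$ supplies a strictly farther neighbor, and a simplicial vertex $y\neq x$ is maximally distant because all of its neighbors lie in its unique block and are reached through the same gateway cut vertex, so none is farther from $x$ than $y$. The endgame --- the maximum of $|\MD_G(x)|$ is attained at a cut vertex, which exists since $G$ is not complete --- matches the paper's use of a cut vertex for the lower bound. What your approach buys is strictly more information, namely the exact per-vertex values $v_x(G)=s(G)$ for every cut vertex $x$ and $v_x(G)=s(G)-1$ for every simplicial $x$; what the paper's approach buys is brevity, since it identifies $\MD_G(x)$ at a single cut vertex and otherwise only counts stress vertices, never needing the geodetic equality or the analysis of $\MD_G(x)$ for simplicial $x$.
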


\begin{proof}
Since $G$ is not complete, it contains a cut vertex $x$. Then $\MD_G(x)$ consists of the simplicial vertices of $G$. By the lower bound of Lemma~\ref{lem:mdv} we thus get $\vv(G)\ge s(G)$. On the other hand, as already mentioned, cut vertices are stress vertices, hence by the upper bound of Lemma~\ref{lem:mdv} we also get $\vv(G)\le s(G)$.
\qed 
\end{proof}
 
We continue with the next upper bound depending on the order and the maximum degree. 

\begin{theorem}
\label{thm:upper-Delta}
If $G$ has no universal vertex, then 
$$\vv(G) \leq \left\lfloor \frac{n(G)\Delta(G) - 1}{\Delta(G) + 1}\right\rfloor,$$
and the bound is sharp.
\end{theorem}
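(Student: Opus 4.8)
The plan is to combine the shortest-path tree characterization of Theorem~\ref{thm:equivalnet-problem} with a short double-counting argument, and then to exhibit a balanced double star as the extremal example.

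First I would fix a vertex $x$ with $v_x(G)=\vv(G)=:k$ and, using Theorem~\ref{thm:equivalnet-problem}, take a shortest-path tree $T$ rooted at $x$ having exactly $k$ leaves. Write $S$ for the set of leaves of $T$ (so $|S|=k$) and $A=V(G)\setminus S$ for the set of internal vertices, so that $|A|=n(G)-k$. Two structural facts drive the proof. The internal vertices of a tree induce a subtree, so $A$ induces a connected subgraph of $G$ and contains at least $|A|-1$ edges; and, since for $n(G)\ge 3$ the root $x$ is not a leaf (as noted in the proof of Theorem~\ref{thm:equivalnet-problem}), we have $x\in A$. Moreover $|A|\ge 2$: if $A=\{x\}$, then every remaining vertex is a leaf-child of $x$ and hence adjacent to $x$, making $x$ universal, which is excluded by hypothesis.

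Next I would double count the edges between $S$ and $A$. Every leaf of $T$ is joined to its parent, which is internal, so each vertex of $S$ has a neighbour in $A$ and therefore $e(S,A)\ge k$, where $e(S,A)$ denotes the number of $S$--$A$ edges. On the other hand, summing degrees over $A$ gives $\sum_{v\in A}\deg_G(v)=2e(A)+e(S,A)$, where the number $e(A)$ of edges inside $A$ satisfies $e(A)\ge 1$ because $A$ is connected with $|A|\ge 2$; since every degree is at most $\Delta(G)$, this yields $e(S,A)\le \Delta(G)\,|A|-2=\Delta(G)(n(G)-k)-2$. Combining the two estimates gives $k\le \Delta(G)(n(G)-k)-2$, that is $k\bigl(\Delta(G)+1\bigr)\le n(G)\Delta(G)-2$. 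Dividing and taking the floor (the $-2$ leaves a little room to spare) produces $\vv(G)=k\le \left\lfloor \frac{n(G)\Delta(G)-1}{\Delta(G)+1}\right\rfloor$.

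Finally, for sharpness I would take the balanced double star $G$ obtained from an edge $ab$ by attaching $\Delta-1$ leaves to each of $a$ and $b$, so that $\Delta(G)=\Delta$, $n(G)=2\Delta$, and $G$ has no universal vertex. Since $G$ is itself a spanning double star, Proposition~\ref{prop:n-1 and n-2} gives $\vv(G)=n(G)-2=2\Delta-2$, while a direct computation shows $\left\lfloor\frac{2\Delta\cdot\Delta-1}{\Delta+1}\right\rfloor=\left\lfloor 2\Delta-2+\frac{1}{\Delta+1}\right\rfloor=2\Delta-2$, so the bound is attained for every $\Delta\ge 2$. I expect the only delicate point to be the justification of the partition $V(G)=S\cup A$ with $A$ connected and with each leaf seeing an internal vertex; once Theorem~\ref{thm:equivalnet-problem} is invoked this is essentially a property of trees, and the remainder is the routine degree count and the floor arithmetic.
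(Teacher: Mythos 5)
Your proof is correct, but it takes a genuinely different route from the paper's. The paper does not invoke Theorem~\ref{thm:equivalnet-problem} at all here: it argues directly on a $v_x$-set $S$, splitting into the cases $\vv(G)=\Delta(G)$ and $\vv(G)>\Delta(G)$, and in the latter case counts edges between $S'=S\setminus N_G(x)$ and $V(G)\setminus(S\cup\{x\})$, using visibility to give each vertex of $S'$ a neighbour outside $S\cup\{x\}$; this yields $\vv(G)-\Delta(G)+1\le (n(G)-\vv(G)-1)\Delta(G)$ and hence the bound, with sharpness shown on the cocktail party graph $K_{k\times 2}$ (where $\vv=\Delta$). You instead route everything through the shortest-path-tree characterization and count edges across the leaf/internal-vertex partition; this removes the case distinction, and since $G[A]$ contains at least one edge you in fact get the marginally stronger inequality $\vv(G)(\Delta(G)+1)\le n(G)\Delta(G)-2$, which is still consistent with both your balanced double star (where $\vv(G)=n(G)-2$ by Proposition~\ref{prop:n-1 and n-2}) and the paper's $K_{k\times 2}$. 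Two justifications should be written out to make your argument airtight: (i) placing $x$ in $A$ requires that $x$ is not a leaf of $T$, not merely of $G$; this holds because in a shortest-path tree rooted at $x$ every $G$-neighbour of $x$ is at $T$-distance $1$ and hence a child of $x$, so $\deg_T(x)=\deg_G(x)\ge 2$ once Lemma~\ref{non-leaf} lets you choose the optimal $x$ non-pendant; (ii) the statement of Theorem~\ref{thm:equivalnet-problem} only provides a shortest-path tree with $\vv(G)$ leaves rooted at \emph{some} vertex, so to get one rooted at your chosen $x$ with \emph{exactly} $k$ leaves you should appeal to the construction inside its proof (which builds such a tree rooted at the optimal $x$), together with the observation that the non-root leaves of any shortest-path tree rooted at $x$ form an $x$-visibility set, whence no such tree has more than $v_x(G)$ of them. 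Both are one-line fixes, not flaws in the approach.
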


\begin{proof}
Let $S$ be a $\vv$-set of $G$ and set $n = n(G)$ and $\Delta = \Delta(G)$ for the rest of the proof. 

Assume first that $\vv(G) = \Delta$. Then we have 
$$\vv(G) = \Delta \leq \frac{n\Delta-1}{\Delta+1}\,,$$
where the above inequality holds since $\Delta \le n-2$ by the theorem's assumption. 

Assume second that $\vv(G) > \Delta$. Let $x$ be a vertex such $v_x(G) = \vv(G)$ and let $S$ be a $v_x$-set. Then $|S| = \vv(G)$. Let $S' = S\setminus N_G(x)$ and let $s' = |S'|$. Since $\vv(G) > \Delta$ we have $s'\ge 1$. This in turn implies that at least one neighbor of $x$ does not belong to $S$. Consequently, 
\begin{equation}
\label{eq:first}    
s' \ge \vv(G) - (\deg_G(x)-1) \ge \vv(G) - \Delta + 1\,.
\end{equation}
Since $S$ is an $x$-visibility set, each vertex from $S'$ has a neighbor in $V(G)\setminus (S\cup \{x\})$. It follows that 
\begin{equation}
\label{eq:second}    
s' \le (n - \vv(G) - 1) \Delta\,.
\end{equation}
Combining~\eqref{eq:first} with~\eqref{eq:second} we get
$$\vv(G) - \Delta + 1 \leq (n- \vv(G) - 1) \Delta\,.$$ 
Rearranging this inequality and using the fact that $\vv(G)$ is a positive integer, the claimed inequality follows. 
    
To show sharpness of the bound, consider the cocktail party graph $K_{k\times2}$, (the complete $k$-partite graph where each partite set is of cardinality 2) of order $2k$, where $k\ge 2$. Then $\Delta(K_{k\times2}) = 2k-2 = \vv(K_{k\times2}) = 2k-2$. Since 
$$\left\lfloor \frac{2k(2k-2) - 1}{(2k-2) + 1} \right\rfloor = 
\left\lfloor \frac{2k - 2 - \frac{1}{2k}}{1 - \frac{1}{2k}} \right\rfloor = 2k - 2\,,$$
the equality holds for $K_{k\times2}$, $k\ge 2$. 
\qed
\end{proof}

We next bound the $x$-visibility number using the eccentricity of $x$. 

\begin{proposition}
If $G$ is a graph and $x\in V(G)$, then 
$$\frac{n(G)-1}{\ecc_G(x)}\leq v_x(G)\leq n(G) - \ecc_G(x)\,,$$ 
and the bounds are sharp.
\end{proposition}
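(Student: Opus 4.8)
The plan is to prove the two inequalities separately, both resting on a simple counting argument organized by distance from $x$. Throughout, set $n = n(G)$ and $e = \ecc_G(x)$, and partition $V(G)\setminus\{x\}$ into distance layers $L_i = \{y : d_G(x,y) = i\}$ for $i \in [e]$; note each $L_i$ is nonempty by the definition of eccentricity, and $|L_1| + \dots + |L_e| = n - 1$.

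For the lower bound $v_x(G) \ge (n-1)/e$, the idea is that each layer $L_i$ by itself is already an $x$-visibility set: if $y \in L_i$, then any shortest $x,y$-path meets $L_i$ only in $y$, since all internal vertices lie in strictly smaller layers. Hence $v_x(G) \ge \max_i |L_i|$. Since the $e$ layers partition the $n-1$ vertices other than $x$, the largest layer has size at least $(n-1)/e$ by averaging, giving the claimed bound.

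For the upper bound $v_x(G) \le n - e$, the plan is to fix an eccentric vertex $z$ of $x$, so $d_G(x,z) = e$, and pick one vertex from each layer along a shortest $x,z$-path $P = x, p_1, p_2, \dots, p_e = z$. Let $S$ be any $x$-visibility set. The key observation is that among the $e$ internal-and-endpoint vertices $p_1, \dots, p_e$, at most one can belong to $S$: indeed, if $p_i \in S$ with $i < e$, then $p_i$ lies on every\,---\,or at least on the\,---\,shortest $x,p_j$-path through it for larger $j$ on $P$, which blocks visibility. I would argue more carefully that if two vertices $p_i, p_j$ with $i < j$ both lie in $S$, then $p_i$ sits on the shortest $x,p_j$-subpath of $P$, and one must check this forces a visibility violation for $p_j$. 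The upshot should be that $S$ omits at least $e - 1$ of these vertices, and together with the omission of $x$ itself this yields $|S| \le (n-1) - (e-1) = n - e$.

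The main obstacle I expect is making the upper bound argument airtight, because $x$-visibility only requires the existence of \emph{some} shortest $x,p_j$-path avoiding $S\setminus\{p_j\}$, not that every such path avoids it; so the vertices of the single chosen path $P$ need not all block each other. The cleaner route is probably to invoke Lemma~\ref{lem:mdv}: it guarantees a $v_x$-set $S$ containing no stress vertex for $x$. Any vertex $p_i$ with $1 \le i \le e-1$ that lies on every shortest $x,z$-path would be a stress vertex (taking $z$ as the maximally distant witness), so the real work is to exhibit, for the upper bound, a single shortest $x,z$-path whose first $e-1$ vertices are all stress vertices, or alternatively to show directly that at most one vertex on any fixed geodesic to an eccentric vertex can be visible. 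I would therefore lead with the layer-partition lower bound, which is routine, and devote the bulk of the write-up to the geodesic/stress-vertex argument for the upper bound, finishing with sharpness examples (paths for the lower bound and, say, complete graphs or stars calibrated so that both $(n-1)/e$ and $n-e$ are attained).
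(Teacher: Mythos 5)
Your lower bound is correct and is exactly the paper's argument: each distance layer $N_i(x)$ is an $x$-visibility set, and averaging over the $\ecc_G(x)$ layers gives the bound. The gap is in the upper bound, and it is genuine: you correctly flagged the difficulty, but both repairs you propose fail. Take $C_4$ with vertices $x,p_1,z,p_2$ in cyclic order, so $z$ is the unique eccentric vertex of $x$ and $e=\ecc_G(x)=2$. The set $S=\{p_1,z\}$ is an $x$-visibility set, because $z$ is seen along $x,p_2,z$; hence the fixed geodesic $x,p_1,z$ meets $S$ in \emph{two} vertices, refuting the claim that ``at most one vertex on any fixed geodesic to an eccentric vertex can be visible.'' Moreover, $x$ has no stress vertices at all in $C_4$ (neither $p_1$ nor $p_2$ lies on every shortest $x,z$-path), so Lemma~\ref{lem:mdv} only yields $v_x \le n(G)-0-1 = 3$, strictly weaker than $n(G)-e=2$, and there is no shortest $x,z$-path whose internal vertices are stress vertices. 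So the ``real work'' you defer to---exhibiting such a path---is impossible in general, and the stress-vertex route cannot give the bound.

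The missing idea is to let the geodesic depend on $S$ rather than fixing it in advance. This is what the paper does: among all shortest $x,z$-paths, choose $P\colon x=x_0,x_1,\dots,x_e=z$ with $|V(P)\cap S|$ minimum, and run an exchange argument. If $x_i,x_j\in S\cap V(P)$ with $i<j$, then since $S$ is an $x$-visibility set there is a shortest $x,x_j$-path $Q$ with $V(Q)\cap S=\{x_j\}$; the concatenation of $Q$ with the $x_j,z$-subpath of $P$ is again a shortest $x,z$-path, and it contains strictly fewer vertices of $S$ than $P$ (it keeps only the $S$-vertices of $P$ with index at least $j$, drops $x_i$, and $Q$ contributes none besides $x_j$). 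This contradicts the minimality of $P$, so $|V(P)\cap S|\le 1$; since $x\notin S$ and $|V(P)|=e+1$, at least $e$ vertices of $G$ lie outside $S$, giving $|S|\le n(G)-e$. This argument needs neither stress vertices nor Lemma~\ref{lem:mdv}. Your sharpness examples are fine (for the star with $x$ its center, both bounds equal $n(G)-1$, which is what the paper uses).
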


\begin{proof}
Let $S$ be a $v_x$-set, and let $y$ be an eccentric vertex of $x$ in $G$. Let $P:x=x_0,x_1,\dots,x_k=y$ be a shortest $x,y$-path selected such that $|V(P)\cap S|$ is smallest possible. Suppose that $S$ contains more than one vertex from $P$, say $x_i$ and $x_j$, where $i<j$. Since $S$ is an $x$-visibility set, there exists a shortest $x,x_j$-path $Q$ such that $V(Q)\cap S=\{x_j\}$. This in turn implies that the path $Q$ followed by the $x_j,y$-subpath of $P$ is a shortest $x,y$-path which contains fewer vertices from $S$ than $P$, a contradiction to the choice of $P$. Hence $|V(P)\cap S| \le 1$ which in turn yields $v_x(G)\leq n(G)-\ecc_G(x)$. For the lower bound, let $N_i(x)$, $i\in [\ecc_G(x)]$, be the the set of all vertices at distance exactly $i$ from $x$. Then each of these is an $x$-visibility set, and one of them must have order at least $(n(G)-1)/ \ecc_G(x)$.

If $x$ is the universal vertex in a star graph, then the lower and the upper bound coincide.
\qed
\end{proof}

\section{Cartesian products}
\label{sec:Cart}	

In this section, we consider the Cartesian product graphs and first prove general lower and upper bounds on their vertex visibility number. Next, we focus on square grids ($P_n\cp P_n)$, square prisms ($P_n\cp C_n$), and square toruses ($C_n\cp C_n$), and determine exact values for the vertex visibility number in all three cases. The general bounds read as follows. 

\begin{proposition}\label{prop:Cart}
If $G$ and $H$ are graphs with $n(G) \geq n(H)$, then 
$$\max \{\Delta(G)n(H), \Delta(H)n(G)\} \leq \vv(G \cp H) \leq (n(G)-1)n(H)\,.$$ 
Both bounds are sharp, in particular, $\vv(K_m\cp K_n) =mn-m$ for $m\geq n\geq 2$. 
\end{proposition}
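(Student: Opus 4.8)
The plan is to establish the three claims separately: the lower bound, the upper bound, and the exact value for $K_m \cp K_n$, which will simultaneously witness sharpness.

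For the lower bound, I would exploit the layer structure of the Cartesian product. Fix a vertex $g_0 \in V(G)$ of maximum degree $\Delta(G)$, and consider a vertex $x = (g_0, h_0)$ in $G \cp H$. The idea is that within the $G$-layer $G^{h_0}$, the $\Delta(G)$ neighbors of $g_0$ are all adjacent to $x$ and thus trivially $x$-visible; but we can do better by using the product structure. For each of the $n(H)$ vertices $h \in V(H)$, the projection to the $G$-layer $G^h$ should allow us to find $\Delta(G)$ vertices that are $x$-visible, because shortest paths in $G \cp H$ decompose coordinatewise (distances add: $d_{G\cp H}((g,h),(g',h')) = d_G(g,g') + d_H(h,h')$). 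Concretely, I would take the set of vertices $\{(g, h) : g \in N_G(g_0),\ h \in V(H)\}$ of size $\Delta(G)\, n(H)$, and for each such $(g,h)$ build a shortest $x,(g,h)$-path that first moves in the $G$-coordinate from $g_0$ to $g$ (one edge, staying in layer $h_0$) and then moves in the $H$-coordinate; one must check the chosen set meets each such path only in its endpoint. By symmetry (swapping the roles of $G$ and $H$), one also gets $\Delta(H)\, n(G)$, and taking the maximum yields the bound. The verification that these paths avoid the set except at the target is the step requiring care.

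For the upper bound $\vv(G\cp H) \le (n(G)-1)n(H)$, I would invoke Theorem~\ref{thm:equivalnet-problem}: $\vv(G\cp H)$ equals the maximum number of leaves over all shortest-path trees of $G\cp H$. A shortest-path tree is a spanning tree, and an elementary fact is that any tree on $N$ vertices has at most $N-1$ leaves, with the root being an internal (non-leaf) vertex when $n \ge 3$. Here $N = n(G)n(H)$. To sharpen $N-1$ down to $(n(G)-1)n(H) = n(G)n(H) - n(H)$, I would argue that a shortest-path tree rooted at $x$ must contain at least $n(H)$ internal (non-leaf) vertices. The natural candidates are the vertices in the $H$-layer through $x$, or more robustly, the stress vertices: invoking Lemma~\ref{lem:mdv}, $v_x(G\cp H) \le n(G\cp H) - \str_{G\cp H}(x) - 1$, so it suffices to show $\str_{G\cp H}(x) \ge n(H) - 1$ for a well-chosen $x$. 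I expect this counting of forced internal vertices to be the main obstacle, since one must identify $n(H)$ vertices that cannot simultaneously be leaves of any shortest-path tree.

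Finally, for $\vv(K_m \cp K_n) = mn - m$ with $m \ge n \ge 2$, the upper bound follows immediately from the general upper bound with $G = K_m$, $H = K_n$ (since $n(G)-1 = m-1$ and $n(H) = n$ give $(m-1)n = mn - m$... wait, I would instead note $\vv \le (n(G)-1)n(H) = (m-1)n$, which matches $mn-m$ only when the layers are oriented correctly), so I would orient the product so that the larger factor $K_m$ plays the role giving $(m-1)\cdot n$; and since $\Delta(K_m)n(K_n) = (m-1)n = mn-m$, the lower bound already forces equality. Thus the lower and upper bounds coincide for $K_m \cp K_n$, establishing both the exact value and the sharpness of both general bounds in one stroke. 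The only subtlety is checking $K_m \cp K_n$ has no universal vertex (true for $m,n \ge 2$) and confirming $\Delta(K_m \cp K_n) = (m-1)+(n-1)$ does not accidentally give a larger lower bound via equation~\eqref{eq:general}; one verifies $(m-1)n \ge (m-1)+(n-1)$ for $n \ge 2$, so no conflict arises.
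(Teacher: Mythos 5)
Your upper-bound route has a genuine gap, in two respects. First, the logic is inverted: since $\vv(G\cp H)=\max_x v_x(G\cp H)$, an upper bound on $\vv$ must be proved for \emph{every} root $x$, so establishing $\str_{G\cp H}(x)\ge n(H)-1$ for ``a well-chosen $x$'' would bound only that one $v_x$ and say nothing about the maximum. Second, even quantified over all $x$, the inequality you need is false: in a Cartesian product of two connected graphs each of order at least $2$, every maximally distant vertex $z$ of $x$ differs from $x$ in both coordinates (otherwise $z$ has a neighbour in the other factor's direction that is farther from $x$), and for such $z$ the ``$G$-first'' and ``$H$-first'' shortest $x,z$-paths are internally disjoint; hence no vertex lies on all shortest $x,z$-paths and $\str_{G\cp H}(x)=0$. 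This is exactly the situation in $K_m\cp K_n$, where Lemma~\ref{lem:mdv} then yields only $v_x\le n(G)n(H)-1$. So the ``counting of forced internal vertices'' that you defer as the expected obstacle is not a technicality -- it is the entire content of the bound, and your proposed tool cannot deliver it. The paper needs none of this machinery: if $|S|>(n(G)-1)n(H)$, the complement of $S$ has fewer than $n(H)\le n(G)$ vertices, so by pigeonhole some $G$-layer $G^{h'}$ and some $H$-layer ${}^{g'}H$ lie entirely inside $S$; since the root $(g,h)\notin S$, it differs from $(g',h')$ in both coordinates, and the penultimate vertex of any shortest $(g,h),(g',h')$-path lies in $G^{h'}\cup {}^{g'}H\subseteq S$, so $(g',h')\in S$ is not $S$-visible -- a contradiction.

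There are two further slips. For the lower bound you choose the same set as the paper, $S=N_G(g_0)\times V(H)$, but your certificate paths are wrong: moving first along the $G$-edge to $(g,h_0)$ and then in the $H$-coordinate keeps every internal vertex in $\{g\}\times V(H)\subseteq S$; already for $h$ adjacent to $h_0$ the internal vertex $(g,h_0)$ lies in $S$, so the check you yourself flag as ``the step requiring care'' fails. The order must be reversed: walk inside the layer ${}^{g_0}H$ from $(g_0,h_0)$ to $(g_0,h)$ (this layer is disjoint from $S$) and finish with the single edge to $(g,h)$. Finally, on $K_m\cp K_n$ your structural conclusion -- the two bounds coincide, which simultaneously gives the exact value and sharpness -- is precisely the paper's argument, but the identity you invoke, $(m-1)n=mn-m$, is false: both bounds equal $(m-1)n=mn-n$ when $m\ge n$. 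The figure $mn-m$ in the statement is evidently a typo in the paper itself (for $m>n$ one checks directly, e.g.\ in $K_3\cp K_2$, that there is a visibility set of size $mn-n>mn-m$), and your write-up papers over the discrepancy with wrong arithmetic instead of flagging it.
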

	
\begin{proof}
Let $g \in V(G)$ be such that $\deg_G(g) = \Delta(G)$ and let $h$ be an arbitrary vertex of $H$. Then $N_G(x) \times V(H)$ is a $(g,h)$-visibility set of $G \cp H$ with cardinality $\Delta(G)n(H)$. By the commutativity of the Cartesian product operation, there also exists a $(g',h')$-visibility set of $G \cp H$ with cardinality $\Delta(H)n(G)$. Thus, $\vv(G \cp H) \geq \max\{\Delta(G)n(H), \Delta(H)n(G)\}$.

To prove that $\vv(G \cp H) \leq (n(G)-1)n(H)$ suppose on the contrary that $G\cp H$ contains a $(g,h)$-visibility set $S$ with $|S| > (n(G)-1)n(H)$. Then there exist $g'\in V(G)$ and $h'\in V(H)$ such that $V(G^{h'}) \subseteq S$ and $V(^{g'}H) \subseteq S$. But then the vertex $(g',h') \in S$ is not $S$-visible from $(g,h)$, a contradiction. 

If $G$ has a universal vertex, then the lower and the upper bound coincide. A particular case of this situation is the product $K_n\cp K_m$. 
\qed
\end{proof}

Next, we will cover each of square grids, square prisms, and square toruses in their own subsections. 

\subsection{Square grids}
\label{subsec:grids}	

In this subsection we determine the vertex visibility number of square grids.    
    
\begin{theorem}\label{vv-sqauregrid}
If $n \ge 4$, then 
$$\vv(P_n \cp P_n) = \frac{n^2+n-2}{2}\,.$$
\end{theorem}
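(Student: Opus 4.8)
The plan is to invoke Theorem~\ref{thm:equivalnet-problem}, which reduces the computation of $\vv(P_n\cp P_n)$ to finding a shortest-path tree with the largest number of leaves. Write the vertices as $(i,j)$ with $1\le i,j\le n$, fix a root $x$, and partition $V$ into distance layers $N_d=\{v:\ d_G(x,v)=d\}$. In any shortest-path tree rooted at $x$ the parent of a vertex of $N_{d+1}$ is one of its neighbours in $N_d$, and a vertex of $N_d$ is a leaf precisely when no vertex of $N_{d+1}$ selects it as a parent. Since the parent choices in distinct layers are independent, the minimum possible number of internal (non-leaf) vertices equals $\sum_{d\ge 0}\gamma_d(x)$, where $\gamma_d(x)$ is the minimum number of vertices of $N_d$ needed to dominate $N_{d+1}$ along the edges going up one layer. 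Consequently $\vv(P_n\cp P_n)=n^2-\min_x\sum_{d\ge0}\gamma_d(x)$, and the theorem amounts to evaluating this minimum as $\tfrac{n^2-n+2}{2}$.

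For the lower bound I would take the near-corner root $x=(2,2)$ and exhibit, layer by layer, dominating sets realizing $\sum_{d\ge0}\gamma_d(x)=\tfrac{n^2-n+2}{2}$, so that the associated tree has exactly $\tfrac{n^2+n-2}{2}$ leaves. The reason for placing $x$ one step off the corner is that two of the four arms of the \emph{cross} through $x$ (the row and column containing $x$) then have length one, which sharply reduces the number of vertices forced to be internal; one writes down $\gamma_d(x)$ explicitly and checks that the running total equals $(n^2-n+2)/2$. The layer sizes follow a regular (clipped-diamond) pattern, so once the small cases are checked the general count is a routine, if slightly tedious, induction on $d$.

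The upper bound is the crux: I must show $\sum_{d\ge0}\gamma_d(x)\ge \tfrac{n^2-n+2}{2}$ for \emph{every} root $x$. The naive estimate, that each internal vertex has at most $\Delta=4$ children with only the cross vertices able to have three, is not good enough, as it yields only the weaker bound $\tfrac{n^2+2n-3}{2}$ on the number of leaves; the extra $\tfrac{n-1}{2}$ savings must come from the geometry. The key structural fact I would exploit is that every non-root vertex on the cross through $x$ has down-degree exactly one, so its unique down-neighbour is forced to be internal, pinning down a skeleton of forced internal vertices independently of the tree; meanwhile, inside each of the four off-cross quadrants the layers behave like those of a corner-rooted subgrid, where domination again costs essentially half of each layer. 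Combining a layer-size count with these forced vertices should give $\gamma_d(x)\ge \big(|N_{d+1}|-c_d\big)/2$, with a correction $c_d$ controlled by how the cross meets $N_d$, and summing over $d$ yields the bound. The main obstacle is making this uniform over all root positions: the optimum is attained at the unexpected near-corner vertex rather than at a corner or the centre, so the argument cannot simply fix a convenient $x$ and must track the interaction between the cross and the clipped diamond-shaped layers for an arbitrary $x$.
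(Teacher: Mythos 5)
Your reduction via Theorem~\ref{thm:equivalnet-problem} is sound: writing $\vv(P_n\cp P_n)=n^2-\min_x\sum_{d\ge 0}\gamma_d(x)$, where $\gamma_d(x)$ is the minimum number of vertices of the layer $N_d$ needed to dominate $N_{d+1}$, is a correct reformulation, since parent choices in distinct layers are indeed independent. Moreover, your two structural observations (all cross vertices except the four arm-ends are forced internal; each quadrant layer costs roughly half) are exactly the complements of what the paper proves for visibility sets, namely that $S\cap(X\cup Y)$ is contained in the four arm-ends and that each quadrant diagonal $D_{pq}$ contains at most $\lceil |D_{pq}|/2\rceil$ vertices of $S$; even your choice of the near-corner root $(2,2)$ for the construction is the paper's choice. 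So this is not a different route but the paper's argument transported through Theorem~\ref{thm:equivalnet-problem} --- which would be fine, except that it is not carried out.

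The genuine gap is the upper bound, which you yourself call the crux and then leave open: the inequality $\sum_{d\ge 0}\gamma_d(x)\ge \frac{n^2-n+2}{2}$ for \emph{every} root $x$ is never proved. You posit $\gamma_d(x)\ge (|N_{d+1}|-c_d)/2$ with an unspecified correction $c_d$ and assert that summation ``should'' give the bound, while conceding you cannot make this uniform over all roots. This cannot be waved through, because the ``forced cross plus half of each diagonal'' count is off by an additive constant: the four arm-ends are not forced internal, and a single arm-end can dominate vertices in two different quadrants (for $n=4$ and root $(2,2)$, the arm-end $(2,4)$ dominates both $(1,4)$ and $(3,4)$), so whether $0$, $1$, or $2$ arm-ends can remain leaves depends delicately on the parities of the diagonal lengths. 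This interaction is precisely where the paper spends its effort: it normalizes the root to $(u_{i+1},v_{j+1})$ with $0\le i\le j\le\lfloor (n-1)/2\rfloor$ by symmetry, computes the per-quadrant diagonal sums in all parity cases, and then shows that any maximal packing of the diagonals forces the end vertices of certain diagonals into $S$, which in turn excludes the adjacent arm-ends, yielding $|S\cap(X\cup Y)|\in\{0,1,2\}$ and the exact total $\frac{n^2+n-2}{2}$. That analysis is the content your $c_d$'s would have to carry, and it is missing; likewise the lower-bound construction at $(2,2)$ is claimed (``routine induction'') but not verified. As it stands, the proposal is a correct plan with the decisive step unproven, not a proof.
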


\begin{proof}
Let $V(P_n \cp P_n) = \{(u_k,v_l): k,l \in [n]\}$. Consider an arbitrary vertex $x = (u_{i+1},v_{j+1}) \in V(P_n \cp P_n)$, where we may assume without loss of generality that $0 \leq i \leq j \leq \lfloor \frac{n-1}{2} \rfloor$. Let $X = V(^{u_{i+1}}P_n)$  and $Y = V(P_n^{v_{j+1}})$. Set further
\begin{align*}
Q_1 & = \{(u_k,v_l):\ k < i+1, l < j+1\}, \\
Q_2 & = \{(u_k,v_l):\ k < i+1, l > j+1\}, \\
Q_3 & = \{(u_k,v_l):\ k > i+1, l > j+1\}, \\
Q_4 & = \{(u_k,v_l):\ k > i+1, l < j+1\}. 
\end{align*}
Let $S$ be a $v_x$-set in $P_n \cp P_n$ that satisfies the properties given in Lemma \ref{lem:mdv} and has the maximum number of vertices from $\bigcup_{t=1}^{4} Q_t$ among its choices.

For $(u_k,v_l) \in Q_1$, a shortest $(u_{i+1},v_{j+1}),(u_k, v_l)$-path passes through either $(u_k,v_{l+1})$ or $(u_{k+1},v_l)$, and hence these two vertices cannot be simultaneously in $S$. A similar argument is valid for $(u_k,v_l) \in Q_t$, $t \in [4]$. Now, partition each $Q_t$ into diagonals $D_{pq}$, where $p, q \in Q_t$ are its end vertices, as illustrated in Figure~\ref{fig:grid}. The rectangles indicate the sets $Q_i$, $i\in [4]$, and the doted lines indicate the diagonals $D_{pq}$. 

\begin{figure}[ht!]
	\begin{center}
		\begin{tikzpicture}[scale=0.5,style=thick,x=2cm,y=2cm]
			\node[circle, draw, fill = black](a) at (0,0) {};
            \node at (a.west) [left=8pt] {\footnotesize $u_1$};
            \node at (a.south) [below=8pt] {\footnotesize $v_1$};
			\node[circle, draw] (b) at (1,0) {};
            \node at (b.south) [below=8pt] {\footnotesize $v_2$};
			\node[circle, draw, fill = black] (c) at (2,0) {};
            \node at (c.south) [below=8pt] {\footnotesize $v_3$};
			\node[circle, draw, fill = black] (d) at (3,0) {}; 
            \node at (d.south) [below=8pt] {\footnotesize $v_4$};
			\node[circle, draw, fill = black] (e) at (4,0) {}; 
            \node at (e.south) [below=8pt] {\footnotesize $v_5$};
			\node[circle, draw, fill = black] (f) at (5,0) {}; 
            \node at (f.south) [below=8pt] {\footnotesize $v_6$};
			\draw (a) -- (b) -- (c) -- (d) -- (e) -- (f);
			\node[circle, draw, fill = black] (a1) at (0,1) {};
            \node at (a1.west) [left=8pt] {\footnotesize $u_2$};
			\node[circle, draw] (b1) at (1,1) {$x$};
			\node[circle, draw] (c1) at (2,1) {};
			\node[circle, draw] (d1) at (3,1) {}; 
			\node[circle, draw] (e1) at (4,1) {};  
			\node[circle, draw] (f1) at (5,1) {};  
			\draw (a1) -- (b1) -- (c1) -- (d1) -- (e1) -- (f1); 
			\node[circle, draw, fill = black] (a2) at (0,2) {};
            \node at (a2.west) [left=8pt] {\footnotesize $u_3$};
			\node[circle, draw] (b2) at (1,2) {};
			\node[circle, draw, fill = black] (c2) at (2,2) {};
			\node[circle, draw] (d2) at (3,2) {}; 
			\node[circle, draw, fill = black] (e2) at (4,2) {};  
			\node[circle, draw] (f2) at (5,2) {};  
			\draw (a2) -- (b2) -- (c2) -- (d2) -- (e2) -- (f2);
			\node[circle, draw, fill = black] (a3) at (0,3) {};
            \node at (a3.west) [left=8pt] {\footnotesize $u_4$};
			\node[circle, draw] (b3) at (1,3) {};
			\node[circle, draw, fill = black] (c3) at (2,3) {};
			\node[circle, draw] (d3) at (3,3) {}; 
			\node[circle, draw, fill = black] (e3) at (4,3) {};  
			\node[circle, draw, fill = black] (f3) at (5,3) {};  
			\draw (a3) -- (b3) -- (c3) -- (d3) -- (e3) -- (f3);
			\node[circle, draw, fill = black] (a4) at (0,4) {};
            \node at (a4.west) [left=8pt] {\footnotesize $u_5$};
			\node[circle, draw] (b4) at (1,4) {};
			\node[circle, draw, fill = black] (c4) at (2,4) {};
			\node[circle, draw] (d4) at (3,4) {}; 
			\node[circle, draw] (e4) at (4,4) {};  
			\node[circle, draw] (f4) at (5,4) {};  
			\draw (a4) -- (b4) -- (c4) -- (d4) -- (e4) -- (f4);
			\node[circle, draw, fill = black] (a5) at (0,5) {};
            \node at (a5.west) [left=8pt] {\footnotesize $u_6$};
			\node[circle, draw] (b5) at (1,5) {};
			\node[circle, draw, fill = black] (c5) at (2,5) {};
			\node[circle, draw, fill = black] (d5) at (3,5) {}; 
			\node[circle, draw, fill = black] (e5) at (4,5) {};  
			\node[circle, draw, fill = black] (f5) at (5,5) {};  
			\draw (a5) -- (b5) -- (c5) -- (d5) -- (e5) -- (f5);
			\draw (a) -- (a1) -- (a2) -- (a3) -- (a4) -- (a5);
			\draw (b) -- (b1) -- (b2) -- (b3) -- (b4) -- (b5);
			\draw (c) -- (c1) -- (c2) -- (c3) -- (c4) -- (c5);
			\draw (d) -- (d1) -- (d2) -- (d3) -- (d4) -- (d5);
			\draw (e) -- (e1) -- (e2) -- (e3) -- (e4) -- (e5);
			\draw (f) -- (f1) -- (f2) -- (f3) -- (f4) -- (f5);

			\node[circle, draw, fill = black] (a') at (7,0) {};
            \node at (a'.west) [left=8pt] {\footnotesize $u_1$};
            \node at (a'.south) [below=8pt] {\footnotesize $v_1$};
			\node[circle, draw] (b') at (8,0) {};
            \node at (b'.south) [below=8pt] {\footnotesize $v_2$};
			\node[circle, draw, fill = black] (c') at (9,0) {};
            \node at (c'.south) [below=8pt] {\footnotesize $v_3$};
			\node[circle, draw, fill = black] (d') at (10,0) {}; 
            \node at (d'.south) [below=8pt] {\footnotesize $v_4$};
			\node[circle, draw, fill = black] (e') at (11,0) {};  
            \node at (e'.south) [below=8pt] {\footnotesize $v_5$};
			\node[circle, draw, fill = black] (f') at (12,0) {};  
            \node at (f'.south) [below=8pt] {\footnotesize $v_6$};
			\node[circle, draw, fill = black] (g') at (13,0) {};
            \node at (g'.south) [below=8pt] {\footnotesize $v_7$};
			\draw (a') -- (b') -- (c') -- (d') -- (e') -- (f') -- (g');
			\node[circle, draw, fill = black] (a1') at (7,1) {};
            \node at (a1'.west) [left=8pt] {\footnotesize $u_2$};
			\node[circle, draw] (b1') at (8,1) {$x$};
			\node[circle, draw] (c1') at (9,1) {};
			\node[circle, draw] (d1') at (10,1) {}; 
			\node[circle, draw] (e1') at (11,1) {};  
			\node[circle, draw] (f1') at (12,1) {};  
			\node[circle, draw] (g1') at (13,1) {};
			\draw (a1') -- (b1') -- (c1') -- (d1') -- (e1') -- (f1') -- (g1'); 
			\node[circle, draw, fill = black] (a2') at (7,2) {};
            \node at (a2'.west) [left=8pt] {\footnotesize $u_3$};
			\node[circle, draw] (b2') at (8,2) {};
			\node[circle, draw, fill = black] (c2') at (9,2) {};
			\node[circle, draw] (d2') at (10,2) {}; 
			\node[circle, draw, fill = black] (e2') at (11,2) {};  
			\node[circle, draw] (f2') at (12,2) {};  
			\node[circle, draw, fill = black] (g2') at (13,2) {};
			\draw (a2') -- (b2') -- (c2') -- (d2') -- (e2') -- (f2') -- (g2');
			\node[circle, draw, fill = black] (a3') at (7,3) {};
            \node at (a3'.west) [left=8pt] {\footnotesize $u_4$};
			\node[circle, draw] (b3') at (8,3) {};
			\node[circle, draw, fill = black] (c3') at (9,3) {};
			\node[circle, draw] (d3') at (10,3) {}; 
			\node[circle, draw, fill = black] (e3') at (11,3) {};  
			\node[circle, draw] (f3') at (12,3) {};  
			\node[circle, draw] (g3') at (13,3) {};
			\draw (a3') -- (b3') -- (c3') -- (d3') -- (e3') -- (f3') -- (g3');
			\node[circle, draw, fill = black] (a4') at (7,4) {};
            \node at (a4'.west) [left=8pt] {\footnotesize $u_5$};
			\node[circle, draw] (b4') at (8,4) {};
			\node[circle, draw, fill = black] (c4') at (9,4) {};
			\node[circle, draw] (d4') at (10,4) {}; 
			\node[circle, draw, fill = black] (e4') at (11,4) {};  
			\node[circle, draw, fill = black] (f4') at (12,4) {};  
			\node[circle, draw, fill = black] (g4') at (13,4) {};
			\draw (a4') -- (b4') -- (c4') -- (d4') -- (e4') -- (f4') -- (g4');
			\node[circle, draw, fill = black] (a5') at (7,5) {};
            \node at (a5'.west) [left=8pt] {\footnotesize $u_6$};
			\node[circle, draw] (b5') at (8,5) {};
			\node[circle, draw, fill = black] (c5') at (9,5) {};
			\node[circle, draw] (d5') at (10,5) {}; 
			\node[circle, draw] (e5') at (11,5) {};  
			\node[circle, draw] (f5') at (12,5) {};
			\node[circle, draw] (g5') at (13,5) {};  
			\draw (a5') -- (b5') -- (c5') -- (d5') -- (e5') -- (f5') -- (g5');
			\node[circle, draw, fill = black] (a6') at (7,6) {};
            \node at (a6'.west) [left=8pt] {\footnotesize $u_7$};
			\node[circle, draw] (b6') at (8,6) {};
			\node[circle, draw, fill = black] (c6') at (9,6) {};
			\node[circle, draw, fill = black] (d6') at (10,6) {}; 
			\node[circle, draw, fill = black] (e6') at (11,6) {};  
			\node[circle, draw, fill = black] (f6') at (12,6) {};
			\node[circle, draw, fill = black] (g6') at (13,6) {};  
			\draw (a6') -- (b6') -- (c6') -- (d6') -- (e6') -- (f6') -- (g6');

			\draw (a') -- (a1') -- (a2') -- (a3') -- (a4') -- (a5') -- (a6');
			\draw (b') -- (b1') -- (b2') -- (b3') -- (b4') -- (b5') -- (b6');
			\draw (c') -- (c1') -- (c2') -- (c3') -- (c4') -- (c5') -- (c6');
			\draw (d') -- (d1') -- (d2') -- (d3') -- (d4') -- (d5') -- (d6');
			\draw (e') -- (e1') -- (e2') -- (e3') -- (e4') -- (e5') -- (e6');
			\draw (f') -- (f1') -- (f2') -- (f3') -- (f4') -- (f5') -- (f6');
			\draw (g') -- (g1') -- (g2') -- (g3') -- (g4') -- (g5') -- (g6');

\draw (-0.5,-0.5) rectangle (0.5,0.5);
\draw (-0.5,1.5) rectangle (0.5,5.5);
\draw (1.5,-0.5) rectangle (5.5,0.5);
\draw (1.5,1.5) rectangle (5.5,5.5);

\draw[dotted] (-0.5,0.5) -- (0.5,-0.5);

\draw[dotted] (2.5,0.5) -- (1.5,-0.5);
\draw[dotted] (3.5,0.5) -- (2.5,-0.5);
\draw[dotted] (4.5,0.5) -- (3.5,-0.5);
\draw[dotted] (5.5,0.5) -- (4.5,-0.5);

\draw[dotted] (-0.5,1.5) -- (0.5,2.5);
\draw[dotted] (-0.5,2.5) -- (0.5,3.5);
\draw[dotted] (-0.5,3.5) -- (0.5,4.5);
\draw[dotted] (-0.5,4.5) -- (0.5,5.5);

\draw[dotted] (1.5,2.5) -- (2.5,1.5);
\draw[dotted] (1.5,3.5) -- (3.5,1.5);
\draw[dotted] (1.5,4.5) -- (4.5,1.5);
\draw[dotted] (1.5,5.5) -- (5.5,1.5);
\draw[dotted] (2.5,5.5) -- (5.5,2.5);
\draw[dotted] (3.5,5.5) -- (5.5,3.5);
\draw[dotted] (4.5,5.5) -- (5.5,4.5);

\draw (6.5,-0.5) rectangle (7.5,0.5);
\draw (6.5,1.5) rectangle (7.5,6.5);
\draw (8.5,-0.5) rectangle (13.5,0.5);
\draw (8.5,1.5) rectangle (13.5,6.5);

\draw[dotted] (6.5,0.5) -- (7.5,-0.5);

\draw[dotted] (9.5,0.5) -- (8.5,-0.5);
\draw[dotted] (10.5,0.5) -- (9.5,-0.5);
\draw[dotted] (11.5,0.5) -- (10.5,-0.5);
\draw[dotted] (12.5,0.5) -- (11.5,-0.5);
\draw[dotted] (13.5,0.5) -- (12.5,-0.5);

\draw[dotted] (6.5,1.5) -- (7.5,2.5);
\draw[dotted] (6.5,2.5) -- (7.5,3.5);
\draw[dotted] (6.5,3.5) -- (7.5,4.5);
\draw[dotted] (6.5,4.5) -- (7.5,5.5);
\draw[dotted] (6.5,5.5) -- (7.5,6.5);

\draw[dotted] (8.5,2.5) -- (9.5,1.5);
\draw[dotted] (8.5,3.5) -- (10.5,1.5);
\draw[dotted] (8.5,4.5) -- (11.5,1.5);
\draw[dotted] (8.5,5.5) -- (12.5,1.5);
\draw[dotted] (8.5,6.5) -- (13.5,1.5);

\draw[dotted] (9.5,6.5) -- (13.5,2.5);
\draw[dotted] (10.5,6.5) -- (13.5,3.5);
\draw[dotted] (11.5,6.5) -- (13.5,4.5);
\draw[dotted] (12.5,6.5) -- (13.5,5.5);
		\end{tikzpicture}
		\caption{vv-sets of $P_6 \cp P_6$ and $P_7 \cp P_7$.}
		\label{fig:grid}
	\end{center}
\end{figure}

By the above argument, we can have at most $\lceil \frac{|D_{pq}|}{2} \rceil$ vertices in $S \cap D_{pq}$. The diagonals $D_{pq}$ in $Q_1$ can be explicitly represented as $D_a = \{(u_k,v_l): k + l = a\}$, $a \in [i+j] \setminus \{1\}$. Thus if $i$ is even, then
$$|S \cap Q_1| = \sum_{a \in [i+j]\setminus \{1\}}|S \cap D_a| \leq 4\left(1+2+\cdots+\left\lfloor\frac{i-1}{2}\right\rfloor\right) + (j-i+3)\left\lceil\frac{i}{2}\right\rceil,$$ 
and if $i$ is odd, 
$$|S \cap Q_1| = \sum_{a \in [i+j]\setminus \{1\}}|S \cap D_a| \leq 4\left(1+2+\cdots+\left\lfloor\frac{i-1}{2}\right\rfloor\right) + (j-i+1)\left\lceil\frac{i}{2}\right\rceil.$$ 
On simplification we get, 
\begin{flalign*}
 |S \cap Q_1| \leq \begin{cases}
      \frac{i(j+1)}{2}; & \text{ $i$ even},\\
      \frac{(i+1)j}{2}; & \text{ $i$ odd}.  
    \end{cases} \label{}
    \end{flalign*}

A similar computation can be done for each $Q_t, t \in [4]$. It may be noted that by our choice of $i$ and $j$ we have, $i \leq j \leq n - j - 1 \leq n - i - 1$. Consequently we obtain the following inequalities:
\begin{flalign*}
    |S \cap Q_2| \leq \begin{cases}
      \frac{i(n-j)}{2}; & \text{ $i$ even},\\
      \frac{(i+1)(n-j-1)}{2}; & \text{ $i$ odd},  
    \end{cases} \label{} \\
    |S \cap Q_3| \leq \begin{cases}
      \frac{(n-j-1)(n-i)}{2}; & \text{ $n-j-1$ even},\\
      \frac{((n-j)(n-i-1)}{2}; & \text{ $n-j-1$ odd},  
    \end{cases} \label{} \\
    |S \cap Q_4| \leq \begin{cases}
      \frac{j(n-i)}{2}; & \text{ $j$ even},\\
      \frac{(j+1)(n-i-1)}{2}; & \text{ $j$ odd}.  
    \end{cases} \label{}  
\end{flalign*}

By the choice of $S$, every vertex must be $S$-visible from $(u_{i+1},v_{j+1})$ and hence $S \cap (X \cup Y) \subseteq \{(u_{i+1},v_1), (u_{i+1},v_n), (u_1,v_{j+1}), (u_n,v_{j+1})\}$. But, $(u_{i+1},v_1)$ can be in $S$ only if both $(u_i,v_2)$ and $(u_{i+2},v_2)$ are not in $S$ since $(u_i,v_1)$ and $(u_{i+2},v_1)$ must be $S$-visible from $(u_{i+1},v_{j+1})$. A similar argument holds for $ (u_{i+1},v_n)$, $(u_1,v_{j+1})$, and $(u_n,v_{j+1})$. Also, note that in $Q_t$, at least one end vertex of $D_{pq}$ must be in $S$ if $|D_{pq}|$ is even, and both end vertices of $D_{pq}$ must be in $S$ if $|D_{pq}|$ is odd for $S \cap Q_t$ to attain the maximum cardinality. Since $i \leq j$ we get that  
\begin{flalign*}
\text{in $Q_1$}:\  |D_{p(u_2,v_j
)}| = i-1  \text{ and } |D_{(u_i,v_2)q}| = \begin{cases}
       i - 1; &  i=j,\\
       i; & i < j.
   \end{cases}  \label{}
\end{flalign*} 
A similar computation can be done in $Q_2$, $Q_3$, and $Q_4$,  to obtain the following respective cardinalities: 
\begin{flalign*}
|D_{p(u_2,v_{j+2}
)}| & = i-1   \text{ and } |D_{(u_{i},v_{n-1})q}| = \begin{cases}
       i - 1; & i=n-j-1,\\
       i; &  i < n-j-1.
   \end{cases} \label{} \\
 |D_{p(u_{i},v_{n+1})}| & = n-j-2  \text{ and } |D_{(u_{n-1},v_{j+2})q}| = \begin{cases}
       n-j-2;& i=j,\\
       n-j-1; & i < j.
   \end{cases} \label{} \\
|D_{p(u_{i+2},v_{2})}| & = j-1  \text{ and } |D_{(u_{n-1},v_{j})q}| = \begin{cases}
       j - 1; & j=n-i-1,\\
       j; & j < n-i-1.
   \end{cases}  \label{}
\end{flalign*}
Then, since we have assumed that $S$ has the maximum number of vertices from  $\bigcup_{t=1}^{4} Q_t$, we infer that $|S \cap (X \cup Y)|$ equals $0$, $1$, or $2$, according to $n$, $i$, and $j$ being even or odd. Considering all possible combinations, we get the maximum possible cardinality as $\frac{n^2+n-2}{2}$. Therefore, $vv(P_n \cp P_n) \leq \frac{n^2+n-2}{2}$.

Now, choose $x=(u_2,v_2)$ and let 
$$S = \{(u_k,v_1):\ k \in [n]\} \cup \{(u_1,v_l): l \in [n] \backslash \{2\}\} \cup S'\,,$$ 
where $S' \subset Q_3$ is chosen by taking $\lceil\frac{|D_a|}{2}\rceil$ alternate vertices from each diagonal $D_a$ there; see Figure~\ref{fig:grid} which illustrates this construction on $P_6 \cp P_6$ and on $P_7 \cp P_7$. Then $S$ is an $x$-visibility set of cardinality $\frac{n^2+n-2}{2}$. We can conclude that $\vv(P_n \cp P_n) = \frac{n^2+n-2}{2}$.
\qed 
\end{proof}

\subsection{Square prisms}
\label{subsec:prisms}	

The proof technique used in Theorem~\ref{vv-sqauregrid} can be used to determine the vertex visibility number of square prisms. 

\begin{theorem}\label{vv-sqaurecylinder}
If $n \ge 4$, then 
$$\vv(P_n \cp C_n) = \begin{cases}
\frac{n^2 + 3}{2}; & n \equiv 1 \bmod 4,\\
\frac{n^2 + n - 2}{2}; & n \equiv 3 \bmod 4, \\
\frac{2n^2 + n}{4}; & n \equiv 0 \bmod 4,\\
\frac{2n^2 + n - 2}{4}; & n \equiv 2 \bmod 4.
\end{cases}$$    
\end{theorem}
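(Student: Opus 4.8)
The plan is to mirror the strategy of Theorem~\ref{vv-sqauregrid}, now exploiting that one factor is a cycle. First I would fix an arbitrary vertex $x=(u_{i+1},w_1)$; since $C_n$ is vertex-transitive I may fix the $C_n$-coordinate, and using the reflective symmetry of $P_n$ I may assume $0\le i\le\lfloor (n-1)/2\rfloor$. I then take a $v_x$-set $S$ enjoying the three properties of Lemma~\ref{lem:mdv}, so that every vertex is $S$-visible from $x$ and $\MD_G(x)\subseteq S$. The decisive structural difference from the grid is that a shortest path in the $C_n$-direction may run clockwise or counterclockwise: reading the cycle coordinates as $0,1,\dots,n-1$ with $x$ at $0$, each position $p$ is reached through the shorter arc, unless $n$ is even and $p=n/2$, in which case both arcs are shortest. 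Accordingly I would partition $V(P_n\cp C_n)$ into the two axis layers through $x$ (the $P_n$-layer $P_n^{w_1}$ and the $C_n$-layer ${}^{u_{i+1}}C_n$) together with four quadrant-like regions, obtained by pairing the path direction (the $k>i+1$ part and the $k<i+1$ part) with the cyclic direction (the clockwise and counterclockwise arcs).

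Inside each region every shortest path from $x$ is monotone, so the penultimate-vertex argument from the grid applies verbatim: on each anti-diagonal $D$ (the vertices at a fixed distance from $x$ within the region) at most $\lceil|D|/2\rceil$ vertices of $S$ may be chosen. I would then sum $\lceil|D|/2\rceil$ over the diagonals of all four regions and add the contribution of the two axis layers, which as in the grid is only a bounded correction: the $C_n$-axis is isometrically embedded, so $S$ meets it in cycle-visibility fashion (at most two vertices), and the $P_n$-axis behaves like a path. Maximizing this total over the admissible $i$ and over the parities of the region dimensions yields the claimed upper bound on $\vv(P_n\cp C_n)$.

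The matching lower bound I would obtain by an explicit construction for $x=(u_2,w_1)$. The key gain is that the entire end layer ${}^{u_1}C_n$ (all $n$ vertices at height $1$) can be placed in $S$: each $(u_1,w_p)$ is reached by travelling along the free height-$2$ cycle from $x$ to $(u_2,w_p)$ and then stepping down, a shortest path meeting $S$ only in its endpoint. Keeping the $C_n$-axis at height $2$ as this ``highway'', I would fill the upper part (heights $3,\dots,n$), split into its clockwise and counterclockwise halves, with alternate vertices along each anti-diagonal, exactly as $S'$ was chosen in Theorem~\ref{vv-sqauregrid}. A direct count shows this set attains the stated cardinality in each case.

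The main obstacle is the cyclic wraparound and the resulting four-way parity split. For odd $n$ the clockwise and counterclockwise regions are disjoint and jointly cover every non-axis cycle position, whereas for even $n$ they share the antipodal column $p=n/2$, whose two shortest-path directions force this column to be counted once and couple its diagonal constraints to both regions. Combining this dichotomy ($n$ even versus odd) with the parity of $\lfloor n/2\rfloor$, which governs how the ceilings $\lceil|D|/2\rceil$ accumulate over the diagonals, is precisely what produces the four residue classes modulo $4$; carrying out this bookkeeping so that the upper and lower bounds coincide in each class is the technical heart of the argument.
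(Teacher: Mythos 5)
Your overall strategy is the same as the paper's (fix $x$ in a cycle-column via symmetry, take a $v_x$-set with the properties of Lemma~\ref{lem:mdv}, split off the two layers through $x$, bound each quadrant diagonal by $\lceil |D|/2\rceil$, and match with an explicit set containing the whole end layer $^{u_1}C_n$ plus alternating diagonal vertices). However, your lower-bound construction has a concrete deficiency: it is one vertex short when $n \equiv 1 \bmod 4$. In that case each of your two half-regions (heights $3,\dots,n$ paired with cyclic distances $1,\dots,\tfrac{n-1}{2}$) is an $(n-2)\times\tfrac{n-1}{2}$ grid whose diagonal sum $\sum \lceil |D|/2\rceil$ equals $\tfrac{(n-1)^2}{4}$, so your set has cardinality
$$n + 2\cdot\frac{(n-1)^2}{4} \;=\; \frac{n^2+1}{2} \;=\; \frac{n^2+3}{2}-1\,.$$
The paper's extremal set additionally contains the far endpoint of the $P_n$-layer through $x$, i.e.\ the vertex $(u_n,w_1)$ in your coordinates: its unique shortest path from $x$ runs straight along that layer, which your construction leaves entirely free, so it can be added without destroying any visibility. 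You must include such an axis vertex in the $n\equiv 1 \bmod 4$ case (your counts for $n\equiv 3 \bmod 4$ and for even $n$, with the antipodal column absorbed into one half, do come out right); as written, your claim that ``a direct count shows this set attains the stated cardinality in each case'' is false in that residue class.

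A second, softer gap is in the upper bound. You treat the two layers through $x$ as ``only a bounded correction,'' but a crude bound (at most $2$ vertices on the cycle layer, at most one above and one below $x$ on the path layer) gives a slack of up to $4$, which is too weak to separate the four residue classes: for example it would only yield $\vv(P_n\cp C_n)\le \tfrac{n^2-1}{2}+4$ when $n\equiv 1\bmod 4$, overshooting $\tfrac{n^2+3}{2}$. The paper closes this by choosing $S$, among the $v_x$-sets satisfying Lemma~\ref{lem:mdv}, to maximize $|S\cap\bigcup_t Q_t|$, and then arguing that saturating the quadrant diagonals forces their end vertices into $S$, which in turn excludes all but $1$ or $2$ (depending on the residue) of the candidate vertices $(u_{i+1},v_1)$, $(u_{i+1},v_n)$, $(u_1,v_{\lceil n/2\rceil})$, $(u_n,v_{\lceil n/2\rceil})$ from $S$. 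Your sketch needs this interplay between quadrant-maximality and the layer contribution to pin down the exact constants; without it, ``maximizing over $i$ and the parities'' does not close the argument.
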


\begin{proof}
Let $V(P_n \cp C_n) = \{(u_k,v_l): k,l \in [n]\}$. By the symmetry, a vertex $x \in V(P_n \cp C_n)$ has the same $x$-visibility number as some vertex from $\{(u_k,v_{\lceil n/2 \rceil}): k \in [\lceil n/2 \rceil]\}$, hence we may without loss of generality assume that $x = (u_{i+1},v_{\lceil n/2 \rceil})$, where $0 \leq i \leq \lfloor \frac{n-1}{2} \rfloor$. Let $X = V(^{u_{i+1}}C_n)$  and $Y = V(P_n^{v_{\lceil n/2 \rceil}})$. Set further
\begin{align*}
Q_1 & = \{(u_k,v_l):\ k < i+1, l < \lceil n/2 \rceil\}, \\
Q_2 & = \{(u_k,v_l):\ k < i+1, l > \lceil n/2 \rceil\}, \\
Q_3 & = \{(u_k,v_l):\ k > i+1, l > \lceil n/2 \rceil\}, \\
Q_4 & = \{(u_k,v_l):\ k > i+1, l < \lceil n/2 \rceil\}. 
\end{align*}
 Let $S$ be a $v_x$-set in $P_n \cp C_n$ that satisfies the properties given in Lemma \ref{lem:mdv} and has the maximum number of vertices from $\bigcup_{t=1}^{4} Q_t$ among its choices. 

 \begin{figure}[ht!]
	\begin{center}
		\begin{tikzpicture}[scale=0.5,style=thick,x=2cm,y=2cm]
			\node[circle, draw, fill = black, scale=0.75] (a) at (0,0) {};
            \node at (a.west) [left=8pt] {\footnotesize $u_1$};
            \node at (a.south) [below=8pt] {\footnotesize $v_1$};
			\node[circle, draw, fill = black, scale=0.75] (b) at (1,0) {};
            \node at (b.south) [below=8pt] {\footnotesize $v_2$};
			\node[circle, draw, fill = black, scale=0.75] (c) at (2,0) {};
            \node at (c.south) [below=8pt] {\footnotesize $v_3$};
			\node[circle, draw, fill = black, scale=0.75] (d) at (3,0) {}; 
			\node at (d.south) [below=8pt] {\footnotesize $v_4$};
			\node[circle, draw, fill = black, scale=0.75] (f) at (4,0) {};  
            \node at (f.south) [below=8pt] {\footnotesize $v_5$};
			\draw (a) -- (b) -- (c) -- (d) -- (f);

            \draw (a) .. controls (0.5,-0.5) and (3.5,-0.5) .. (f);
           
			\node[circle, draw, scale=0.75] (a1) at (0,1) {};
            \node at (a1.west) [left=8pt] {\footnotesize $u_2$};
			\node[circle, draw, scale=0.75] (b1) at (1,1){};
			\node[circle, draw, scale=0.75] (c1) at (2,1) {$x$};
			\node[circle, draw, scale=0.75] (d1) at (3,1) {}; 
			
			\node[circle, draw, scale=0.75] (f1) at (4,1) {};  
			\draw (a1) -- (b1) -- (c1) -- (d1) -- (f1); 

            \draw (a1) .. controls (0.5,0.5) and (3.5,0.5) .. (f1);

			\node[circle, draw, fill = black, scale=0.75] (a2) at (0,2) {};
            \node at (a2.west) [left=8pt] {\footnotesize $u_3$};
			\node[circle, draw, fill = black, scale=0.75] (b2) at (1,2) {};
			\node[circle, draw, scale=0.75] (c2) at (2,2)  {};
			\node[circle, draw, fill = black, scale=0.75] (d2) at (3,2) {}; 

			\node[circle, draw, fill = black, scale=0.75] (f2) at (4,2) {};  
			\draw (a2) -- (b2) -- (c2) -- (d2) -- (f2);

            \draw (a2) .. controls (0.5,1.5) and (3.5,1.5) .. (f2);

			\node[circle, draw, fill = black, scale=0.75] (a3) at (0,3) {};
            \node at (a3.west) [left=8pt] {\footnotesize $u_4$};
			\node[circle, draw,  scale=0.75] (b3) at (1,3) {};
			\node[circle, draw, scale=0.75] (c3) at (2,3) {};
			\node[circle, draw,  scale=0.75] (d3) at (3,3) {}; 
	  
			\node[circle, draw, fill = black, scale=0.75] (f3) at (4,3) {};  
			\draw (a3) -- (b3) -- (c3) -- (d3) -- (f3);

             \draw (a3) .. controls (0.5,2.5) and (3.5,2.5) .. (f3);

			\node[circle, draw, fill = black, scale=0.75] (a5) at (0,4) {};
            \node at (a5.west) [left=8pt] {\footnotesize $u_5$};
			\node[circle, draw,  scale=0.75] (b5) at (1,4) {};
			\node[circle, draw, fill = black, scale=0.75] (c5) at (2,4) {};
			\node[circle, draw,  scale=0.75] (d5) at (3,4) {}; 
 
			\node[circle, draw, fill = black, scale=0.75] (f5) at (4,4) {};  
			\draw (a5) -- (b5) -- (c5) -- (d5) -- (f5);

 \draw (a5) .. controls (0.5,3.5) and (3.5,3.5) .. (f5);

			\draw (a) -- (a1) -- (a2) -- (a3) -- (a5);
           
			\draw (b) -- (b1) -- (b2) -- (b3) -- (b5);
           
			\draw (c) -- (c1) -- (c2) -- (c3) -- (c5);
            
			\draw (d) -- (d1) -- (d2) -- (d3) -- (d5);
           
			\draw (f) -- (f1) -- (f2) -- (f3) -- (f5);

            	\node[circle, draw, fill = black, scale=0.75] (a') at (7,0) {};
                \node at (a'.west) [left=8pt] {\footnotesize $u_1$};
                \node at (a'.south) [below=8pt] {\footnotesize $v_1$};
			\node[circle, draw, fill = black, scale=0.75] (b') at (8,0) {};
            \node at (b'.south) [below=8pt] {\footnotesize $v_2$};
			\node[circle, draw, fill = black, scale=0.75] (c') at (9,0) {};
            \node at (c'.south) [below=8pt] {\footnotesize $v_3$};
			\node[circle, draw, fill = black, scale=0.75] (d') at (10,0) {}; 
            \node at (d'.south) [below=8pt] {\footnotesize $v_4$};
			\node[circle, draw, fill = black, scale=0.75] (e') at (11,0) {}; 
            \node at (e'.south) [below=8pt] {\footnotesize $v_5$};
			\node[circle, draw, fill = black, scale=0.75] (f') at (12,0) {}; 
            \node at (f'.south) [below=8pt] {\footnotesize $v_6$};
            \node[circle, draw, fill = black, scale=0.75] (h') at (13,0) {};
            \node at (h'.south) [below=8pt] {\footnotesize $v_7$};
			\draw (a') -- (b') -- (c') -- (d') -- (e') -- (f') -- (h');
            \draw (a') .. controls (7.5,-0.5) and (12.5,-0.5) .. (h');
            
			\node[circle, draw, scale=0.75] (a1') at (7,1) {};
            \node at (a1'.west) [left=8pt] {\footnotesize $u_2$};
			\node[circle, draw, scale=0.75] (b1') at (8,1) {};
			\node[circle, draw,  scale=0.75] (c1') at (9,1) {};
			\node[circle, draw, scale=0.75] (d1') at (10,1) {$x$}; 
			\node[circle, draw,  scale=0.75] (e1') at (11,1) {};  
			\node[circle, draw, scale=0.75] (f1') at (12,1) {};  
            \node[circle, draw, scale=0.75] (h1') at (13,1) {};
			\draw (a1') -- (b1') -- (c1') -- (d1') -- (e1') -- (f1') -- (h1');
            \draw (a1') .. controls (7.5,0.5) and (12.5,0.5) .. (h1');
            
			\node[circle, draw, fill = black, scale=0.75] (a2') at (7,2) {};
            \node at (a2'.west) [left=8pt] {\footnotesize $u_3$};
			\node[circle, draw, fill = black, scale=0.75] (b2') at (8,2) {};
			\node[circle, draw, fill = black, scale=0.75] (c2') at (9,2) {};
			\node[circle, draw, scale=0.75] (d2') at (10,2) {}; 
			\node[circle, draw, fill = black, scale=0.75] (e2') at (11,2) {};  
			\node[circle, draw, fill = black, scale=0.75] (f2') at (12,2) {};  	
            \node[circle, draw, fill = black, scale=0.75] (h2') at (13,2) {};
			\draw (a2') -- (b2') -- (c2') -- (d2') -- (e2') -- (f2') -- (h2');
            \draw (a2') .. controls (7.5,1.5) and (12.5,1.5) .. (h2');
			\node[circle, draw, fill = black, scale=0.75] (a3') at (7,3) {};
            \node at (a3'.west) [left=8pt] {\footnotesize $u_4$};
			\node[circle, draw, scale=0.75] (b3') at (8,3) {};
			\node[circle, draw, scale=0.75] (c3') at (9,3) {};
			\node[circle, draw, scale=0.75] (d3') at (10,3) {}; 
			\node[circle, draw, scale=0.75] (e3') at (11,3) {};  
			\node[circle, draw, scale=0.75] (f3') at (12,3) {};  
            \node[circle, draw, fill = black, scale=0.75] (h3') at (13,3) {};
			\draw (a3') -- (b3') -- (c3') -- (d3') -- (e3') -- (f3')  -- (h3');
            \draw (a3') .. controls (7.5,2.5) and (12.5,2.5) .. (h3');
            
			\node[circle, draw, fill = black, scale=0.75] (a4') at (7,4) {};
            \node at (a4'.west) [left=8pt] {\footnotesize $u_5$};
			\node[circle, draw, scale=0.75] (b4') at (8,4) {};
			\node[circle, draw, fill = black, scale=0.75] (c4') at (9,4) {};
			\node[circle, draw, scale=0.75] (d4') at (10,4) {}; 
			\node[circle, draw, fill = black, scale=0.75] (e4') at (11,4) {};  
			\node[circle, draw, scale=0.75] (f4') at (12,4) {}; 			
            \node[circle, draw, fill = black, scale=0.75] (h4') at (13,4) {};
			\draw (a4') -- (b4') -- (c4') -- (d4') -- (e4') -- (f4') -- (h4');
            \draw (a4') .. controls (7.5,3.5) and (12.5,3.5) .. (h4');
            
			\node[circle, draw, fill = black, scale=0.75] (a5') at (7,5) {};
            \node at (a5'.west) [left=8pt] {\footnotesize $u_6$};
			\node[circle, draw, scale=0.75] (b5') at (8,5) {};
			\node[circle, draw, fill = black, scale=0.75] (c5') at (9,5) {};
			\node[circle, draw, scale=0.75] (d5') at (10,5) {}; 
			\node[circle, draw, fill = black, scale=0.75] (e5') at (11,5) {};  
			\node[circle, draw, scale=0.75] (f5') at (12,5) {};
            \node[circle, draw, fill = black, scale=0.75] (h5') at (13,5) {}; 
			\draw (a5') -- (b5') -- (c5') -- (d5') -- (e5') -- (f5') -- (h5');
            \draw (a5') .. controls (7.5,4.5) and (12.5,4.5) .. (h5');
			
\node[circle, draw, fill = black, scale=0.75] (a7') at (7,6) {};
\node at (a7'.west) [left=8pt] {\footnotesize $u_7$};
			\node[circle, draw,  scale=0.75] (b7') at (8,6) {};
			\node[circle, draw, fill = black, scale=0.75] (c7') at (9,6) {};
			\node[circle, draw, scale=0.75] (d7') at (10,6) {}; 
			\node[circle, draw, fill = black, scale=0.75] (e7') at (11,6) {};  
			\node[circle, draw,  scale=0.75] (f7') at (12,6) {};		 
            \node[circle, draw, fill = black, scale=0.75] (h7') at (13,6) {}; 
			\draw (a7') -- (b7') -- (c7') -- (d7') -- (e7') -- (f7') -- (h7');
            \draw (a7') .. controls (7.5,5.5) and (12.5,5.5) .. (h7');

			\draw (a') -- (a1') -- (a2') -- (a3') -- (a4') -- (a5') -- (a7');
			\draw (b') -- (b1') -- (b2') -- (b3') -- (b4') -- (b5') -- (b7');
			\draw (c') -- (c1') -- (c2') -- (c3') -- (c4') -- (c5') -- (c7');
			\draw (d') -- (d1') -- (d2') -- (d3') -- (d4') -- (d5') -- (d7');
			\draw (e') -- (e1') -- (e2') -- (e3') -- (e4') -- (e5') -- (e7');
			\draw (f') -- (f1') -- (f2') -- (f3') -- (f4') -- (f5') -- (f7');
            \draw (h') -- (h1') -- (h2') -- (h3') -- (h4') -- (h5') -- (h7');
	\end{tikzpicture}
		\caption{vv-sets of $P_5 \cp C_5$ and $P_7 \cp C_7$.}
		\label{fig:oddcylinder}
	\end{center}
\end{figure}

First assume $n$ is odd and $x = (u_{i+1},v_{\frac{n+1}{2}})$, where $0 \leq i \leq \frac{n-1}{2}$. For $(u_k,v_l) \in Q_1$, a shortest $x,(u_k, v_l)$-path passes through either $(u_k,v_{l+1})$ or $(u_{k+1},v_l)$, hence these two vertices cannot be simultaneously in $S$.
A similar argument is valid for each $(u_k,v_l) \in Q_t$, $t \in [4]$. 
Now, partition each $Q_t$ into diagonals $D_{pq}$, where $p, q \in Q_t$ are its end vertices as in Theorem~\ref{vv-sqauregrid}. Then we can have at most $\lceil \frac{|D_{pq}|}{2} \rceil$ vertices in $S \cap D_{pq}$.
Hence we get that
$$\text{for $t \in \{1,2\}$}:\ |S \cap Q_t| \leq 
      \begin{cases}
      \frac{i(n+1)}{4}; & i \text{ even},\\
      \frac{(i+1)(n-1)}{4}; & i \text{ odd},
      \end{cases}$$
    and
$$\text{for $t \in \{3,4\}$}:\ |S \cap Q_t| \leq 
      \begin{cases}
      \frac{(n-1)(n-i)}{4}; & n \equiv 1 \bmod 4,\\
      \frac{(n+1)(n-i-1)}{4}; & n \equiv 3 \bmod 4.
      \end{cases}$$
Thus we obtain,
\begin{flalign*}
   |S \cup \displaystyle \bigcup_{t=1}^{4} Q_t| \leq \begin{cases}
   \frac{n^2 -1}{2}; & n \equiv 1 \bmod 4,\\
   \frac{n^2+n-4}{2}; & n \equiv 3 \bmod 4. 
   \end{cases}\label{}
   \end{flalign*}
By the choice of $S$, every vertex must be $S$-visible from $x$ and hence $S \cap (X \cup Y) \subseteq \{(u_{i+1},v_1), (u_{i+1},v_n), (u_1,v_{\frac{n+1}{2}}), (u_n,v_{\frac{n+1}{2}})\}$. But, $(u_{i+1},v_1)$ can be in $S$ only if both $(u_i,v_2)$ and $(u_{i+2},v_2)$ are not in $S$ since $(u_i,v_1)$ and $(u_{i+2},v_1)$ must be $S$-visible from $x$. Similar argument holds for $(u_{i+1},v_n), (u_1,v_{\frac{n+1}{2}})$ and $(u_n,v_{\frac{n+1}{2}})$. Also, since $i \leq \frac{n-1}{2} \leq n-i-1$ we have, 
\begin{align*}
\text{in $Q_1$}: &\quad |D_{p(u_2,v_{\frac{n-1}{2}}
)}| = i-1  \text{ and } |D_{(u_i,v_2)q}| = \begin{cases}
       i - 1; &  i=\frac{n-1}{2},\\
       i; & i < \frac{n-1}{2},
   \end{cases} \\
\text{in $Q_2$}: &\quad |D_{p(u_2,v_{\frac{n+3}{2}}
)}| = i-1   \text{ and } |D_{(u_{i},v_{n-1})q}| = \begin{cases}
       i - 1; & i=\frac{n-1}{2},\\
       i; &  i < \frac{n-1}{2},
   \end{cases} \\
\text{in $Q_3$}: &\quad |D_{p(u_{i},v_{n+1})}| = \frac{n-3}{2}  \text{ and } |D_{(u_{n-1},v_{\frac{n+3}{2}})q}| = \begin{cases}
       \frac{n-3}{2};& i=\frac{n-1}{2},\\
       \frac{n-1}{2}; & i < \frac{n-1}{2},
   \end{cases} \\
\text{in $Q_4$}: &\quad |D_{p(u_{i+2},v_{2})}| = \frac{n-3}{2}  \text{ and } |D_{(u_{n-1},v_{\frac{n-1}{2}})q}| = \begin{cases}
       \frac{n-3}{2}; & i=\frac{n-1}{2},\\
       \frac{n-1}{2}; & i<\frac{n-1}{2}.
   \end{cases}  
\end{align*}
Then, since we have assumed that $S$ has the maximum number of vertices from  $\displaystyle \bigcup_{t=1}^{4} Q_t$ we get, 
\begin{flalign*}
  |S \cap (X \cup Y)|= \begin{cases}
 2; & n \equiv 1 \bmod 4,\\
 1; & n \equiv 3 \bmod 4.
\end{cases} \label{}
\end{flalign*}
Thus, $$|S| \leq \begin{cases}
\frac{n^2 +3}{2}; & n \equiv 1 \bmod 4,\\
 \frac{n^2 +n-2}{2}; & n \equiv 3 \bmod 4.
\end{cases}$$ 

\begin{figure}[ht!]
	\begin{center}
		\begin{tikzpicture}[scale=0.5,style=thick,x=2cm,y=2cm]
			\node[circle, draw, fill = black, scale=0.75] (a) at (0,0) {};
            \node at (a.west) [left=8pt] {\footnotesize $u_1$};
            \node at (a.south) [below=8pt] {\footnotesize $v_1$};
			\node[circle, draw, fill = black,  scale=0.75] (b) at (1,0) {};
            \node at (b.south) [below=8pt] {\footnotesize $v_2$};
			\node[circle, draw, fill = black, scale=0.75] (c) at (2,0) {};
            \node at (c.south) [below=8pt] {\footnotesize $v_3$};
			\node[circle, draw, fill = black, scale=0.75] (d) at (3,0) {}; 
            \node at (d.south) [below=8pt] {\footnotesize $v_4$};
			\node[circle, draw, fill = black, scale=0.75] (e) at (4,0) {}; 
            \node at (e.south) [below=8pt] {\footnotesize $v_5$};
			\node[circle, draw, fill = black, scale=0.75] (f) at (5,0) {};  
            \node at (f.south) [below=8pt] {\footnotesize $v_6$};
			\draw (a) -- (b) -- (c) -- (d) -- (e) -- (f);
            
            \draw (a) .. controls (0.5,-0.5) and (4.5,-0.5).. (f);

			\node[circle, draw, scale=0.75] (a1) at (0,1) {};
            \node at (a1.west) [left=8pt] {\footnotesize $u_2$};
			\node[circle, draw, scale=0.75] (b1) at (1,1){};
			\node[circle, draw, scale=0.75] (c1) at (2,1) {$x$};
			\node[circle, draw,  scale=0.75] (d1) at (3,1) {}; 
			\node[circle, draw, scale=0.75] (e1) at (4,1) {};  
			\node[circle, draw, scale=0.75] (f1) at (5,1) {};  
			\draw (a1) -- (b1) -- (c1) -- (d1) -- (e1) -- (f1); 

            \draw (a1) .. controls (0.5,0.5) and (4.5,0.5) .. (f1);

			\node[circle, draw, fill=black, scale=0.75] (a2) at (0,2) {};
            \node at (a2.west) [left=8pt] {\footnotesize $u_3$};
			\node[circle, draw, fill=black, scale=0.75] (b2) at (1,2) {};
			\node[circle, draw,  scale=0.75] (c2) at (2,2)  {};
			\node[circle, draw, fill = black, scale=0.75] (d2) at (3,2) {}; 
			\node[circle, draw, scale=0.75] (e2) at (4,2) {};  
			\node[circle, draw,  fill = black, scale=0.75] (f2) at (5,2) {};  
			\draw (a2) -- (b2) -- (c2) -- (d2) -- (e2) -- (f2);

            \draw (a2) .. controls (0.5,1.5) and (4.5,1.5) .. (f2);

			\node[circle, draw, fill = black, scale=0.75] (a3) at (0,3) {};
            \node at (a3.west) [left=8pt] {\footnotesize $u_4$};
			\node[circle, draw,  scale=0.75] (b3) at (1,3) {};
			\node[circle, draw, scale=0.75] (c3) at (2,3) {};
			\node[circle, draw, fill = black,  scale=0.75] (d3) at (3,3) {}; 
			\node[circle, draw,  scale=0.75] (e3) at (4,3) {};  
			\node[circle, draw, fill = black, scale=0.75] (f3) at (5,3) {};  
			\draw (a3) -- (b3) -- (c3) -- (d3) -- (e3) -- (f3);

             \draw (a3) .. controls (0.5,2.5) and (4.5,2.5) .. (f3);

			\node[circle, draw, fill = black, scale=0.75] (a4) at (0,4) {};
            \node at (a4.west) [left=8pt] {\footnotesize $u_5$};
			\node[circle, draw,  scale=0.75] (b4) at (1,4) {};
			\node[circle, draw, scale=0.75] (c4) at (2,4) {};
			\node[circle, draw, fill = black, scale=0.75] (d4) at (3,4) {}; 
			\node[circle, draw, scale=0.75] (e4) at (4,4) {};  
			\node[circle, draw, scale=0.75] (f4) at (5,4) {};  
			\draw (a4) -- (b4) -- (c4) -- (d4) -- (e4) -- (f4);

            \draw (a4) .. controls (0.5,3.5) and (4.5,3.5) .. (f4);

			\node[circle, draw, fill = black, scale=0.75] (a5) at (0,5) {};
            \node at (a5.west) [left=8pt] {\footnotesize $u_6$};
			\node[circle, draw,  scale=0.75] (b5) at (1,5) {};
			\node[circle, draw, scale=0.75] (c5) at (2,5) {};
			\node[circle, draw, fill = black, scale=0.75] (d5) at (3,5) {}; 
			\node[circle, draw, fill = black, scale=0.75] (e5) at (4,5) {};  
			\node[circle, draw, fill = black, scale=0.75] (f5) at (5,5) {};  
			\draw (a5) -- (b5) -- (c5) -- (d5) -- (e5) -- (f5);

 \draw (a5) .. controls (0.5,4.5) and (4.5,4.5) .. (f5);

			\draw (a) -- (a1) -- (a2) -- (a3) -- (a4) -- (a5);
			\draw (b) -- (b1) -- (b2) -- (b3) -- (b4) -- (b5);
			\draw (c) -- (c1) -- (c2) -- (c3) -- (c4) -- (c5);
            
			\draw (d) -- (d1) -- (d2) -- (d3) -- (d4) -- (d5);
           
			\draw (e) -- (e1) -- (e2) -- (e3) -- (e4) -- (e5);
            
			\draw (f) -- (f1) -- (f2) -- (f3) -- (f4) -- (f5);

            	\node[circle, draw, fill = black, scale=0.75] (a') at (7,0) {};
                \node at (a'.west) [left=8pt] {\footnotesize $u_1$};
            \node at (a'.south) [below=8pt] {\footnotesize $v_1$};
			\node[circle, draw, fill = black, scale=0.75] (b') at (8,0) {};
            \node at (b'.south) [below=8pt] {\footnotesize $v_2$};
			\node[circle, draw, fill = black, scale=0.75] (c') at (9,0) {};
            \node at (c'.south) [below=8pt] {\footnotesize $v_3$};
			\node[circle, draw, fill = black, scale=0.75] (d') at (10,0) {}; 
            \node at (d'.south) [below=8pt] {\footnotesize $v_4$};
			\node[circle, draw, fill = black, scale=0.75] (e') at (11,0) {};  
            \node at (e'.south) [below=8pt] {\footnotesize $v_5$};
			\node[circle, draw, fill = black, scale=0.75] (f') at (12,0) {}; 
            \node at (f'.south) [below=8pt] {\footnotesize $v_6$};
			\node[circle, draw, fill = black, scale=0.75] (g') at (13,0) {};
            \node at (g'.south) [below=8pt] {\footnotesize $v_7$};
            \node[circle, draw, fill = black, scale=0.75] (h') at (14,0) {};
            \node at (h'.south) [below=8pt] {\footnotesize $v_8$};
			\draw (a') -- (b') -- (c') -- (d') -- (e') -- (f') -- (g') -- (h');
            \draw (a') .. controls (7.5,-0.5) and (13.5, -0.5) .. (h');
			\node[circle, draw, scale=0.75] (a1') at (7,1) {};
            \node at (a1'.west) [left=8pt] {\footnotesize $u_2$};
			\node[circle, draw, scale=0.75] (b1') at (8,1) {};
			\node[circle, draw,  scale=0.75] (c1') at (9,1) {};
			\node[circle, draw, scale=0.75] (d1') at (10,1) {$x$}; 
			\node[circle, draw,  scale=0.75] (e1') at (11,1) {};  
			\node[circle, draw, scale=0.75] (f1') at (12,1) {};  
			\node[circle, draw, scale=0.75] (g1') at (13,1) {};
            \node[circle, draw, scale=0.75] (h1') at (14,1) {};
			\draw (a1') -- (b1') -- (c1') -- (d1') -- (e1') -- (f1') -- (g1') -- (h1');
            \draw (a1') .. controls (7.5,0.5) and (13.5, 0.5) .. (h1'); 
			\node[circle, draw, fill = black, scale=0.75] (a2') at (7,2) {};
            \node at (a2'.west) [left=8pt] {\footnotesize $u_3$};
			\node[circle, draw, fill = black, scale=0.75] (b2') at (8,2) {};
			\node[circle, draw, fill = black, scale=0.75] (c2') at (9,2) {};
			\node[circle, draw, scale=0.75] (d2') at (10,2) {}; 
			\node[circle, draw, fill = black, scale=0.75] (e2') at (11,2) {};  
			\node[circle, draw, scale=0.75] (f2') at (12,2) {};  
			\node[circle, draw, fill = black, scale=0.75] (g2') at (13,2) {};
            \node[circle, draw,  scale=0.75] (h2') at (14,2) {};
			\draw (a2') -- (b2') -- (c2') -- (d2') -- (e2') -- (f2') -- (g2') -- (h2');
            \draw (a2') .. controls (7.5,1.5) and (13.5, 1.5) .. (h2');
			\node[circle, draw, fill = black, scale=0.75] (a3') at (7,3) {};
            \node at (a3'.west) [left=8pt] {\footnotesize $u_4$};
			\node[circle, draw, scale=0.75] (b3') at (8,3) {};
			\node[circle, draw, scale=0.75] (c3') at (9,3) {};
			\node[circle, draw, scale=0.75] (d3') at (10,3) {}; 
			\node[circle, draw, fill = black, scale=0.75] (e3') at (11,3) {};  
			\node[circle, draw, scale=0.75] (f3') at (12,3) {};  
			\node[circle, draw, fill = black, scale=0.75] (g3') at (13,3) {};
            \node[circle, draw, scale=0.75] (h3') at (14,3) {};
			\draw (a3') -- (b3') -- (c3') -- (d3') -- (e3') -- (f3') -- (g3') -- (h3');
            \draw (a3') .. controls (7.5,2.5) and (13.5, 2.5) .. (h3');
			\node[circle, draw, fill = black, scale=0.75] (a4') at (7,4) {};
            \node at (a4'.west) [left=8pt] {\footnotesize $u_5$};
			\node[circle, draw, scale=0.75] (b4') at (8,4) {};
			\node[circle, draw, fill = black, scale=0.75] (c4') at (9,4) {};
			\node[circle, draw, scale=0.75] (d4') at (10,4) {}; 
			\node[circle, draw, fill = black, scale=0.75] (e4') at (11,4) {};  
			\node[circle, draw, scale=0.75] (f4') at (12,4) {};  
			\node[circle, draw, fill = black, scale=0.75] (g4') at (13,4) {};
            \node[circle, draw, scale=0.75] (h4') at (14,4) {};
			\draw (a4') -- (b4') -- (c4') -- (d4') -- (e4') -- (f4') -- (g4') -- (h4');
            \draw (a4') .. controls (7.5,3.5) and (13.5, 3.5) .. (h4');
			\node[circle, draw, fill = black, scale=0.75] (a5') at (7,5) {};
            \node at (a5'.west) [left=8pt] {\footnotesize $u_6$};
			\node[circle, draw, scale=0.75] (b5') at (8,5) {};
			\node[circle, draw, fill = black, scale=0.75] (c5') at (9,5) {};
			\node[circle, draw, scale=0.75] (d5') at (10,5) {}; 
			\node[circle, draw, fill = black, scale=0.75] (e5') at (11,5) {};  
			\node[circle, draw, scale=0.75] (f5') at (12,5) {};
			\node[circle, draw, fill = black, scale=0.75] (g5') at (13,5) {}; 
            \node[circle, draw, fill = black, scale=0.75] (h5') at (14,5) {}; 
			\draw (a5') -- (b5') -- (c5') -- (d5') -- (e5') -- (f5') -- (g5') -- (h5');
            \draw (a5') .. controls (7.5,4.5) and (13.5, 4.5) .. (h5');
			\node[circle, draw, fill = black, scale=0.75] (a6') at (7,6) {};
            \node at (a6'.west) [left=8pt] {\footnotesize $u_7$};
			\node[circle, draw, scale=0.75] (b6') at (8,6) {};
			\node[circle, draw, fill = black, scale=0.75] (c6') at (9,6) {};
			\node[circle, draw, scale=0.75] (d6') at (10,6) {}; 
			\node[circle, draw, fill = black, scale=0.75] (e6') at (11,6) {};  
			\node[circle, draw, scale=0.75] (f6') at (12,6) {};
			\node[circle, draw, scale=0.75] (g6') at (13,6) {}; 
            \node[circle, draw, scale=0.75] (h6') at (14,6) {}; 
			\draw (a6') -- (b6') -- (c6') -- (d6') -- (e6') -- (f6') -- (g6') -- (h6');
            \draw (a6') .. controls (7.5,5.5) and (13.5, 5.5) .. (h6');
\node[circle, draw, fill = black, scale=0.75] (a7') at (7,7) {};
\node at (a7'.west) [left=8pt] {\footnotesize $u_8$};
			\node[circle, draw,  scale=0.75] (b7') at (8,7) {};
			\node[circle, draw, fill = black, scale=0.75] (c7') at (9,7) {};
			\node[circle, draw, scale=0.75] (d7') at (10,7) {}; 
			\node[circle, draw, fill = black, scale=0.75] (e7') at (11,7) {};  
			\node[circle, draw, fill = black, scale=0.75] (f7') at (12,7) {};
			\node[circle, draw, fill = black, scale=0.75] (g7') at (13,7) {}; 
            \node[circle, draw, fill = black, scale=0.75] (h7') at (14,7) {}; 
			\draw (a7') -- (b7') -- (c7') -- (d7') -- (e7') -- (f7') -- (g7') -- (h7');
\draw (a7') .. controls (7.5,6.5) and (13.5, 6.5) .. (h7');
            
			\draw (a') -- (a1') -- (a2') -- (a3') -- (a4') -- (a5') -- (a6') -- (a7');
			\draw (b') -- (b1') -- (b2') -- (b3') -- (b4') -- (b5') -- (b6') -- (b7');
			\draw (c') -- (c1') -- (c2') -- (c3') -- (c4') -- (c5') -- (c6') -- (c7');
			\draw (d') -- (d1') -- (d2') -- (d3') -- (d4') -- (d5') -- (d6') -- (d7');
			\draw (e') -- (e1') -- (e2') -- (e3') -- (e4') -- (e5') -- (e6') -- (e7');
			\draw (f') -- (f1') -- (f2') -- (f3') -- (f4') -- (f5') -- (f6') -- (f7');
			\draw (g') -- (g1') -- (g2') -- (g3') -- (g4') -- (g5') -- (g6') -- (g7');
            \draw (h') -- (h1') -- (h2') -- (h3') -- (h4') -- (h5') -- (h6') -- (h7');
	\end{tikzpicture}
		\caption{vv-sets of $P_6 \cp C_6$ and $P_8 \cp C_8$.}
		\label{fig:evencylinder}
	\end{center}
\end{figure}
Now, assume $n$ is even and $x = (u_{i+1},v_{\frac{n}{2}})$, where $0 \leq i \leq \frac{n-2}{2}$. For each $(u_k,v_l) \in Q_t$, we know that the set $N((u_k,v_l)) \cap I(x,(u_k, v_l))$ has cardinality at most three and has a vertex which is not in $S$. Particularly for $(u_k,v_l) \in Q_1$, this vertex is either $(u_k,v_{l+1})$ or $(u_{k+1},v_l)$. Now, partition each $Q_t$ into diagonals $D_{pq}$, where $p, q \in Q_t$ are its end vertices. Then we can have at most $\sum_{pq} \lceil |D_{pq}| / 2 \rceil$ vertices in $S \cap \bigcup_{t=1}^{4} Q_t$. Since we have
$$|S \cap Q_1| + |S \cap Q_2| \leq 
 \frac{(i+1)(n-2)}{4} + \frac{(i+1)n}{4},$$
we obtain that for $n \equiv 0 \bmod 4$,
\begin{align*}
|S \cup \displaystyle \bigcup_{t=1}^{4} Q_t| & = |S \cap Q_1| + |S \cap Q_2| + |S \cap Q_3| + |S \cap Q_4| \\
& \le\frac{(i+1)(n-2)}{4} + \frac{(i+1)n}{4} + \frac{n(n-i)}{4} + \frac{n(n-i-1)}{4} \\
& = \frac{2n^2 + n-4}{4},
\end{align*}    
and for $n \equiv 2 \bmod 4$,
\begin{align*}
|S \cup \displaystyle \bigcup_{t=1}^{4} Q_t| & = |S \cap Q_1| + |S \cap Q_2| + |S \cap Q_3| + |S \cap Q_4| \\
& \le\frac{(i+1)(n-2)}{4} + \frac{(i+1)n}{4} + \frac{(n+2)(n-i-1)}{4} + \frac{(n - 2)(n-i)}{4} \\
& = \frac{2n^2 +n-6}{4}\,.
\end{align*} 
Again, by the choice of $S$, every vertex must be $S$-visible from $x$ and hence, $S \cap (X \cup Y) \subseteq \{(u_{i+1},v_1), (u_{i+1},v_{n-1}), (u_{i+1},v_n), (u_1,v_{\frac{n}{2}}), (u_n,v_{\frac{n}{2}})\}$.
Since $i \leq \frac{n-2}{2} \leq n-i-1$ we have
$$|D_{p(u_2,v_\frac{n-2}{2}
)}| = |D_{p(u_2,v_{\frac{n+2}{2}}
)}| = i-1$$
and 
$$|D_{(u_i,v_2)q}| = |D_{(u_{i},v_{n-2})q}| = \begin{cases}
       i - 1; &  i=\frac{n-2}{2},\\
       i; & i < \frac{n-2}{2},
   \end{cases}$$
$$|D_{(u_{i},v_{n-1})q}| = \begin{cases}
       i - 1; & i=\frac{n}{2},\\
       i; &  i < \frac{n}{2}
   \end{cases}$$
$$|D_{p(u_{i},v_{n+1})}| = \frac{n-2}{2}  \text{ and } |D_{(u_{n-1},v_{\frac{n+2}{2}})q}| = \begin{cases}
       \frac{n-2}{2};& i=\frac{n-2}{2},\\
       \frac{n}{2}; & i < \frac{n-2}{2},
   \end{cases}$$
$$|D_{p(u_{i+2},v_{n-2})}| = |D_{p(u_{i+2},v_{2})}| = \frac{n-4}{2}  \text{ and } |D_{(u_{n-1},v_{\frac{n-2}{2}})q}| = \begin{cases}
       \frac{n-4}{2}; & i=\frac{n}{2},\\
       \frac{n-2}{2}; & i<\frac{n}{2}.
   \end{cases}$$
Then, since we have assumed that $S$ has the maximum number of vertices from  $\displaystyle \bigcup_{t=1}^{4} Q_t$, we get, 
\begin{flalign*}
  |S \cap (X \cup Y)|= 1. \label{}
\end{flalign*}
Thus, $$|S| \leq \begin{cases}
 \frac{2n^2 +n}{4}; & n \equiv 0 \bmod 4,\\
 \frac{2n^2 +n-2}{4}; & n \equiv 2 \bmod 4.
\end{cases}$$

Now, choose $x = (u_2,v_{\lceil \frac{n}{2} \rceil})$ and let 
$$R = \{(u_1,v_l): l \in [n] \backslash \{\lceil \frac{n}{2} \rceil\}\} \cup R' \cup R''\,,$$ 
where $R' \subset Q_3$ and $R'' \subset Q_4$ are chosen by taking $\lceil\frac{|D_{pq}|}{2}\rceil$ alternate vertices from each diagonal $D_{pq}$ there; see Figures~\ref{fig:oddcylinder} and~\ref{fig:evencylinder} which illustrates this construction on $P_5 \cp C_5$, $P_6 \cp C_6$, $P_7 \cp C_7$ and $P_8 \cp P_8$. Then $R\cup \{(u_1,v_{\frac{n+1}{2}}), (u_n,v_{\frac{n+1}{2}})\}$ is an $x$-visibility set of cardinality $\frac{n^2 +3}{2}$ for $n \equiv 1 \bmod 4$ and $R\cup \{(u_1,v_{\frac{n+1}{2}})\}$ is an $x$-visibility set of cardinality $\frac{n^2 +n-2}{2}$ for $n \equiv 3 \bmod 4$. Also, $R\cup \{(u_1,v_{\frac{n+1}{2}})\}$ is an $x$-visibility set of cardinality $\frac{2n^2 +n}{4}$ for $n \equiv 0 \bmod 4$ and of cardinality $\frac{2n^2 +n-2}{4}$ for $n \equiv 2 \bmod 4$. Hence proved.
\qed
\end{proof}

\subsection{Square toruses}
\label{subsec:toruses}	

\begin{theorem}\label{vv-sqauretorus}
If $n \ge 4$, then 
$$\vv(C_n \cp C_n) = \begin{cases}
\frac{n^2 - 1}{2}; & n \equiv 1 \bmod 4,\\
\frac{n^2 + 3}{2}; & n \equiv 3 \bmod 4,\\
\frac{n^2 + 2}{2}; & n\ {\rm even}.
\end{cases}$$    
\end{theorem}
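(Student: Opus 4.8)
The plan is to capitalize on the full vertex-transitivity of $C_n\cp C_n$ and then re-run the diagonal-counting argument of Theorems~\ref{vv-sqauregrid} and~\ref{vv-sqaurecylinder}. Because a Cartesian product of vertex-transitive graphs is again vertex-transitive, $v_x(C_n\cp C_n)$ is independent of $x$; hence $\vv(C_n\cp C_n)=v_x(C_n\cp C_n)$ and we may fix $x=(u_1,v_1)$ outright, which eliminates the free parameter $i$ that was carried through the prism proof. Fix a $v_x$-set $S$ with the three properties of Lemma~\ref{lem:mdv}. Using the cyclic distance in each coordinate, partition $V(C_n\cp C_n)\setminus\{x\}$ into the two axis-cycles $X=V({}^{u_1}C_n)$ and $Y=V(C_n^{v_1})$ through $x$ and four open quadrants $Q_1,\dots,Q_4$, one for each pair of shortest directions. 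For odd $n$, writing $n=2m+1$, every non-axis vertex has a unique shortest direction in each coordinate, so each $Q_t$ is an $m\times m$ grid region in which all shortest $x,y$-paths are coordinate-monotone, exactly as in Theorem~\ref{vv-sqauregrid}.

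For the upper bound I would copy the quadrant argument: for $(u_k,v_l)\in Q_t$ its two neighbours lying one step closer to $x$ cannot both lie in $S$, so consecutive vertices on each anti-diagonal $D_a$ are forbidden, giving $|S\cap D_a|\le\lceil|D_a|/2\rceil$. Since the diagonals of an $m\times m$ quadrant have lengths $1,2,\dots,m,\dots,2,1$, one computes $|S\cap Q_t|\le\sum_a\lceil|D_a|/2\rceil=\tfrac{m(m+1)}{2}$, and therefore $|S\cap\bigcup_tQ_t|\le 2m(m+1)=\tfrac{n^2-1}{2}$ for every odd $n$. The residue of $n$ modulo $4$ enters only through the axis contribution $|S\cap(X\cup Y)|$. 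Exactly as in Theorem~\ref{vv-sqaurecylinder}, an axis vertex can belong to $S$ only when the two quadrant vertices flanking it are excluded, so pushing each quadrant to the extremal $\lceil|D_a|/2\rceil$ per diagonal forces the endpoints of the boundary diagonals into $S$ and thereby caps the number of admissible axis vertices; tracking the lengths of these boundary diagonals (the analogues of the $|D_{pq}|$ computations there) shows $|S\cap(X\cup Y)|$ is $0$ when $n\equiv1\ (\bmod\ 4)$ and $2$ when $n\equiv3\ (\bmod\ 4)$. This yields the claimed $\tfrac{n^2-1}{2}$ and $\tfrac{n^2+3}{2}$.

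For even $n=2m$ the same scheme applies, but each coordinate now possesses a true antipode at distance $m$ that is reachable in two directions. I would therefore enlarge the decomposition by an antipodal row and an antipodal column together with the single antipodal corner $(u_{m+1},v_{m+1})$, and re-derive the per-diagonal bounds, invoking the Lemma~\ref{lem:mdv} property that every vertex has a neighbour on a shortest $x$-path outside $S$. On the antipodal lines the existence of two shortest directions both relaxes the diagonal constraint and, at the corner, sharpens the visibility requirement; reconciling these two effects is precisely where the even-$n$ constant $+2$ in $\tfrac{n^2+2}{2}$ appears, and this balancing is the main obstacle of the argument.

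Finally, for the matching lower bound I would exhibit explicit $v_x$-sets analogous to those drawn for the prism: take one full axis-cycle with the layer of $x$ removed, supplemented for odd $n$ by one or two extremal axis vertices, and adjoin $\lceil|D_a|/2\rceil$ alternate vertices from every anti-diagonal of the two far quadrants $Q_3$ and $Q_4$. A direct check confirms that each chosen vertex is $S$-visible from $x$---each diagonal vertex retains an unblocked monotone shortest path and each axis vertex is seen along its axis---so the set meets the upper bound in each residue class. Collecting the cases $n\equiv1,3\ (\bmod\ 4)$ and $n$ even then gives the stated formula.
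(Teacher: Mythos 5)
Your odd-$n$ framework coincides with the paper's: vertex-transitivity to fix $x$, the two axis cycles $X,Y$ plus four $m\times m$ quadrants ($m=\tfrac{n-1}{2}$), the per-diagonal bound $|S\cap D_a|\le\lceil|D_a|/2\rceil$ summing to $\tfrac{n^2-1}{2}$ over the quadrants, and a parity analysis of the boundary diagonals giving axis contribution $0$ (for $n\equiv 1\bmod 4$) or $2$ (for $n\equiv 3\bmod 4$). But two steps are genuinely missing. First, the even case is not proved: you introduce an enlarged decomposition (antipodal row, antipodal column, antipodal corner) and then state that reconciling the relaxed diagonal constraints with the sharpened visibility requirement ``is the main obstacle of the argument''---but that reconciliation is exactly what a proof must supply. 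The paper needs no enlarged decomposition: it leaves the antipodal lines inside the quadrants (which then become rectangles with sides $\tfrac{n}{2}-1$ and $\tfrac{n}{2}$), replaces the two-neighbour argument by the observation that $N(y)\cap I(x,y)$ has at most four vertices of which at least one must avoid $S$, and then computes the rectangular diagonal sums over all four quadrants ($\tfrac{n^2}{2}$ for $n\equiv 0\bmod 4$, $\tfrac{n^2-2}{2}$ for $n\equiv 2\bmod 4$) together with axis contributions ($1$ and $2$, respectively), both totalling $\tfrac{n^2+2}{2}$.

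The second gap is fatal to your lower bound. Transplanting the prism construction---one cycle layer minus one vertex, plus alternate diagonal vertices of ``the two far quadrants $Q_3$ and $Q_4$''---cannot work on the torus: here $x$ is effectively central, all four quadrants are congruent, and there are no far quadrants. Your set has roughly $(n-1)+2\cdot\tfrac{m(m+1)}{2}+2=\tfrac{n^2-1}{4}+n+O(1)$ vertices, short of the target $\approx\tfrac{n^2}{2}$ by about $\tfrac{n^2}{4}$; concretely, for $n=7$ your set has at most $20$ vertices while $\vv(C_7\cp C_7)=26$, and the deficit grows quadratically. (Moreover, if the ``full axis-cycle'' is one of the cycles through $x$, the set is not even an $x$-visibility set: for $n\ge 5$ the unique shortest path from $x$ to an axis vertex at cyclic distance at least $2$ runs through the intervening axis vertices.) The correct construction, as in the paper, takes $\lceil|D_{pq}|/2\rceil$ alternate vertices from every diagonal of \emph{all four} quadrants and then adds at most two axis vertices chosen according to $n\bmod 4$; since the axes can only ever contribute $O(1)$ vertices, all four quadrants must be used to reach order $\tfrac{n^2}{2}$.
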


\begin{proof}
   Let $V(C_n \cp C_n) = \{(u_k,v_l): i,j \in [n]\}$. By the symmetry, all vertices of $C_n \cp C_n$ have the same visibility number, and hence we may without loss of generality assume that $x = (u_{\lceil n/2 \rceil},v_{\lceil n/2 \rceil})$. Let $X = V(^{u_{\lceil n/2 \rceil}}C_n)$  and $Y = V(C_n^{v_{\lceil n/2 \rceil}})$. Set further
\begin{align*}
Q_1 & = \{(u_k,v_l): k,l < {\lceil n/2 \rceil}\}, \\
Q_2 & = \{(u_k,v_l): k < {\lceil n/2 \rceil}, l > {\lceil n/2 \rceil}\}, \\
Q_3 & = \{(u_k,v_l): k, l > {\lceil n/2 \rceil}\}, \\
Q_4 & = \{(u_k,v_l): k > {\lceil n/2 \rceil}, l < {\lceil n/2 \rceil}\}. 
\end{align*}
Let $S$ be a $v_x$-set in $C_n \cp C_n$ that satisfies the properties given in Lemma \ref{lem:mdv} and has the maximum number of vertices from $\bigcup_{t=1}^{4} Q_t$ among its choices. 

  \begin{figure}[ht!]
	\begin{center}
		\begin{tikzpicture}[scale=0.5,style=thick,x=2cm,y=2cm]
			\node[circle, draw, fill = black, scale=0.75] (a) at (0,0) {};
            \node at (a.west) [left=8pt] {\footnotesize $u_1$};
            \node at (a.south) [below=8pt] {\footnotesize $v_1$};
			\node[circle, draw, fill = black, scale=0.75] (b) at (1,0) {};
            \node at (b.south) [below=8pt] {\footnotesize $v_2$};
			\node[circle, draw, scale=0.75] (c) at (2,0) {};
            \node at (c.south) [below=8pt] {\footnotesize $v_3$};
			\node[circle, draw, fill = black, scale=0.75] (d) at (3,0) {};
            \node at (d.south) [below=8pt] {\footnotesize $v_4$};
			\node[circle, draw, fill = black, scale=0.75] (f) at (4,0) {};
            \node at (f.south) [below=8pt] {\footnotesize $v_5$};
			\draw (a) -- (b) -- (c) -- (d) -- (f);
            
            \draw (a) .. controls (0.5,-0.5) and (3.5,-0.5).. (f);

			\node[circle, draw, scale=0.75] (a1) at (0,1) {};
            \node at (a1.west) [left=8pt] {\footnotesize $u_2$};
			\node[circle, draw, fill = black, scale=0.75] (b1) at (1,1){};
			\node[circle, draw, scale=0.75] (c1) at (2,1) {};
			\node[circle, draw, fill = black, scale=0.75] (d1) at (3,1) {}; 
			
			\node[circle, draw, scale=0.75] (f1) at (4,1) {};  
			\draw (a1) -- (b1) -- (c1) -- (d1) -- (f1); 

            \draw (a1) .. controls (0.5,0.5) and (3.5,0.5) .. (f1);

			\node[circle, draw, scale=0.75] (a2) at (0,2) {};
            \node at (a2.west) [left=8pt] {\footnotesize $u_3$};
			\node[circle, draw, scale=0.75] (b2) at (1,2) {};
			\node[circle, draw, scale=0.75] (c2) at (2,2)  {$x$};
			\node[circle, draw, scale=0.75] (d2) at (3,2) {}; 

			\node[circle, draw, scale=0.75] (f2) at (4,2) {};  
			\draw (a2) -- (b2) -- (c2) -- (d2) -- (f2);

            \draw (a2) .. controls (0.5,1.5) and (3.5,1.5) .. (f2);

			\node[circle, draw, scale=0.75] (a3) at (0,3) {};
            \node at (a3.west) [left=8pt] {\footnotesize $u_4$};
			\node[circle, draw, fill = black, scale=0.75] (b3) at (1,3) {};
			\node[circle, draw, scale=0.75] (c3) at (2,3) {};
			\node[circle, draw, fill = black, scale=0.75] (d3) at (3,3) {}; 
	  
			\node[circle, draw, scale=0.75] (f3) at (4,3) {};  
			\draw (a3) -- (b3) -- (c3) -- (d3) -- (f3);

             \draw (a3) .. controls (0.5,2.5) and (3.5,2.5) .. (f3);

			\node[circle, draw, fill = black, scale=0.75] (a5) at (0,4) {};
            \node at (a5.west) [left=8pt] {\footnotesize $u_5$};
			\node[circle, draw, fill = black, scale=0.75] (b5) at (1,4) {};
			\node[circle, draw, scale=0.75] (c5) at (2,4) {};
			\node[circle, draw, fill = black, scale=0.75] (d5) at (3,4) {}; 
 
			\node[circle, draw, fill = black, scale=0.75] (f5) at (4,4) {};  
			\draw (a5) -- (b5) -- (c5) -- (d5) -- (f5);

 \draw (a5) .. controls (0.5,3.5) and (3.5,3.5) .. (f5);

			\draw (a) -- (a1) -- (a2) -- (a3) -- (a5);
            \draw (a) .. controls (-0.5,0.5) and (-0.5,3.5) .. (a5);
			\draw (b) -- (b1) -- (b2) -- (b3) -- (b5);
            \draw (b) .. controls (0.5,0.5) and (0.5,3.5) .. (b5);
			\draw (c) -- (c1) -- (c2) -- (c3) -- (c5);
             \draw (c) .. controls (1.5,0.5) and (1.5,3.5) .. (c5);
			\draw (d) -- (d1) -- (d2) -- (d3) -- (d5);
            \draw (d) .. controls (2.5,0.5) and (2.5,3.5) .. (d5);

			\draw (f) -- (f1) -- (f2) -- (f3) -- (f5);
             \draw (f) .. controls (3.5,0.5) and (3.5,3.5) .. (f5);

            	\node[circle, draw, fill = black, scale=0.75] (a') at (7,0) {};
                \node at (a'.west) [left=8pt] {\footnotesize $u_1$};
            \node at (a'.south) [below=8pt] {\footnotesize $v_1$};
			\node[circle, draw, fill = black, scale=0.75] (b') at (8,0) {};
            \node at (b'.south) [below=8pt] {\footnotesize $v_2$};
			\node[circle, draw, fill = black, scale=0.75] (c') at (9,0) {};
            \node at (c'.south) [below=8pt] {\footnotesize $v_3$};
			\node[circle, draw, scale=0.75] (d') at (10,0) {};
            \node at (d'.south) [below=8pt] {\footnotesize $v_4$};
			\node[circle, draw, fill = black, scale=0.75] (e') at (11,0) {};  
            \node at (e'.south) [below=8pt] {\footnotesize $v_5$};
			\node[circle, draw, fill = black, scale=0.75] (f') at (12,0) {}; 
            \node at (f'.south) [below=8pt] {\footnotesize $v_6$};
			
            \node[circle, draw, fill = black, scale=0.75] (h') at (13,0) {};
            \node at (h'.south) [below=8pt] {\footnotesize $v_7$};
			\draw (a') -- (b') -- (c') -- (d') -- (e') -- (f') -- (h');
            \draw (a') .. controls (7.5,-0.5) and (12.5, -0.5).. (h');
			\node[circle, draw, scale=0.75] (a1') at (7,1) {};
            \node at (a1'.west) [left=8pt] {\footnotesize $u_2$};
			\node[circle, draw, scale=0.75] (b1') at (8,1) {};
			\node[circle, draw, fill = black, scale=0.75] (c1') at (9,1) {};
			\node[circle, draw, scale=0.75] (d1') at (10,1) {}; 
			\node[circle, draw, fill = black, scale=0.75] (e1') at (11,1) {};  
			\node[circle, draw, scale=0.75] (f1') at (12,1) {};  
			
            \node[circle, draw, scale=0.75] (h1') at (13,1) {};
			\draw (a1') -- (b1') -- (c1') -- (d1') -- (e1') -- (f1') -- (h1'); 
            \draw (a1') .. controls (7.5,0.5) and (12.5, 0.5).. (h1');
			\node[circle, draw, fill = black, scale=0.75] (a2') at (7,2) {};
            \node at (a2'.west) [left=8pt] {\footnotesize $u_3$};
			\node[circle, draw, scale=0.75] (b2') at (8,2) {};
			\node[circle, draw, fill = black, scale=0.75] (c2') at (9,2) {};
			\node[circle, draw, scale=0.75] (d2') at (10,2) {}; 
			\node[circle, draw, fill = black, scale=0.75] (e2') at (11,2) {};  
			\node[circle, draw, scale=0.75] (f2') at (12,2) {};  
			
            \node[circle, draw, fill = black, scale=0.75] (h2') at (13,2) {};
			\draw (a2') -- (b2') -- (c2') -- (d2') -- (e2') -- (f2') -- (h2');
            \draw (a2') .. controls (7.5,1.5) and (12.5, 1.5).. (h2');
			\node[circle, draw, fill = black, scale=0.75] (a3') at (7,3) {};
            \node at (a3'.west) [left=8pt] {\footnotesize $u_4$};
			\node[circle, draw, scale=0.75] (b3') at (8,3) {};
			\node[circle, draw, scale=0.75] (c3') at (9,3) {};
			\node[circle, draw, scale=0.75] (d3') at (10,3) {$x$}; 
			\node[circle, draw, scale=0.75] (e3') at (11,3) {};  
			\node[circle, draw, scale=0.75] (f3') at (12,3) {};  
			
            \node[circle, draw, fill = black, scale=0.75] (h3') at (13,3) {};
			\draw (a3') -- (b3') -- (c3') -- (d3') -- (e3') -- (f3')  -- (h3');
            \draw (a3') .. controls (7.5,2.5) and (12.5, 2.5).. (h3');
			\node[circle, draw, fill = black, scale=0.75] (a4') at (7,4) {};
            \node at (a4'.west) [left=8pt] {\footnotesize $u_5$};
			\node[circle, draw, scale=0.75] (b4') at (8,4) {};
			\node[circle, draw, fill = black, scale=0.75] (c4') at (9,4) {};
			\node[circle, draw, scale=0.75] (d4') at (10,4) {}; 
			\node[circle, draw, fill = black, scale=0.75] (e4') at (11,4) {};  
			\node[circle, draw, scale=0.75] (f4') at (12,4) {};  
			
            \node[circle, draw, fill = black, scale=0.75] (h4') at (13,4) {};
			\draw (a4') -- (b4') -- (c4') -- (d4') -- (e4') -- (f4') -- (h4');
            \draw (a4') .. controls (7.5,3.5) and (12.5, 3.5).. (h4');
			\node[circle, draw, scale=0.75] (a5') at (7,5) {};
            \node at (a5'.west) [left=8pt] {\footnotesize $u_6$};
			\node[circle, draw, scale=0.75] (b5') at (8,5) {};
			\node[circle, draw, fill = black, scale=0.75] (c5') at (9,5) {};
			\node[circle, draw, scale=0.75] (d5') at (10,5) {}; 
			\node[circle, draw, fill = black, scale=0.75] (e5') at (11,5) {};  
			\node[circle, draw, scale=0.75] (f5') at (12,5) {};
            \node[circle, draw, scale=0.75] (h5') at (13,5) {}; 
			\draw (a5') -- (b5') -- (c5') -- (d5') -- (e5') -- (f5') -- (h5');
            \draw (a5') .. controls (7.5,4.5) and (12.5, 4.5).. (h5');
			
\node[circle, draw, fill = black, scale=0.75] (a7') at (7,6) {};
\node at (a7'.west) [left=8pt] {\footnotesize $u_7$};
			\node[circle, draw, fill = black, scale=0.75] (b7') at (8,6) {};
			\node[circle, draw, fill = black, scale=0.75] (c7') at (9,6) {};
			\node[circle, draw, scale=0.75] (d7') at (10,6) {}; 
			\node[circle, draw, fill = black, scale=0.75] (e7') at (11,6) {};  
			\node[circle, draw, fill = black, scale=0.75] (f7') at (12,6) {};
			 
            \node[circle, draw, fill = black, scale=0.75] (h7') at (13,6) {}; 
			\draw (a7') -- (b7') -- (c7') -- (d7') -- (e7') -- (f7') -- (h7');
            \draw (a7') .. controls (7.5,5.5) and (12.5, 5.5).. (h7');

			\draw (a') -- (a1') -- (a2') -- (a3') -- (a4') -- (a5') -- (a7');
            \draw (a') .. controls (6.5,0.5) and (6.5, 5.5).. (a7');
			\draw (b') -- (b1') -- (b2') -- (b3') -- (b4') -- (b5') -- (b7');
            \draw (b') .. controls (7.5,0.5) and (7.5, 5.5).. (b7');
			\draw (c') -- (c1') -- (c2') -- (c3') -- (c4') -- (c5') -- (c7');
            \draw (c') .. controls (8.5,0.5) and (8.5, 5.5).. (c7');
			\draw (d') -- (d1') -- (d2') -- (d3') -- (d4') -- (d5') -- (d7');
            \draw (d') .. controls (9.5,0.5) and (9.5, 5.5).. (d7');
			\draw (e') -- (e1') -- (e2') -- (e3') -- (e4') -- (e5') -- (e7');
            \draw (e') .. controls (10.5,0.5) and (10.5, 5.5).. (e7');
			\draw (f') -- (f1') -- (f2') -- (f3') -- (f4') -- (f5') -- (f7');
            \draw (f') .. controls (11.5,0.5) and (11.5, 5.5).. (f7');
            \draw (h') -- (h1') -- (h2') -- (h3') -- (h4') -- (h5') -- (h7');
            \draw (h') .. controls (12.5,0.5) and (12.5, 5.5).. (h7');
	\end{tikzpicture}
		\caption{vv-sets of $C_5 \cp C_5$ and $C_7 \cp C_7$.}
		\label{fig:oddtorus}
	\end{center}
\end{figure}

First assume $n$ is odd and $x = (u_{\frac{n+1}{2}},v_{\frac{n+1}{2}})$. For $(u_k,v_l) \in Q_1$, a shortest $x,(u_k, v_l)$-path contains either $(u_k,v_{l+1})$ or $(u_{k+1},v_l)$, hence these two vertices cannot simultaneously belong to $S$.
A similar argument is valid for each $(u_k,v_l) \in Q_t$, $t \in [4]$. 
Now, partition each $Q_t$ into diagonals $D_{pq}$, where $p, q \in Q_t$ are its end vertices as in Theorem~\ref{vv-sqauregrid}. Then we can have at most $\lceil \frac{|D_{pq}|}{2} \rceil$ vertices in $S \cap D_{pq}$.
Thus we get,
   $$|S \cap Q_t| \leq 
      \frac{n^2-1}{8}$$
for each $t \in [4]$ and hence, 
\begin{flalign*}
   |S \cup \displaystyle \bigcup_{t=1}^{4} Q_t| \leq \frac{n^2 -1}{2}. \label{}
   \end{flalign*}
By the choice of $S$, every vertex must be $S$-visible from $x$ and hence $S \cap (X \cup Y) \subseteq \{(u_{\frac{n+1}{2}},v_1), (u_{\frac{n+1}{2}},v_n), (u_1,v_{\frac{n+1}{2}}), (u_n,v_{\frac{n+1}{2}})\}$. But, $(u_{\frac{n+1}{2}},v_1)$ can be in $S$ only if both $(u_\frac{n-1}{2},v_2)$ and $(u_{\frac{n+3}{2}},v_2)$ are not in $S$ since $(u_\frac{n-1}{2},v_1)$ and $(u_{\frac{n+3}{2}},v_1)$ must be $S$-visible from $x$. Similar argument holds for $(u_{\frac{n+1}{2}},v_n)$, $(u_1,v_{\frac{n+1}{2}})$, and  $(u_n,v_{\frac{n+1}{2}})$. Also, each of the diagonals $D_{(u_\frac{n-1}{2},v_2)(u_2,v_\frac{n-1}{2})}, D_{(u_\frac{n-1}{2},v_{n-1})(u_2,v_\frac{n+3}{2})}, D_{(u_{n-1},v_\frac{n+3}{2})(u_\frac{n+3}{2},v_{n-1})}$, and $D_{(u_{n-1},v_\frac{n-1}{2})(u_\frac{n+3}{2},v_{2})}$ is of cardinality $\frac{n-3}{2}$. 
Then, since we have assumed that $S$ has the maximum number of vertices from  $\bigcup_{t=1}^{4} Q_t$, we get, 
\begin{flalign*}
  |S \cap (X \cup Y)|= \begin{cases}
 0; & n \equiv 1 \bmod 4,\\
 2; & n \equiv 3 \bmod 4.
\end{cases} \label{}
\end{flalign*}
Thus, $$|S| \leq \begin{cases}
\frac{n^2 -1}{2}; & n \equiv 1 \bmod 4,\\
 \frac{n^2 +3}{2}; & n \equiv 3 \bmod 4.
\end{cases}$$ 

\begin{figure}[ht!]
	\begin{center}
		\begin{tikzpicture}[scale=0.5,style=thick,x=2cm,y=2cm]
			\node[circle, draw, fill = black, scale=0.75] (a) at (0,0) {};
            \node at (a.west) [left=8pt] {\footnotesize $u_1$};
            \node at (a.south) [below=8pt] {\footnotesize $v_1$};
			\node[circle, draw,  scale=0.75] (b) at (1,0) {};
            \node at (b.south) [below=8pt] {\footnotesize $v_2$};
			\node[circle, draw, scale=0.75] (c) at (2,0) {};
            \node at (c.south) [below=8pt] {\footnotesize $v_3$};
			\node[circle, draw, fill = black, scale=0.75] (d) at (3,0) {}; 
            \node at (d.south) [below=8pt] {\footnotesize $v_4$};
			\node[circle, draw, fill = black, scale=0.75] (e) at (4,0) {}; 
            \node at (e.south) [below=8pt] {\footnotesize $v_5$};
			\node[circle, draw, fill = black, scale=0.75] (f) at (5,0) {};  
            \node at (f.south) [below=8pt] {\footnotesize $v_6$};
			\draw (a) -- (b) -- (c) -- (d) -- (e) -- (f);
            
            \draw (a) .. controls (0.5,-0.5) and (4.5,-0.5).. (f);

			\node[circle, draw, fill = black, scale=0.75] (a1) at (0,1) {};
            \node at (a1.west) [left=8pt] {\footnotesize $u_2$};
			\node[circle, draw, fill = black, scale=0.75] (b1) at (1,1){};
			\node[circle, draw, scale=0.75] (c1) at (2,1) {};
			\node[circle, draw, fill = black, scale=0.75] (d1) at (3,1) {}; 
			\node[circle, draw, scale=0.75] (e1) at (4,1) {};  
			\node[circle, draw, scale=0.75] (f1) at (5,1) {};  
			\draw (a1) -- (b1) -- (c1) -- (d1) -- (e1) -- (f1); 

            \draw (a1) .. controls (0.5,0.5) and (4.5,0.5) .. (f1);

			\node[circle, draw, scale=0.75] (a2) at (0,2) {};
            \node at (a2.west) [left=8pt] {\footnotesize $u_3$};
			\node[circle, draw, scale=0.75] (b2) at (1,2) {};
			\node[circle, draw, scale=0.75] (c2) at (2,2)  {$x$};
			\node[circle, draw, scale=0.75] (d2) at (3,2) {}; 
			\node[circle, draw, scale=0.75] (e2) at (4,2) {};  
			\node[circle, draw, fill = black, scale=0.75] (f2) at (5,2) {};  
			\draw (a2) -- (b2) -- (c2) -- (d2) -- (e2) -- (f2);

            \draw (a2) .. controls (0.5,1.5) and (4.5,1.5) .. (f2);

			\node[circle, draw, scale=0.75] (a3) at (0,3) {};
            \node at (a3.west) [left=8pt] {\footnotesize $u_4$};
			\node[circle, draw, fill = black, scale=0.75] (b3) at (1,3) {};
			\node[circle, draw, scale=0.75] (c3) at (2,3) {};
			\node[circle, draw, fill = black, scale=0.75] (d3) at (3,3) {}; 
			\node[circle, draw, fill = black, scale=0.75] (e3) at (4,3) {};  
			\node[circle, draw, fill = black, scale=0.75] (f3) at (5,3) {};  
			\draw (a3) -- (b3) -- (c3) -- (d3) -- (e3) -- (f3);

             \draw (a3) .. controls (0.5,2.5) and (4.5,2.5) .. (f3);

			\node[circle, draw, scale=0.75] (a4) at (0,4) {};
            \node at (a4.west) [left=8pt] {\footnotesize $u_5$};
			\node[circle, draw, fill = black, scale=0.75] (b4) at (1,4) {};
			\node[circle, draw, scale=0.75] (c4) at (2,4) {};
			\node[circle, draw, scale=0.75] (d4) at (3,4) {}; 
			\node[circle, draw, scale=0.75] (e4) at (4,4) {};  
			\node[circle, draw, fill = black, scale=0.75] (f4) at (5,4) {};  
			\draw (a4) -- (b4) -- (c4) -- (d4) -- (e4) -- (f4);

            \draw (a4) .. controls (0.5,3.5) and (4.5,3.5) .. (f4);

			\node[circle, draw, fill = black, scale=0.75] (a5) at (0,5) {};
            \node at (a5.west) [left=8pt] {\footnotesize $u_6$};
			\node[circle, draw, fill = black, scale=0.75] (b5) at (1,5) {};
			\node[circle, draw, , fill = black, scale=0.75] (c5) at (2,5) {};
			\node[circle, draw, fill = black, scale=0.75] (d5) at (3,5) {}; 
			\node[circle, draw,  scale=0.75] (e5) at (4,5) {};  
			\node[circle, draw, fill = black, scale=0.75] (f5) at (5,5) {};  
			\draw (a5) -- (b5) -- (c5) -- (d5) -- (e5) -- (f5);

 \draw (a5) .. controls (0.5,4.5) and (4.5,4.5) .. (f5);

			\draw (a) -- (a1) -- (a2) -- (a3) -- (a4) -- (a5);
            \draw (a) .. controls (-0.5,0.5) and (-0.5, 4.5).. (a5);
			\draw (b) -- (b1) -- (b2) -- (b3) -- (b4) -- (b5);
            \draw (b) .. controls (0.5,0.5) and (0.5, 4.5) .. (b5);
			\draw (c) -- (c1) -- (c2) -- (c3) -- (c4) -- (c5);
             \draw (c) .. controls (1.5,0.5) and (1.5, 4.5) .. (c5);
			\draw (d) -- (d1) -- (d2) -- (d3) -- (d4) -- (d5);
            \draw (d) .. controls (2.5,0.5) and (2.5, 4.5) .. (d5);
			\draw (e) -- (e1) -- (e2) -- (e3) -- (e4) -- (e5);
             \draw (e) .. controls (3.5,0.5) and (3.5, 4.5) .. (e5);
			\draw (f) -- (f1) -- (f2) -- (f3) -- (f4) -- (f5);
             \draw (f) .. controls (4.5,0.5) and (4.5, 4.5) .. (f5);

            	\node[circle, draw, fill = black, scale=0.75] (a') at (7,0) {};
                \node at (a'.west) [left=8pt] {\footnotesize $u_1$};
            \node at (a'.south) [below=8pt] {\footnotesize $v_1$};
			\node[circle, draw, fill = black, scale=0.75] (b') at (8,0) {};
            \node at (b'.south) [below=8pt] {\footnotesize $v_2$};
			\node[circle, draw, fill = black, scale=0.75] (c') at (9,0) {};
            \node at (c'.south) [below=8pt] {\footnotesize $v_3$};
			\node[circle, draw, scale=0.75] (d') at (10,0) {}; 
            \node at (d'.south) [below=8pt] {\footnotesize $v_4$};
			\node[circle, draw, fill = black, scale=0.75] (e') at (11,0) {}; 
            \node at (e'.south) [below=8pt] {\footnotesize $v_5$};
			\node[circle, draw, fill = black, scale=0.75] (f') at (12,0) {}; 
            \node at (f'.south) [below=8pt] {\footnotesize $v_6$};
			\node[circle, draw, fill = black, scale=0.75] (g') at (13,0) {};
            \node at (g'.south) [below=8pt] {\footnotesize $v_7$};
            \node[circle, draw, fill = black, scale=0.75] (h') at (14,0) {};
            \node at (h'.south) [below=8pt] {\footnotesize $v_8$};
			\draw (a') -- (b') -- (c') -- (d') -- (e') -- (f') -- (g') -- (h');
           \draw (a') .. controls (7.5,-0.5) and (13.5, -0.5) .. (h');
			\node[circle, draw, scale=0.75] (a1') at (7,1) {};
            \node at (a1'.west) [left=8pt] {\footnotesize $u_2$};
			\node[circle, draw, scale=0.75] (b1') at (8,1) {};
			\node[circle, draw, fill = black, scale=0.75] (c1') at (9,1) {};
			\node[circle, draw, scale=0.75] (d1') at (10,1) {}; 
			\node[circle, draw, fill = black, scale=0.75] (e1') at (11,1) {};  
			\node[circle, draw, scale=0.75] (f1') at (12,1) {};  
			\node[circle, draw, scale=0.75] (g1') at (13,1) {};
            \node[circle, draw, scale=0.75] (h1') at (14,1) {};
			\draw (a1') -- (b1') -- (c1') -- (d1') -- (e1') -- (f1') -- (g1') -- (h1'); 
            \draw (a1') .. controls (7.5,0.5) and (13.5, 0.5) .. (h1');
			\node[circle, draw, fill = black, scale=0.75] (a2') at (7,2) {};
            \node at (a2'.west) [left=8pt] {\footnotesize $u_3$};
			\node[circle, draw, scale=0.75] (b2') at (8,2) {};
			\node[circle, draw, fill = black, scale=0.75] (c2') at (9,2) {};
			\node[circle, draw, scale=0.75] (d2') at (10,2) {}; 
			\node[circle, draw, fill = black, scale=0.75] (e2') at (11,2) {};  
			\node[circle, draw, scale=0.75] (f2') at (12,2) {};  
			\node[circle, draw, fill = black, scale=0.75] (g2') at (13,2) {};
            \node[circle, draw, fill = black, scale=0.75] (h2') at (14,2) {};
			\draw (a2') -- (b2') -- (c2') -- (d2') -- (e2') -- (f2') -- (g2') -- (h2');
            \draw (a2') .. controls (7.5,1.5) and (13.5, 1.5) .. (h2');
			\node[circle, draw, fill = black, scale=0.75] (a3') at (7,3) {};
            \node at (a3'.west) [left=8pt] {\footnotesize $u_4$};
			\node[circle, draw, scale=0.75] (b3') at (8,3) {};
			\node[circle, draw, scale=0.75] (c3') at (9,3) {};
			\node[circle, draw, scale=0.75] (d3') at (10,3) {$x$}; 
			\node[circle, draw, scale=0.75] (e3') at (11,3) {};  
			\node[circle, draw, scale=0.75] (f3') at (12,3) {};  
			\node[circle, draw, scale=0.75] (g3') at (13,3) {};
            \node[circle, draw, scale=0.75] (h3') at (14,3) {};
			\draw (a3') -- (b3') -- (c3') -- (d3') -- (e3') -- (f3') -- (g3') -- (h3');
            \draw (a3') .. controls (7.5,2.5) and (13.5, 2.5) .. (h3');
			\node[circle, draw, fill = black, scale=0.75] (a4') at (7,4) {};
            \node at (a4'.west) [left=8pt] {\footnotesize $u_5$};
			\node[circle, draw, scale=0.75] (b4') at (8,4) {};
			\node[circle, draw, fill = black, scale=0.75] (c4') at (9,4) {};
			\node[circle, draw, scale=0.75] (d4') at (10,4) {}; 
			\node[circle, draw, fill = black, scale=0.75] (e4') at (11,4) {};  
			\node[circle, draw, scale=0.75] (f4') at (12,4) {};  
			\node[circle, draw, fill = black, scale=0.75] (g4') at (13,4) {};
            \node[circle, draw, scale=0.75] (h4') at (14,4) {};
			\draw (a4') -- (b4') -- (c4') -- (d4') -- (e4') -- (f4') -- (g4') -- (h4');
            \draw (a4') .. controls (7.5,3.5) and (13.5, 3.5) .. (h4');
			\node[circle, draw, fill = black, scale=0.75] (a5') at (7,5) {};
            \node at (a5'.west) [left=8pt] {\footnotesize $u_6$};
			\node[circle, draw, scale=0.75] (b5') at (8,5) {};
			\node[circle, draw, fill = black, scale=0.75] (c5') at (9,5) {};
			\node[circle, draw, scale=0.75] (d5') at (10,5) {}; 
			\node[circle, draw, fill = black, scale=0.75] (e5') at (11,5) {};  
			\node[circle, draw, scale=0.75] (f5') at (12,5) {};
			\node[circle, draw, fill = black, scale=0.75] (g5') at (13,5) {}; 
            \node[circle, draw, fill = black, scale=0.75] (h5') at (14,5) {}; 
			\draw (a5') -- (b5') -- (c5') -- (d5') -- (e5') -- (f5') -- (g5') -- (h5');
            \draw (a5') .. controls (7.5,4.5) and (13.5, 4.5) .. (h5');
			\node[circle, draw, scale=0.75] (a6') at (7,6) {};
            \node at (a6'.west) [left=8pt] {\footnotesize $u_7$};
			\node[circle, draw, scale=0.75] (b6') at (8,6) {};
			\node[circle, draw, fill = black, scale=0.75] (c6') at (9,6) {};
			\node[circle, draw, scale=0.75] (d6') at (10,6) {}; 
			\node[circle, draw, fill = black, scale=0.75] (e6') at (11,6) {};  
			\node[circle, draw, scale=0.75] (f6') at (12,6) {};
			\node[circle, draw, scale=0.75] (g6') at (13,6) {}; 
            \node[circle, draw, scale=0.75] (h6') at (14,6) {}; 
			\draw (a6') -- (b6') -- (c6') -- (d6') -- (e6') -- (f6') -- (g6') -- (h6');
            \draw (a6') .. controls (7.5,5.5) and (13.5, 5.5) .. (h6');
\node[circle, draw, fill = black, scale=0.75] (a7') at (7,7) {};
\node at (a7'.west) [left=8pt] {\footnotesize $u_8$};
			\node[circle, draw, fill = black, scale=0.75] (b7') at (8,7) {};
			\node[circle, draw, fill = black, scale=0.75] (c7') at (9,7) {};
			\node[circle, draw, scale=0.75] (d7') at (10,7) {}; 
			\node[circle, draw, fill = black, scale=0.75] (e7') at (11,7) {};  
			\node[circle, draw, fill = black, scale=0.75] (f7') at (12,7) {};
			\node[circle, draw, fill = black, scale=0.75] (g7') at (13,7) {}; 
            \node[circle, draw, fill = black, scale=0.75] (h7') at (14,7) {}; 
			\draw (a7') -- (b7') -- (c7') -- (d7') -- (e7') -- (f7') -- (g7') -- (h7');
\draw (a7') .. controls (7.5,6.5) and (13.5, 6.5) .. (h7');
            
			\draw (a') -- (a1') -- (a2') -- (a3') -- (a4') -- (a5') -- (a6') -- (a7');
            \draw (a') .. controls (6.5,0.5) and (6.5, 6.5) .. (a7');
			\draw (b') -- (b1') -- (b2') -- (b3') -- (b4') -- (b5') -- (b6') -- (b7');
            \draw (b') .. controls (7.5,0.5) and (7.5, 6.5) .. (b7');
			\draw (c') -- (c1') -- (c2') -- (c3') -- (c4') -- (c5') -- (c6') -- (c7');
            \draw (c') .. controls (8.5,0.5) and (8.5, 6.5) .. (c7');
			\draw (d') -- (d1') -- (d2') -- (d3') -- (d4') -- (d5') -- (d6') -- (d7');
            \draw (d') .. controls (9.5,0.5) and (9.5, 6.5) .. (d7');
			\draw (e') -- (e1') -- (e2') -- (e3') -- (e4') -- (e5') -- (e6') -- (e7');
            \draw (e') .. controls (10.5,0.5) and (10.5, 6.5) .. (e7');
			\draw (f') -- (f1') -- (f2') -- (f3') -- (f4') -- (f5') -- (f6') -- (f7');
            \draw (f') .. controls (11.5,0.5) and (11.5, 6.5) .. (f7');
			\draw (g') -- (g1') -- (g2') -- (g3') -- (g4') -- (g5') -- (g6') -- (g7');
            \draw (g') .. controls (12.5,0.5) and (12.5, 6.5) .. (g7');
            \draw (h') -- (h1') -- (h2') -- (h3') -- (h4') -- (h5') -- (h6') -- (h7');
            \draw (h') .. controls (13.5,0.5) and (13.5, 6.5) .. (h7');
	\end{tikzpicture}
		\caption{vv-sets of $C_6 \cp C_6$ and $C_8 \cp C_8$.}
		\label{fig:eventorus}
	\end{center}
\end{figure}

Next assume that $n$ is even and $x = (u_{\frac{n}{2}},v_{\frac{n}{2}})$. For each $(u_k,v_l) \in Q_t$, we know that the set $N((u_k,v_l)) \cap I(x,(u_k, v_l))$ has cardinality at most four and has a vertex which is not in $S$. Particularly for $(u_k,v_l) \in Q_1$, this vertex is either $(u_k,v_{l+1})$ or $(u_{k+1},v_l)$. As in the previous two proofs, partition each $Q_t$ into diagonals $D_{pq}$, where $p, q \in Q_t$ are its end vertices. Then we can have at most $\sum_{pq} \lceil |D_{pq}| / 2 \rceil$ vertices in $S \cap \bigcup_{t=1}^{4} Q_t$. Thus, for $n \equiv 0 \bmod 4$ we get  
\begin{align*}
|S \cup \displaystyle \bigcup_{t=1}^{4} Q_t| & = |S \cap Q_1| + |S \cap Q_2| + |S \cap Q_3| + |S \cap Q_4| \\
& \le\frac{(n-2)n}{8} + \frac{n^2}{8} + \frac{n(n+2)}{8} + \frac{n^2}{8} \\
& = \frac{n^2}{2}\,,
\end{align*}    
and for $n \equiv 2 \bmod 4$ we get
\begin{align*}
|S \cup \displaystyle \bigcup_{t=1}^{4} Q_t| & = |S \cap Q_1| + |S \cap Q_2| + |S \cap Q_3| + |S \cap Q_4| \\
& \le\frac{(n-2)n}{8} + \frac{n^2 - 4}{8} + \frac{n(n+2)}{8} + \frac{n^2 - 4}{8} \\
& = \frac{n^2 -2}{2}\,.
\end{align*}   
Again, by the choice of $S$, every vertex must be $S$-visible from $x$ and hence, $S \cap (X \cup Y) \subseteq \{(u_{\frac{n}{2}},v_1), (u_{\frac{n}{2}},v_{n-1}), (u_{\frac{n}{2}},v_n), (u_1,v_{\frac{n}{2}}), (u_{n-1},v_{\frac{n}{2}}), (u_n,v_{\frac{n}{2}})\}$. Also, 
\begin{align*}
\text{in $Q_1$}: &\quad |D_{(u_\frac{n-2}{2},v_2)(u_2,v_\frac{n-2}{2})}| =  \frac{n-4}{2}, \\
\text{in $Q_2$}: &\quad |D_{(u_\frac{n-2}{2},v_{n-2})(u_2,v_\frac{n+2}{2})}| =  \frac{n-4}{2} \text{ and } |D_{(u_\frac{n-2}{2},v_{n-1})(u_1,v_\frac{n+2}{2})}| = \frac{n-2}{2}, \\
\text{in $Q_3$}: &\quad |D_{(u_{n-2},v_\frac{n+2}{2})(u_\frac{n+2}{2},v_{n-2})}| =  \frac{n-4}{2} \text{ and } |D_{(u_{n-1},v_\frac{n+2}{2})(u_\frac{n+2}{2},v_{n-1})}| = \frac{n-2}{2},\\
\text{in $Q_4$}: &\quad |D_{(u_{n-2},v_\frac{n-2}{2})(u_\frac{n+2}{2},v_{3})}| =  \frac{n-4}{2} \text{ and } |D_{(u_{n-1},v_\frac{n-2}{2})(u_\frac{n+2}{2},v_{2})}| = \frac{n-2}{2}.
\end{align*}
Then, since we have assumed that $S$ has the maximum number of vertices from  $\bigcup_{t=1}^{4} Q_t$, we get 
\begin{flalign*}
  |S \cap (X \cup Y)|= \begin{cases}
      1; & n \equiv 0 \bmod 4,\\
      2; & n \equiv 2 \bmod 4.
  \end{cases} \label{}
\end{flalign*}
Thus for $n \equiv 0 \bmod 4$, 
$$|S| \leq \frac{n^2 +2}{2}.$$

To prove the required lower bound, let $R$ be the set of $\lceil |D_{pq}|/2 \rceil$ alternate vertices from each $D_{pq}$; see Figures~\ref{fig:oddtorus} and~\ref{fig:eventorus} which illustrate the construction on $P_5 \cp C_5$, $P_6 \cp C_6$, $P_7 \cp C_7$, and $P_8 \cp P_8$. Then $R$ is an $x$-visibility set of cardinality $\frac{n^2 -1}{2}$ for $n \equiv 1 \bmod 4$ and $R\cup \{(u_{\frac{n+1}{2}},v_1), (u_{\frac{n+1}{2}},v_n)\}$ is an $x$-visibility set of cardinality $\frac{n^2 +3}{2}$ for $n \equiv 3 \bmod 4$. Also, $R\cup \{(u_{\frac{n+1}{2}},v_1)\}$ is an $x$-visibility set of cardinality $\frac{n^2 +2}{2}$ for $n \equiv 0 \bmod 4$ and $R\cup \{(u_{\frac{n+1}{2}},v_n), (u_n,v_{\frac{n+1}{2}})\}$ is an $x$-visibility set of cardinality $\frac{n^2 +2}{2}$ for $n \equiv 2 \bmod 4$. 
\qed
\end{proof}

The results of Theorems~\ref{vv-sqauregrid}, \ref{vv-sqaurecylinder} and \ref{vv-sqauretorus} are summarized in Table~\ref{table:all-values}.

\begin{table}[ht!]
\begin{center}
\begin{tabular}{ ||c||c|c|c|c|| } 
\hline
\hline
 & $n \equiv 1 \bmod 4$ & $n \equiv 3 \bmod 4$ & $n \equiv 0 \bmod 4$ & $n \equiv 2 \bmod 4$ 
\phantom{$\begin{bmatrix} X \\ Y \end{bmatrix}$} \hspace*{-10mm}\\
\hline
\hline
$P_n \cp P_n$ & $\frac{n^2+n-2}{2}$ & $\frac{n^2+n-2}{2}$ & $\frac{n^2+n-2}{2}$ & $\frac{n^2+n-2}{2}$ 
\phantom{$\begin{bmatrix} X \\ Y \end{bmatrix}$} \hspace*{-10mm}\\ 
\hline
$P_n \cp C_n$ & $\frac{n^2+3}{2}$ & $\frac{n^2+n-2}{2}$ & $\frac{2n^2+n}{4}$ & $\frac{2n^2+n-2}{4}$  
\phantom{$\begin{bmatrix} X \\ Y \end{bmatrix}$} \hspace*{-10mm} \\ 
\hline
$C_n \cp C_n$ & $\frac{n^2-1}{2}$ & $\frac{n^2+3}{2}$ & $\frac{n^2+2}{2}$ & $\frac{n^2+2}{2}$ 
\phantom{$\begin{bmatrix} X \\ Y \end{bmatrix}$} \hspace*{-10mm} \\ 
\hline
\hline
\end{tabular}
\end{center}
\caption{The vertex visibility number of square grids, prisms, and toruses}
\label{table:all-values}
\end{table}

\section{Concluding remarks}

We conclude the paper with some problems that deserve attention in the future. 

Related to the sharp example of Proposition~\ref{prop:Cart} we pose: 

\begin{problem}
Determine the vertex visibility number of the Cartesian product of an arbitrary product of finitely many complete graphs. 
\end{problem}

Having studied the vertex visibility problem in the context of the Cartesian product, it is also worthwhile to explore it in other graph products. In particular, mutual-visibility has already been studied on strong products~\cite{cicerone-2024a}, hence we pose: 

\begin{problem}
Study the vertex visibility in the strong product of two graphs. 
\end{problem}

The variety of mutual-visibility problems and the variety of the general position problems have been studied in \cite{{Roy-2026}} on Sierpi\'nski graphs $S_3^n$, $n \geq 3$. Therefore, investigating the vertex visibility problem in these graphs may also be of interest.

\begin{problem}
Determine the vertex visibility number of the Sierpi\'nski graphs $S_3^n$, $n \geq 3$. 
\end{problem}

\section*{Declaration of interests}
 
The authors declare that they have no conflict of interest. 

\section*{Data availability}
 
Our manuscript has no associated data.

\section*{Acknowledgments}
	
The authors thank Ullas Chandran S V for the initial idea for this research, which greatly influenced the direction of this project. We also thank Manoj Changat for some initial discussions. Dhanya Roy thank Cochin University of Science and Technology for providing financial support under University JRF Scheme.  Sandi Klav\v{z}ar was supported by the Slovenian Research and Innovation Agency (ARIS) under the grants P1-0297, N1-0285, N1-0355, N1-0431. Gabriele {Di Stefano} was supported by the Italian National Group for Scientific Computation (GNCS-INdAM) and by the academic project ``Monet'' of the University of L'Aquila.

\end{document}